\newif\ifSubmission

\newif\ifComments
\newif\ifAnonymous

\ifSubmission
    \Commentsfalse
    \Anonymoustrue
\else
    \Anonymousfalse
    \Commentsfalse
\fi

\ifSubmission
    \documentclass{llncs}
\else
    \documentclass[11pt]{article}
    \usepackage{libertine}
    \usepackage{fullpage}
    \usepackage[letterpaper,top=2cm,bottom=2cm,left=3cm,right=3cm,marginparwidth=1.75cm]{geometry}
\fi
\usepackage{graphicx}
\usepackage{braket}
\usepackage{amsmath,bbm}
\usepackage{amssymb}
\ifSubmission\else
    \usepackage{amsthm}
\fi
\ifSubmission
    \newcommand{\submissionbreak}[1]{\\& #1}
\else
    \newcommand{\submissionbreak}[1]{}
\fi
\usepackage[dvipsnames]{xcolor}
\usepackage[colorlinks=true, allcolors=blue]{hyperref}
\usepackage{cleveref}
\usepackage{verbatim}
\usepackage{physics}

\usepackage{booktabs}
\usepackage{tabularx}
\usepackage{array}
\usepackage{caption}
\usepackage{hyphenat}

\ifSubmission\else
    \newtheorem{theorem}{Theorem}
    \newtheorem{lemma}{Lemma}
    
    \newtheorem{definition}{Definition}
    
    \newtheorem{corollary}{Corollary}

\fi
\newtheorem{result}{Result}
\newtheorem{construction}{Construction}

\ifComments
    \newcommand{\authnote}[3]{\textcolor{#3}{[{\footnotesize {\bf #1:} { {#2}}}]}}
\else
    \newcommand{\authnote}[3]{}
\fi

\newcommand{\bit}{\{0,1\}}
\newcommand{\negl}{\mathsf{negl}}
\newcommand{\getsr}{\overset{\$}{\leftarrow}}
\newcommand{\secp}{\lambda}

\newcommand{\poly}{\mathsf{poly}}

\newcommand{\bbE}{\mathbb{E}}

\newcommand{\bbN}{\mathbb{N}}

\newcommand{\calA}{\mathcal{A}}

\newcommand{\bits}{\{0,1\}}
\newcommand{\cC}{\mathcal C}
\newcommand{\cW}{\mathcal W}
\newcommand{\cY}{\mathcal Y}

\newcommand{\Acc}{\mathsf{Acc}}

\newcommand{\Succ}{\mathsf{Succ}}
\newcommand{\wt}{\operatorname{wt}}
\newcommand{\RO}{H}

\newcommand{\Exp}{\mathbb E}
\newcommand{\Prob}{\Pr}

\newcommand{\Havg}{H^{\mathrm{avg}}_{\infty}}

\newcommand{\Prove}{\mathsf{Prove}}

\newcommand{\Verify}{\mathsf{Verify}}

\ifAnonymous
   \author{}
\ifSubmission
    \institute{}
\fi
\else
    \author{
    Siddhartha Jain\thanks{UT Austin. Email: \texttt{sidjain@utexas.edu}} 
    \quad Saachi Mutreja\thanks{Columbia University. Email: \texttt{sm5540@columbia.edu}}
    \quad Bhaskar Roberts\thanks{UC Berkeley and NTT Research. Email: \texttt{bhaskarroberts@gmail.com}}}
\fi

\title{Certified Randomness with Optimal Rate}

\date{}

\begin{document}

\maketitle

\begin{abstract}
    The generation of certified random bits is an emerging near-term application of quantum computers. Potential applications, such as randomness beacons and CRS generation, require nearly uniform randomness, whose rate (the ratio of min-entropy to bitlength) is $\sim 1$. However, existing protocols for certified randomness either produce weakly random strings with rate $o(1)$, or they require a trusted random seed.
    
    We show how to certify randomness with the optimal rate $\sim 1$ without requiring any trusted randomness from the verifier. Our protocol is secure unconditionally in the quantum random oracle model (QROM). This improves the result of Coladangelo et al.~\cite{CKMRST26}, who certified \emph{weakly} random strings unconditionally in the QROM, and answers a question posed by Aaronson and Hung~\cite{AH23}.
    
    We also define and construct a proof of \emph{conditional} min-entropy, which certifies optimal min-entropy even conditioned on an adversarially chosen transcript. We show an application of this primitive to randomness beacons, where each pulse should have high min-entropy conditioned on all previous messages.
\end{abstract}

\ifSubmission\else
\fi

\section{Introduction}
Trusted randomness is a scarce and valuable resource. It is used for electronic voting, generation of common random strings, and leader election for proof-of-stake blockchains~\cite{Adi08,ACC+25,GHMVZ17,CMB23}. These applications are only secure if the randomness was sampled honestly. In a trustless system, any such randomness must be verifiable; i.e. parties need the ability to verify that the generator's output is random.

This is the purpose of quantum certified randomness, which allows a verifier with limited randomness to check that a longer random string sampled by an untrusted prover has high min-entropy. 
The hope is to use certified randomness for sortition, randomness beacons, CRS generation, etc.~\cite{ACC+25}. However, these applications require the certified string to be uniformly random or nearly so.

Existing protocols for certified randomness are insufficient because they only guarantee weakly random sources, which may produce bad outputs with high probability. For example, \cite{CKMRST26} Construction 6.2 produces an $n$-bit string with $h = o\left(n^{1/4}\right)$ bits of certified min-entropy, so the string's support may be a negligible fraction of $\bit^n$. \Cref{tab:min-entropy-comparison} shows the rate (ratio of certified min-entropy $h$ to bitlength $n$) of several existing proofs of min-entropy. Our goal is a proof of min-entropy that guarantees uniform, or nearly uniform, randomness; i.e. the rate should be $\sim 1$.

Prior work improves the rate with seeded extractors, but this requires the verifier to sample a short seed honestly. \cite[Theorem~3.7]{YZ24} shows that if the verifier samples a short seed of length $\poly[\secp,\log(n)]$ bits, they can certify uniform randomness in a longer string of length $n$ by running the prover's output through a seeded extractor. This approach is unsatisfying because it still assumes that some randomness was generated honestly. The goal of certified randomness is to remove the need for trusted randomness, and to best capture the spirit of this goal, we require our verifier to be deterministic.

We consider the following setting:
    \begin{enumerate}
        \item (\textbf{Deterministic Verifier}) All inputs to the verifier, including any random strings, are provided by the prover. \footnote{Since we work in the random oracle model, we also allow both parties to use a public random oracle. Prior work \cite{YZ24,KRT26,CKMRST26} takes this approach, and it is reasonable because the random oracle may be heuristically replaced with a concrete hash function, such as SHA-2, that can be hardwired into the verifier.} \label{item:deterministic-verifier}
        \item (\textbf{Malicious Prover}) The prover may be fully malicious, deviating arbitrarily from the prescribed protocol. The only constraint is that they run in $\poly(\secp)$ time. 
        \label{item:malicious-prover}
    \end{enumerate}
    We will consider the following question: 
    \begin{quote}
        In the setting described above, what is the highest min-entropy threshold that the verifier can certify?
    \end{quote}
    We will express the threshold in terms of $n$, the length of the certified string, and $\secp$, the security parameter.

    \noindent We answer this question completely by providing matching upper and lower bounds.

First, it is impossible to certify uniform randomness when the verifier is deterministic. This is because the prover could use rejection sampling to fix $O(\log \secp)$ bits of the verifier's output in $\poly(\secp)$ time \cite{AH23,FSW25}. The prover would run the honest prover and verifier protocols to sample from the source and compute the verifier's eventual output. This is possible because the verifier is deterministic. If the prover does not like the output, they can reject the output and try again. In $\poly(\secp)$ trials, they can fix $O(\log \secp)$ bits of the output. Therefore, our upper bound says that for any $\gamma(\secp) = O(\log \secp)$, it is impossible to certify $n - \gamma(\secp)$ bits of min-entropy.

The best result possible is to certify $n - \gamma(\secp)$ bits of min-entropy for all $\gamma(\secp) = \omega(\log \secp)$. We call this the \emph{maximal min-entropy} guarantee, where \emph{maximal} is used to mean \emph{highest possible}. Our \Cref{constr:maxent} does achieve the maximal min-entropy guarantee. We thus achieve one of Aaronson and Hung~\cite{AH23}'s stated goals for future work.

For many applications, the maximal min-entropy guarantee is effectively as good as the guarantee of uniform randomness. \cite{DY13} showed that for cryptographic primitives with ``square security'' (such as CPA-secure encryption, weak PRFs, and unpredictability primitives -- where the adversary's probability of finding an accepted answer is negligible), the challenger's uniform random string can be replaced with a high-min-entropy string (with min-entropy $\geq n - d$), and the adversary's advantage in the security game will only increase from $\varepsilon$ to $\sqrt{2^d \cdot \varepsilon}$. If the advantage is negligible when the challenger receives uniform randomness, then as long as the challenger's string satisfies the maximal min-entropy guarantee above, the advantage will still be negligible.

\subsection{Our Results}
Here we give an informal overview of our results.

\begin{result}[Informal]\label[result]{result:proof-of-maximal-min-entropy}
    We define and construct a proof of maximal min-entropy in the quantum random oracle model. 
\end{result}
A proof of maximal min-entropy allows a quantum prover to convince a deterministic verifier that a given string $x$ has maximal min-entropy. It has the following syntax:
\begin{itemize}
    \item$\Prove^{H}(1^\lambda,1^n) \to \pi:$ A quantum algorithm that outputs a classical proof $\pi$.
    \item$\Verify^{H}(1^{\lambda}, 1^n, \pi) \to x:$ A \emph{deterministic} classical algorithm that verifies $\pi$. If verification passes, it outputs an $n$-bit string $x$ with maximal min-entropy. If verification fails, it outputs $x = \bot$.
\end{itemize}
Both $\Prove$ and $\Verify$ run in polynomial time and get query access to the random oracle $H$.

Our security guarantee says that the min-entropy of $x$ is maximal, even conditioned on the choice of random oracle $H$. More formally, the adversarial prover induces a distribution over their output $\Pi$ and the verifier's output $X \gets \Verify^H(1^\secp, 1^n, \Pi)$. We require that for any malicious polynomial-time prover, either their proof will be rejected ($X = \bot$) with overwhelming probability or $X$ has maximal min-entropy conditioned on $H$:
\[\Pr_H\left[\Pr[X \neq \bot] \geq \frac{1}{\poly(\secp)} \land H_\infty[X | X \neq \bot] \leq n - \omega(\log \secp)\right] = \negl(\secp)\]
Note that $\Pr[X \neq \bot]$ and $H_\infty[X | X \neq \bot]$ implicitly condition on the choice of $H$ because they are inside of $\Pr_H[\cdot]$.

We construct a proof of maximal min-entropy as follows. We use \cite{CKMRST26}'s proof of min-entropy to generate a weakly random source, and then hash the source with a compressing random oracle. The random oracle acts as a condenser, transforming the weakly random source to a strongly random string with min-entropy $\geq n - \omega(\log \secp)$. Note that \cite{CKMRST26}'s construction already requires a random oracle. Therefore we improve \cite{CKMRST26}'s result with no additional assumptions to achieve the highest possible min-entropy guarantee.

These techniques also show that random oracles are good seedless condensers against quantum adversaries. This improves on \cite{FSW25}, who proved a similar claim for classical adversaries.

Our next result enables proofs of \emph{conditional} min-entropy, in which the adversary chooses a label $C$ and certifies that $X$ has maximal min-entropy conditioned on $C$. This is useful in the application to randomness beacons, where the beacon outputs a sequence of unpredictable random pulses, and each pulse should have high min-entropy given the transcript of messages and the pulses up to that point. \Cref{result:proof-of-maximal-conditional-min-entropy} is a strengthening of \Cref{result:proof-of-maximal-min-entropy}. 

\begin{result}[Informal]\label[result]{result:proof-of-maximal-conditional-min-entropy}
    We define a notion of soundness that guarantees maximal conditional min-entropy, and we construct a proof of \emph{conditional} min-entropy in the QROM that satisfies this soundness property.
\end{result}

The soundness property says that the conditional min-entropy of $X$ given $C$ is maximal even when $C$ and $\Pi$ were chosen adversarially. We consider the notion of average conditional min-entropy $H_\infty^\mathsf{avg}$ from \cite{DORS03}. More formally, soundness says that for any malicious polynomial-time prover who chooses $\Pi$ and $C$,
\[\Pr_H\left[\Pr[X \neq \bot] \geq \frac{1}{\poly(\secp)} \land H_\infty^\mathsf{avg}[X | C, X \neq \bot] \leq n - \omega(\log \secp)\right] = \negl(\secp)\]

To achieve soundness, we again use \cite{CKMRST26}'s proof of min-entropy to generate a source $W$ with high min-entropy. Then we hash both $C$ and $W$ through a compressing random oracle $G$: i.e. $X = G(C, W)$. Using the measure-and-reprogram technique, we can show that $H_\infty^\mathsf{avg}[X | C, X \neq \bot]$ is maximal.

\subsection{Application to Randomness Beacons} \label{sec:entropy-beacon}
A randomness beacon periodically publishes fresh random pulses, together
with timestamps and authentication data. The pulses are used to make
randomized procedures publicly auditable \cite{Rabin83,NIST19}. A direct
application of our proof of maximal conditional min-entropy is a \emph{public randomness beacon with maximal min-entropy}. Our beacon can be generated on a single untrusted quantum device, and our security proof is unconditional in the QROM. 

Here is how the randomness beacon would work. In epoch
$t$, define the label by a canonical, domain-separated encoding
\[
  c_t=\langle\textsf{beacon-id},t,
        P_1,\ldots,P_{t-1},M_t\rangle,
\]
where $P_1,\ldots,P_{t-1}$ are the preceding pulses and $M_t$ contains any
public metadata or application commitment that must be bound to the new
pulse.  A quantum beacon operator generates a proof $\pi_t \gets \mathsf{MaxEnt.Prove}^{H}(1^\lambda,1^n,c_t)$ and publishes it. Then every user
deterministically computes
\[
  x_t=\mathsf{MaxEnt.Verify}^{H}
       (1^\lambda,1^n,c_t,\pi_t).
\]
The pulse contains $(c_t,\pi_t,x_t)$. The application may add signatures, consensus, timestamps, and
hash chaining to authenticate and order the pulses, but they are not used
in the entropy proof.

The label of a later pulse necessarily depends on earlier pulses and may also
depend on an adversarially influenced public transcript.  This is why the guarantee of conditional min-entropy will be important for security
(\Cref{def:adaptive,thm:adaptive-maximal-min-entropy}). We can treat the
entire history generation procedure as part of the adversarially chosen label $C_t$, and the proof certifies min-entropy even conditioned on the resulting $C_t$. For every round whose
acceptance probability is at least an inverse polynomial, the accepted pulse
satisfies
\[
  \Havg(X_t\mid C_t,\Acc_t,H)>n-\gamma
\]
except for a negligible fraction of random oracles. 

This statement gives two useful beacon properties.  First, before the proof
is published, a predictor that is given the epoch label and the public random
oracle still only has probability
$\leq 2^{-n+\gamma}$ of guessing the next pulse.  Second, if an adversary's attack depends on the condition that the pulse lies in some bad set $B$, then
then we can bound the success of this attack via a union bound combined with the min-entropy guarantee.  Thus, we recover security notions comparable to a true random beacon up to an unavoidable logarithmic
entropy loss. 

The closest notion of security in the literature seems to be $g$-optativity in~\cite{KMQR21}, where $g$ is a parameter and $g$-optativity implies a min-entropy deficiency of $\log g$. Their protocol requires a trusted seed whereas ours is seedless, and our protocol can be run on a single untrusted quantum device.
We note, however, that our protocol does not provide liveness, and it gives no entropy guarantee
if the operator publishes a rejecting proof with overwhelming probability.  A deployed
beacon therefore needs a separate liveness mechanism, such as an honest
fallback prover or a protocol for choosing the first timely valid pulse.

Finally, Kavuri et al.~\cite{KPR+25} demonstrated a continuously operating,
device\hyp{}independent quantum randomness beacon based on space-like separated
Bell measurements, which relies on a physical no-communication assumption. But their rate is subconstant and they do not satisfy our condition min-entropy property.

\subsection{Related Work}
\paragraph{Certified randomness.}

\begin{table}[ht]
\centering
\resizebox{\textwidth}{!}{%
\begin{tabular}{@{}llccccc@{}}
\toprule
  & \textbf{Assumption/Model}
  & \textbf{NI}
  & \textbf{Length ($n$)}
  & \textbf{Min-Entropy ($h$)}
  & \textbf{Rate ($h/n$)} \\
\midrule
\cite{AH23}
  & LLH; exp.\ verifier
  & \textcolor{BrickRed}{\textbf{no}}
  & $\Theta(\lambda^3\log \lambda)$
  & $\Omega(\lambda \log \lambda)$
  & \textcolor{BrickRed}{$\boldsymbol{\Omega(1/\lambda^2)}$} \\
\cite{YZ24}
  & AA; QROM
  & \textcolor{ForestGreen}{\textbf{yes}}
  & $\Theta(\lambda^2 \log \lambda)$
  & $o(\lambda)$
  & \textcolor{BrickRed}{
    $\boldsymbol{o(1/\lambda \log \lambda)}$} \\
\cite{KRT26}
  & QROM; shallow
  & \textcolor{ForestGreen}{\textbf{yes}}
  & $\Theta(\lambda^2\log\lambda)$
  & $o(\lambda^{1/2})$
  & \textcolor{BrickRed}{
      $\boldsymbol{o(1/\lambda^{3/2}\log\lambda)}$} \\
\cite{CKMRST26}
  & QROM
  & \textcolor{ForestGreen}{\textbf{yes}}
  & $\Theta(\lambda^2\log\lambda)$
  & $o(\lambda^{1/2})$
  & \textcolor{BrickRed}{
      $\boldsymbol{o(1/\lambda^{3/2}\log\lambda)}$} \\
\cite{CNS26}
  & OSS, exp. LWE
  & \textcolor{BrickRed}{\textbf{no}}
  & $|\mathsf{pk}|=\Theta(\lambda^2)$
  & $\lambda/8$
  & \textcolor{BrickRed}{
    $\boldsymbol{\Omega(1/\lambda)}$} \\
\midrule
{\bf This work}
  & QROM
  & \textcolor{ForestGreen}{\textbf{yes}}
  & $n$
  & $n-\omega(\log \secp)$
  & \textcolor{ForestGreen}{$\boldsymbol{\sim 1}$}\\
\bottomrule
\end{tabular}%
}
\captionsetup{
  labelfont=bf,
  justification=raggedright,
  singlelinecheck=false
}
\caption{Min-Entropy Rates of Various Certified Randomness Protocols. We consider variants that do not use seeded extractors. \textcolor{ForestGreen}{\textbf{Green text}} indicates a stronger
feature and \textcolor{BrickRed}{\textbf{red text}} a limitation.  ``NI'' means
non-interactive; ``AA'' denotes the Aaronson--Ambainis conjecture; and ``LLH'' denotes long-list hardness.  The \cite{CNS26} output length uses the concrete
one-shot-signature construction of \cite{HSVZ26}.}
\label{tab:min-entropy-comparison}
\end{table}

Certified randomness protocols allow a classical verifier with limited randomness to check that a quantum prover generated a high-entropy string. Certified randomness is slated to be one of the first applications of quantum computers in various domains. \cite{ACC+25} discusses various applications of certified randomness.

The device-independent line of work certifies randomness from violations of
Bell inequalities, under the assumption that physically separated quantum devices do not communicate during the protocol \cite{PironioEtAl10,VV12,MS16,MS17}.
These protocols first lower-bound the smooth min-entropy accumulated over the
measurement transcript and then apply a quantum-proof strong extractor.  

The same division between certification and extraction appears in 
single-device protocols.  Brakerski, Christiano, Mahadev, Vazirani, and
Vidick constructed an interactive protocol from noisy trapdoor claw-free
functions, instantiable from LWE \cite{BCMVV21}.  Their analysis accumulates
linear smooth min-entropy over many rounds and finishes with a quantum-proof
strong extractor using only polylogarithmically many additional uniform bits.
Mahadev, Vazirani, and Vidick subsequently obtained $\Omega(n)$ smooth
min-entropy in constant rounds with $\widetilde O(n)$ communication by using
lossy LWE matrices \cite{MVV22}.  These LWE-based protocols have efficient
classical verification, but verification uses a secret trapdoor and the
protocol is interactive.

Aaronson and Hung proposed a different single-device route based on random
circuit sampling \cite{AH23}.  Samples that pass a linear cross-entropy
benchmark contain $\Omega(n)$ min-entropy under their long-list hardness
assumption. They also give random-oracle evidence for the assumption and an
analysis with entangled side information in the oracle setting.  They
accumulate entropy across many circuits and apply a quantum-secure seeded
extractor---concretely, a Trevisan extractor---to obtain nearly uniform bits.
Pseudorandom challenge circuits reduce the initial random seed and yield net
randomness expansion.  The main tradeoff is that classical verification of
the benchmark takes exponential time.  This approach has also motivated an
experimental demonstration on a trapped-ion processor \cite{LiuEtAl25}.

Yamakawa and Zhandry introduced proofs of min-entropy in the QROM and showed,
assuming the Aaronson--Ambainis (AA) conjecture, that their non-interactive,
publicly verifiable proof of quantumness certifies min-entropy with rate approaching $1/\lambda \log\lambda $~\cite{YZ24}.  To convert a proof of
min-entropy to nearly uniform randomness, they also apply a strong extractor: the verifier samples a private uniform seed after receiving the prover's
message and may reveal the seed afterward.  This last seed adds a verifier
message, so the proof of \emph{randomness} is no longer seedless or
non-interactive. 

Two recent works removed the AA conjecture to certify min-entropy in the QROM unconditionally. Khurana, Roberts, and Tal proved soundness of the
original Yamakawa--Zhandry construction against adversaries making their
queries in $o(\log\lambda)$ adaptive layers \cite{KRT26}. Coladangelo,
Khurana, Mutreja, Roberts, Slote, and Tal proved soundness of a modified protocol against adversaries making $2^{o(\secp^c)}$ fully adaptive quantum
queries \cite{CKMRST26}. These protocols remain single-prover,
non-interactive, and publicly verifiable.  

In the standard model, Casper,
Nehoran, and Sattath constructed publicly certifiable min-entropy from
one-shot signatures \cite{CNS26}.  Their basic construction certifies
superlogarithmic min-entropy; an exponentially secure variant certifies
$\lambda/8$ bits.  It supports classical communication and transferability
with a quantum verifier, or classical verification without transferability,
under stronger assumptions.  It does not provide a maximal entropy rate.

\paragraph{Random oracles as seedless extractors.}
Several prior works~\cite{CDKT19,DVW19} study the use of random oracles as extractors. Coretti, Dodis, Karthikeyan, and Tessaro~\cite{CDKT19} show that a random
oracle can act as a seedless extractor even for oracle-dependent sources.
But their guarantee applies only to \emph{legitimate} sources: the
source must retain high min-entropy even conditioned on the sampler's state
and its complete oracle-query transcript. Dodis, Vaikuntanathan, and
Wichs~\cite{DVW19} study the related problem of extractor-dependent sources
for seeded, standard-model extractors. Their definition additionally requires
that the source query the extractor at its eventual output only with
negligible probability.

As we discussed above, it is impossible to guarantee that the output is uniformly random if the source is generated by a malicious adversary that queries the random oracle. We address this problem one way: by guaranteeing slightly lower min-entropy, $n - \omega(\log \secp)$, without making any restrictions on the adversary. These other works address the same problem in a different way: by restricting the adversary's query behavior in order to achieve uniform randomness. These restrictions on the adversary rule out the standard rejection sampling attack and guarantee that the extractors'
output is indistinguishable from uniform. However the restrictions are not typically enforceable against malicious adversaries.

\paragraph{Other seedless extractors.}
Other works~\cite{TrevisanVadhan,BGDM23,OhShaltiel26} have constructed single-source extractors for samplable sources. However the approach of\ifSubmission~\cite{TrevisanVadhan} and~\cite{BGDM23}\else~\cite{TrevisanVadhan,BGDM23}\fi~requires knowing the runtime of the adversary ahead of time, and requires the verifier to do computation longer than the adversary. This is not compatible with our cryptographic setting. Further, the recent approach of \cite{OhShaltiel26} can currently only produce a single-bit output, as it uses the two-source extractor recipe where generalizing to arbitrary bits is an open problem.

\paragraph{Random oracles as seedless condensers.}
Condensers, like extractors, compress a random string to increase the min-entropy rate, although they do not necessarily produce uniform randomness. Prior work has constructed seedless condensers for efficiently sampleable sources. The state-of-the-art~\cite{DRV12,FSW25} achieves the maximal rate (with $O(\log \secp)$ entropy loss) under the multi-collision resistance of a random function. They require an $\ell$-collision-resistant hash function, where $\ell$ grows polynomially with the security parameter. Combining this with work on multi-collision-resistance in the QROM~\cite{YZ20} may provide an alternate path to proving the maximal min-entropy guarantee, although it would not achieve certified \emph{conditional} min-entropy.
\section{Technical Overview}

\subsection{Definition}
A proof of min-entropy allows a quantum prover to sample a high min-entropy string and convince a verifier that they have done so. In the variant from \cite{YZ24,KRT26,CKMRST26}, the proof of min-entropy comprises the algorithms $(\Prove, \Verify)$ with the following syntax.

\begin{itemize}
    \item $\Prove^{H}(1^\lambda,1^h) \to \Pi:$ A QPT algorithm that samples a classical proof $\Pi$ with min-entropy $\geq h$.
    \item $\Verify^{H}(1^{\lambda}, 1^h, \Pi) \to X:$ A \emph{deterministic} classical polynomial time algorithm that verifies $\Pi$ and uses it to derive a string $X$ with certified min-entropy $\geq h$. If $\Pi$ is rejected, then $\Verify$ sets $X = \bot$.
\end{itemize}
 
We work in the quantum random oracle model, so $\Prove$, $\Verify$, and the adversary can make a polynomial number of queries to a public random oracle $H$. Unlike \cite{CDKT19,DVW19}, we make no restriction on the types of queries that the adversary makes to the random oracle. We only restrict the number of queries to be polynomial.

Soundness says roughly that if $\Pi$ is generated by an adversarial prover, they cannot fool the verifier into accepting a low-min-entropy string $X$. Given any adversarial prover, this prover defines the distribution of $\Pi$ and $X$. We require that this adversary either fails verification $(X = \bot)$ with overwhelming probability or the min-entropy of $X$, conditioned on $X \neq \bot$, is higher than $h$. More formally, soundness says that for any poly-query adversary $\calA$ that generates $\Pi$,
\[\Pr_{H}\Big[\Pr[X \neq \bot]\geq \frac{1}{\poly(\secp)} \wedge H_{\infty}\Big( X | X \neq \bot\Big)\leq h(\lambda)\Big]\leq \negl(\lambda)\]
Note that $\Pr[X \neq \bot]$ and $H_{\infty}\Big( X | X \neq \bot\Big)$ implicitly condition on the value of $\tilde{h} \gets H$ because they are inside of $\Pr_{H}[\cdot]$. This means that $X$ has min-entropy $> h$, even conditioned on the randomness of $H$, so the protocol is generating new randomness beyond the randomness supplied by the random oracle.

In this work, we strengthen the proof of min-entropy in two ways. First, we enable a proof of conditional min-entropy. Now $\Prove$ and $\Verify$ take as input a label $C$ and generate a string $X$ such that $X|C$ has high conditional min-entropy. This holds even if the adversary chooses $C$ and $\Pi$. Second, our protocol achieves maximal min-entropy. As long as the bitlength of $X$ is $n = \omega(\log \secp)$, the conditional min-entropy of $X|C$ is $\geq n - \gamma(\secp)$ for all $\gamma(\secp) = \omega(\log \secp)$. More formally, soundness says that for any poly-query adversary $\calA$ that generates $C$ and $\Pi$,
\[\Pr_{H}\Big[\Pr[X \neq \bot]\geq \frac{1}{\poly(\secp)} \wedge H_{\infty}^\mathsf{avg}\Big( X | C, X \neq \bot\Big)\leq n - \gamma(\lambda)\Big] \leq \negl(\lambda)\]
for all $\gamma(\secp) = \omega(\log \secp)$.
We call this primitive a \emph{proof of maximal conditional min-entropy}.

\subsection{Construction}\label[section]{sec:tech-overview-construction}
Now we will describe our construction of a proof of maximal conditional min-entropy (\Cref{constr:maxent}) and provide intuition for its soundness. We start by running \cite{CKMRST26}'s proof of min-entropy $\mathsf{MinEnt}$ using oracle $H$ to generate a weakly random string $W$ with min-entropy $\geq 4n$. Then we hash $(C, W)$ through a compressing random oracle $G$ to produce a shorter random string $X$ that has maximal min-entropy even conditioned on $C$. In summary,
\begin{itemize}
    \item $\mathsf{Prove}^{G,H}(1^{\lambda},1^n,C)$: Run $\Pi\leftarrow \mathsf{MinEnt.Prove}^{H}(1^{\lambda^{\star}},1^h)$, for appropriate parameters $\secp^\star$ and $h \geq 4n$, and output $\Pi$.

    \item  $\mathsf{Verify}^{G,H}(1^{\lambda},1^n,C,\Pi):$ 
    \begin{enumerate}
        \item Compute $W \leftarrow \mathsf{MinEnt.Verify}^{H}(1^{\lambda^{\star}},1^h,\Pi)$. If $W=\bot$, then output $X = \bot$. Otherwise continue.
        \item Output $X = G(C,W)$.
    \end{enumerate}
\end{itemize}
This construction uses two oracles $(H, G)$ for notational convenience, but they can be implemented using a single random oracle with a larger domain.

Here is some intuition for why $X = G(C, W)$ has maximal min-entropy even conditioned on $C$. First, random oracles are good seedless randomness condensers. We can view $G(C, \cdot)$ as a random oracle mapping $W \to X$, and it condenses a weakly random source $W$ to a string $X$ with maximal min-entropy. \cite{FSW25} proved that against \emph{classical} adversaries, random oracles work as seedless condensers and achieve maximal min-entropy. One contribution of our work is to show that this result holds against \emph{quantum} adversaries as well.

Second, each value of $C$ defines an independently random oracle $G(C, \cdot)$. Since $C$ is an input to $G$, we can imagine that the adversary decides on $C$ before obtaining any information about $G(C, \cdot)$. While this oversimplifies the adversary, it provides useful intuition. We make a similar but rigorous statement using the measure-and-reprogram technique \cite{DFM20}. If the adversary chooses $C$ before seeing $G(C, \cdot)$, then due to the min-entropy of $W$ and the condensing property of $G(C, \cdot)$, $X | C$ will have maximal conditional min-entropy.

\paragraph{Why existing extractors fall short.}
Prior work on extractors does not guarantee security against the fully malicious quantum adversaries that we consider here. For instance, our adversary can make quantum queries to $G$ before choosing the source and label $(W, C)$. Thus the truth table of $G$ is not an independent extractor seed, and the pair $(C,W)$ may be correlated with all
the oracle information obtained by the adversary. Neither an ordinary
seeded-extractor theorem nor a multi-source extractor applies to this
distribution. 

Prior work on random oracles as seedless extractors allows
some dependence between the source and the ideal primitive
\cite{CDKT19,DVW19}, but imposes freshness or ``legitimacy'' conditions on the
source's classical oracle queries, or computational restrictions on the
source and distinguisher.  Those models do not cover a source selected after
adaptive quantum queries to the same oracle. 

Our conclusion is deliberately weaker than uniform randomness. We certify
$n-\gamma$ bits of average conditional min-entropy in an $n$-bit output for
every $\gamma=\omega(\log\lambda)$.  This near-maximal guarantee is compatible
with the unavoidable polynomial-time rejection-sampling attack, which can
fix $O(\log\lambda)$ output bits.

\subsection{Soundness Proof}\label[section]{sec:tech-overview-soundness-proof}
Here we explain the techniques of our soundness proof. We will begin by placing several constraints on the adversary (\Cref{sec:tech-overview-light-queries-and-fixed-label}), such as making only light queries to $G$ and outputting a deterministic label $C = \mathbf{0}$. Then we will remove two of the constraints in \Cref{sec:tech-overview-heavy-queries-fixed-set} - \Cref{sec:tech-overview-adaptive-labels}.

\subsubsection{Case 1: Light Queries}\label[section]{sec:tech-overview-light-queries-and-fixed-label}

\paragraph{Simplifications.} To simplify the presentation in this section, let us assume an ideal adversary with the following constraints:
\begin{itemize}
    \item \textbf{Proof always accepts.} $\Pr[W \neq \bot] = 1$.
    \item \textbf{Min-entropy is high on every $H, G$.} $\Pr_{H, G}\Big[H_{\infty}(W)> h\Big] = 1$.
    \item \textbf{Light queries.} On any oracle $G$, any query $j \in [q]$, and any value $w \in \bit^m$, the query weight is $\wt_w^j(G) < \frac{2^{-3n}}{q^2}$.
    \item \textbf{Fixed label.} The adversary always outputs $C = \mathbf{0}$.
\end{itemize}

To prove soundness in this setting, it suffices to prove that for any possible output value $x \in \bit^n$ and any adversary making $\leq q = \poly(\secp)$ queries to $G$,
\[\Pr_{g \gets G}\left[\Pr[X=x | G = g] \geq 2^{-n + O(\log \secp)}\right] \leq 2^{-2^{\Omega(n)}}\]
This says that with high probability over $g \gets G$, $\Pr[X=x | G = g]$ is close to $2^{-n}$.

Then by taking a union bound over all possible $x \in \bit^n$, we have that with high probability over $g \gets G$, $\Pr[X=x | G = g]$ is close to $2^{-n}$ for all $x \in \bit^n$.
\begin{align*}
    \Pr_{g \gets G}\left[\exists x \in \bit^n, \Pr[X=x | G = g] \geq 2^{-n + O(\log \secp)}\right] &\leq 2^n \cdot 2^{-2^{\Omega(n)}} = \negl(\secp)
\end{align*}
We used the fact that $n = \omega(\log \secp)$. Then,
\begin{align*}
    \negl(\secp) &= \Pr_{g \gets G}\left[H_\infty(X | G = g) \leq n - O(\log \secp)\right] \ifSubmission{\\&}\fi= \Pr_{g \gets G}\left[H_\infty^{\mathsf{avg}}(X | C, G = g) \leq n - O(\log \secp)\right]
\end{align*}
We used the fact that $H_\infty^{\mathsf{avg}}(X | C, G = g) = H_\infty(X | G = g)$ because the label is fixed to $C = \mathbf{0}$.

\paragraph{Proof Intuition.} Here is some intuition for the rest of the security proof. Since each input $w$ is lightly queried, $\calA$ is very uncertain about whether $w \in G^{-1}(x)$ (the set of preimages of $x$). If $\calA$ outputs a value $w$ that happens to be in $G^{-1}(x)$, then this happened by dumb luck. To formalize this intuition, we give the adversary a reprogrammed oracle $G'$ that differs from the real oracle $G$ on a small number of inputs. The swapping lemma (\Cref{lem:swap}) and the \emph{light queries} constraint imply that the adversary's behavior will not change significantly from the reprogramming. However, $G'$ is sampled so that it gives no information about $G^{-1}(x)$. Then the final output $W$ is independent of $G^{-1}(x)$. In expectation over $G$, $W$ gives probability mass $2^{-n}$ to $G^{-1}(x)$. Furthermore, the Hoeffding bound shows this probability mass is tightly concentrated around its expected value. This will prove that with high probability over $G$, $W$ gives probability mass $\leq 2^{-n+O(\log \secp)}$ to $G^{-1}(x)$. We will make this argument formal below.

\paragraph{Sampling a reprogrammed oracle.} First, let us sample the reprogrammed oracle $G'$ along with $G$ so that $G'$ reveals no information about $G^{-1}(x)$. The procedure first samples all outputs of $G$ and $G'$ other than $x$ (determined by $E_0$), and second chooses $G^{-1}(x)$ and $G'^{-1}(x)$ (determined by $D$ and $D'$ respectively). The sampling procedure (based on \cref{def:completion}) is as follows:
\begin{enumerate}
    \item Sample a random oracle $E_0: \bit^m \to \bit^n$ subject to the constraint that $E_0$ never outputs $x$. That is to say, for each $w \in \bit^m$, sample $E_0(w) \getsr \bit^n \backslash \{x\}$.
    \item For each $w \in \bit^m$ sample two independent Bernoulli variables $D(w)$ and $D'(w)$ such that $\Pr[D(w) = 1] = \Pr[D'(w) = 1] = 2^{-n}$. 
    \item Define $G$ and $G'$:
    \[G(w) := \begin{cases}
                x,&D(w)=1,\\E_0(w),&\text{Otherwise}
            \end{cases}
            \qquad
    G'(w) := \begin{cases}
                x,&D'(w)=1,\\E_0(w),&\text{Otherwise}
            \end{cases}\]
    $D(w) = 1$ if and only if $w \in G^{-1}(x)$, and likewise for $D'$ and $G'^{-1}(x)$.
\end{enumerate}
Note that the marginal distribution of $G$ is uniformly random over all oracles mapping $\bit^m \to \bit^n$ (\Cref{lem:uniform}). Furthermore $G'$ reveals no information about $G^{-1}(x)$ because $G'$ is independent of $D$.

Second, we can express $\Pr[X=x | G = g]$ in terms of quantum states and projectors. For any $(g, g', d, d') \in \mathsf{support}(G, G', D, D')$, let $|\psi^{g}\rangle$ be the superposition over $W$-values resulting from the adversary with query access to oracle $g$. $|\psi^{g'}\rangle$ is defined analogously. Let us define the projector $\Pi_{g,x}$ that projects $|\psi^{g}\rangle$ or $|\psi^{g'}\rangle$ onto all preimages of $x$ under $g$:
\[\Pi_{g,x} := \sum_{w\in\bits^m} d(w) \ketbra{w}{w}\]
Then,
\begin{align*}
    \Pr[X = x | G = g] &= \sum_{w\in\bits^m} \mathbbm{1}_{g(w)=x} \cdot \abs{\braket{w}{\psi^g}}^2 = \norm{\Pi_{g,x} |\psi^{g}\rangle}^2
\end{align*}
Next, instead of giving the adversary query access to $g$, we will give them query access to $g'$ so that they will not know $G^{-1}(x)$. If $|\psi^{g}\rangle$ and $|\psi^{g'}\rangle$ are close in Euclidean distance, then this will not significantly decrease the adversary's probability mass on $G^{-1}(x)$. Using standard tools, we can show that
\[\Pr[X = x | G = g] \leq 2\norm{\Pi_{g,x} |\psi^{g'}\rangle}^2 + 2\norm{|\psi^{g}\rangle-|\psi^{g'}\rangle}^{2}\]

It just remains to show that except with probability $2^{-2^{\Omega(n)}}$, $\norm{\Pi_{g,x} |\psi^{g'}\rangle}^2$ and $\norm{|\psi^{g}\rangle-|\psi^{g'}\rangle}^{2}$ are $< 2^{-n + O(\log \secp)}$. We will do this in \Cref{lem:tech-overview-bound-on-success-probability} and \Cref{lem:tech-overview-psi-G-psi-g-prime-close}, respectively.

Third, we show that $\norm{\Pi_{G,x} |\psi^{G'}\rangle}^2 \approx 2^{-n}$ with high probability. We condition on the value of $G'$ to express $\norm{\Pi_{G,x} |\psi^{G'}\rangle}^2$ as a sum of independent random variables, which we then bound with the Hoeffding bound. 
\begin{lemma}[Informal]\label[lemma]{lem:tech-overview-bound-on-success-probability}
    For any $x \in \mathsf{support}(X)$,
    \[\Pr_{(g, g') \gets (G, G')}\left[\norm{\Pi_{g,x} |\psi^{g'}\rangle}^2 \geq 2^{-n + 1}\right] \leq 2^{-2^{\Omega(n)}}\]
\end{lemma}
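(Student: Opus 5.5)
Our approach is to condition on everything that determines $g'$ and then use the independence of $D$ from that conditioning. Fix $x$ and condition on $(E_0, D') = (e_0, d')$. This fixes $g'$, and hence the state $|\psi^{g'}\rangle$. Write $p_w := \abs{\braket{w}{\psi^{g'}}}^2$. Then
\[\norm{\Pi_{g,x}|\psi^{g'}\rangle}^2 = \sum_{w\in\bits^m} D(w)\, p_w,\]
since $\Pi_{g,x}$ is diagonal with entries $D(w)$. By construction, $D$ is independent of $(E_0, D')$, so the $D(w)$ remain i.i.d.\ Bernoulli$(2^{-n})$ under this conditioning. Their weights $p_w$ are fixed and satisfy $\sum_w p_w = 1$. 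The conditional expectation is therefore exactly $2^{-n}$, and we only need an upper-tail bound at twice the mean. Averaging over $(e_0,d')$ at the end gives the stated bound over $(g,g')$.

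For the tail bound we use the simplification $\Pr_{H,G}[H_\infty(W) > h] = 1$ with $h \geq 4n$. Applied to oracle $g'$ (also a uniformly random oracle, by \Cref{lem:uniform}), it gives $p_w \leq 2^{-h} \leq 2^{-4n}$ for every $w$. With this in hand we apply Hoeffding's inequality to the independent variables $D(w)p_w \in [0, p_w]$:
\[\Pr\Big[\sum_w D(w)p_w - 2^{-n} \geq 2^{-n}\Big] \leq \exp\!\Big(-\frac{2\cdot 2^{-2n}}{\sum_w p_w^2}\Big).\]
Since $\sum_w p_w^2 \leq \max_w p_w \cdot \sum_w p_w \leq 2^{-4n}$, the exponent is at least $2\cdot 2^{2n}$. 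The probability is therefore $\leq e^{-2^{2n+1}} = 2^{-2^{\Omega(n)}}$, uniformly over the conditioning.

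The main obstacle is making sure the min-entropy bound applies to the measured distribution of $|\psi^{g'}\rangle$. This distribution is the output distribution of $W$ when the adversary runs with oracle $g'$, together with the fixed $H$. The simplification is stated for every oracle $G$, so it holds for $g'$ as well. In the formal version, where this holds only with high probability over $G$, we would restrict to the good event for $G'$ and add its failure probability.

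Alternatively, if we want to avoid Hoeffding's additive form, a Chernoff/Bernstein bound on a sum with total variance $\leq 2^{-n}\sum_w p_w^2$ gives the same doubly-exponential bound. Both require only $\max_w p_w \leq 2^{-(1+\Omega(1))n}$, which $h \geq 4n$ comfortably provides.
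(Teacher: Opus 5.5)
Your proof is correct and is essentially the paper's argument: condition on data that fixes $g'$ (you use $(E_0,D')$, the paper conditions on $G'$), note that $D$ remains i.i.d.\ Bernoulli$(2^{-n})$ with conditional mean $2^{-n}$, apply Hoeffding with $t=2^{-n}$ and $\sum_w p_w^2\le 2^{-4n}$ to get $e^{-2^{2n+1}}$, and then average over the conditioning. One small correction to your closing aside: Hoeffding's range-based bound needs $\max_w p_w\le 2^{-(2+\Omega(1))n}$ (at $\max_w p_w = 2^{-1.5n}$ its exponent can be as small as $2\cdot 2^{-n/2}$), so only the variance-sensitive Bernstein/Chernoff bound works under the weaker assumption $\max_w p_w\le 2^{-(1+\Omega(1))n}$.
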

\begin{proof}[Proof Sketch]
First, note that \Cref{lem:tech-overview-bound-on-success-probability} is equivalent to the following claim:
\begin{align*}
    \Pr_{G, G'}\left[\sum_{w\in\bits^m} D(w) \cdot \abs{\braket{w}{\psi^{G'}}}^2 \geq 2^{-n + 1}\right] &\leq 2^{-2^{\Omega(n)}}
\end{align*}

Second, we will prove the claim above using the Hoeffding bound. For each $w \in \bit^m$, define the variable $Z_w = D(w) \cdot \abs{\braket{w}{\psi^{G'}}}^2$, and let us condition on the value of $g' \gets G'$. Note that conditioning on $G'$ does not change the distribution of $D$ because $G'$ and $D$ are independent. 

The Hoeffding bound (\Cref{thm:hoeffding-inequality}) says that if $\left[Z_w\right]_{w \in \bit^m}$ are independent and bounded ($a_w \leq Z_w \leq b_w$ for all $w$), then
\[\Pr\left[\sum_{w \in \bit^m}Z_w \geq \bbE\left[\sum_{w \in \bit^m}Z_w\right] + t\right] \leq e^{-\frac{2 t^2}{\sum_{w} (b_w - a_w)^2}}\]

Note that each $Z_w$ is bounded: if $a_w = 0$ and $b_w = \abs{\braket{w}{\psi^{G'}}}^2$, then $a_w \leq Z_w \leq b_w$. Furthermore, after conditioning on the value of $g' \gets G'$, the variables $\left[Z_w\right]_{w \in \bit^m}$ are independent because the remaining randomness of each $Z_w$ comes from $D(w)$.

Third, let us bound $\sum_{w} (b_w - a_w)^2$ using the min-entropy of $W$. Since $H_\infty(W) > h \geq 4n$, we have that $\abs{\braket{w}{\psi^{g'}}}^2 = \Pr[W = w] \leq 2^{-4n}$. This implies that:
\begin{align*}
    \sum_{w \in \bit^m} (b_w - a_w)^2 &= \sum_{w \in \bit^m} \abs{\braket{w}{\psi^{g'}}}^4 \leq \sum_{w \in \bit^m} \abs{\braket{w}{\psi^{g'}}}^2 \cdot 2^{-4n} = 2^{-4n}
\end{align*}

Fourth, let us set $t = 2^{-n}$. Note that 
\begin{align*}
    \bbE\left[\sum_{w \in \bit^m} Z_w\right] &= \sum_{w \in \bit^m} \Pr[G(w) = x] \cdot \abs{\braket{w}{\psi^{g'}}}^2 \ifSubmission{\\&}\fi= 2^{-n} \cdot \sum_{w \in \bit^m} \abs{\braket{w}{\psi^{g'}}}^2 = 2^{-n}
\end{align*}

Therefore, $\bbE\left[\sum_{w \in \bit^m} Z_w\right] + t = 2^{-n} + 2^{-n} = 2^{-n+1}$.

Finally, the Hoeffding bound (\Cref{thm:hoeffding-inequality}) implies that:
\begin{align*}
    \Pr_{G}\left[\sum_{w\in\bits^m} D(w) \cdot \abs{\braket{w}{\psi^{g'}}}^2 \geq 2^{-n + 1}\right] &\leq e^{-\frac{2 \cdot 2^{-2n}}{2^{-4n}}} = e^{-2^{2n+1}} = 2^{-2^{\Omega(n)}}
\end{align*}
Taking the expectation over $G'$ completes the proof.
\end{proof}

Finally, we will show that $|\psi^{G}\rangle$ and $|\psi^{G'}\rangle$ are close, with high probability. Crucially, $E_0$ will serve as a dependency-breaking variable, allowing us to bound $\left\||\psi^{G}\rangle - |\psi^{G'}\rangle\right\|_2^2$ with a sum of \emph{independent} random variables, which can in turn be bounded with the Hoeffding bound. 
\begin{lemma}[Informal]\label[lemma]{lem:tech-overview-psi-G-psi-g-prime-close}
    \[\Prob_{G, G'}\!\left[\norm{|\psi^{G}\rangle-|\psi^{G'}\rangle}^{2} \geq 2^{-n+O(\log \secp)} \right]  \leq 2^{-2^{\Omega(n)}}\]
\end{lemma}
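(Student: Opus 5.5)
We want to bound $\norm{|\psi^{G}\rangle-|\psi^{G'}\rangle}^2$, where $G$ and $G'$ differ only on the set $S=\{w: D(w)\neq D'(w)\}$. The plan is to combine the swapping lemma (\Cref{lem:swap}) with a Hoeffding argument that conditions on $E_0$.

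\textbf{Step 1 (swapping lemma).} \Cref{lem:swap} bounds $\norm{|\psi^{g}\rangle-|\psi^{g'}\rangle}$ by a sum over queries $j\in[q]$ of $\sqrt{q\cdot \wt^j_S}$. Here $\wt^j_S$ is the query weight that the $j$-th query places on the set where $g$ and $g'$ differ. After Cauchy–Schwarz,
\[\norm{|\psi^{g}\rangle-|\psi^{g'}\rangle}^2 \le O(q^2)\sum_{j\in[q]}\sum_{w\in\bits^m} |D(w)-D'(w)|\cdot \wt^j_w.\]
The query weights $\wt^j_w$ must be taken with respect to a single fixed oracle. I would use the hybrid form of the lemma: run $g$ for the first $j-1$ queries and $g'$ afterwards, or vice versa. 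Each hybrid step then only sees weights computed under one oracle.

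\textbf{Step 2 (dependency breaking).} The weights above depend on $g$ (or $g'$), and $g$ depends on $D$, so they are not independent of the $D(w)$'s. I expect this to be the main obstacle. To handle it, I would replace the weights by weights computed under the oracle $E_0$, which never outputs $x$ and is independent of $(D,D')$. That is, compare both $|\psi^{G}\rangle$ and $|\psi^{G'}\rangle$ to $|\psi^{E_0}\rangle$ using the triangle inequality. $G$ differs from $E_0$ exactly on $D^{-1}(1)$, and $G'$ differs from $E_0$ exactly on $D'^{-1}(1)$. Swapping from the $E_0$ side, the relevant weights are $\wt^j_w(E_0)$, and the light-queries hypothesis applies to $E_0$ as well (I would state the hypothesis for every oracle). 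This gives
\[\norm{|\psi^{G}\rangle-|\psi^{G'}\rangle}^2 \le O(q^2)\sum_{j}\sum_{w}\bigl(D(w)+D'(w)\bigr)\wt^j_w(E_0).\]

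\textbf{Step 3 (Hoeffding).} Condition on $E_0$. The variables $Z_w=\sum_j (D(w)+D'(w))\wt^j_w(E_0)$ are then independent across $w$. Each $Z_w$ lies in $[0,b_w]$ with $b_w=2\sum_j\wt^j_w(E_0)$. Since $\sum_w \wt^j_w=1$ for each $j$, the expectation is $2q\cdot 2^{-n}$. For the variance proxy, light queries give $\wt^j_w<2^{-3n}/q^2$, so $b_w\le 2\cdot 2^{-3n}/q$ and
\[\sum_w b_w^2\le \max_w b_w\cdot\sum_w b_w\le \frac{2\cdot 2^{-3n}}{q}\cdot 2q = O(2^{-3n}).\]
Take $t=2^{-n}$ in \Cref{thm:hoeffding-inequality}. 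The failure probability is then at most $e^{-\Omega(2^{-2n}/2^{-3n})}=e^{-\Omega(2^{n})}=2^{-2^{\Omega(n)}}$.

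On the complementary event, the squared distance is at most $O(q^2)\cdot O(q\,2^{-n})=2^{-n+O(\log\secp)}$, using $q=\poly(\secp)$. Averaging over $E_0$ finishes the proof.
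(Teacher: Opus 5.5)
Your proposal is correct and follows essentially the paper's argument: you use $|\psi^{E_0}\rangle$ as the dependency-breaking hybrid, apply the swapping lemma from the $E_0$ side (where $G$ and $G'$ differ from $E_0$ exactly on $D^{-1}(1)$ and $D'^{-1}(1)$), condition on $E_0$, and apply Hoeffding with $t=2^{-n}$, where the light-query bound makes $\sum_w b_w^2 = O(2^{-3n})$. The only differences are cosmetic: you run one Hoeffding bound on $\sum_w (D(w)+D'(w))\,\wt_w^{\leq q}(E_0)$ instead of two bounds followed by the triangle inequality (this matches how the formal Lemmas~\ref{lem:det-distance} and~\ref{lem:layer-hoeffding} are organized), and your Step~1 detour is unnecessary.
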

\begin{proof}[Proof Sketch]

First, let $\ket{\phi} = \ket{\psi^{E_0}}$. This is a hybrid state generated without knowledge of $D$ or $D'$. Since $G$ differs from $E_0$ only when $D(w) = 1$, the swapping lemma (\cref{thm:swapping-lemma}) implies that
\begin{align*}
    \left\||\phi\rangle - |\psi^{G}\rangle\right\|^2 \leq 4 q \cdot \sum_{w \in \bit^m} D(w) \cdot \wt_{w}^{\leq q}(E_0)
\end{align*}
where $\wt_{w}^{\leq q}(E_0)$ is the cumulative query weight given to $w$ when the adversary has query access to $E_0$. The \emph{light queries} constraint implies that $\wt_{w}^{\leq q}(E_0) < q \cdot \frac{2^{-3n}}{q^2} = \frac{2^{-3n}}{q}$.

Now let us condition on the value of $E_0$. Then the terms $Z_w := D(w) \cdot 4 q \wt_{w}^{\leq q}(E_0)$ are independent random variables because the randomness of each $Z_w$ comes only from $D(w)$. Then we can apply the Hoeffding bound to the equation above.

Second, each $Z_w$ is bounded. If $a_w = 0$ and $b_w = 4 q \wt_{w}^{\leq q}(E_0)$, then $a_w \leq Z_w \leq b_w$. Additionally,
\begin{align*}
    \sum_{w \in \bit^m} (b_w - a_w)^2 &= \sum_{w \in \bit^m} \left(4 q \wt_{w}^{\leq q}(E_0)\right)^2\\
    &\leq \sum_{w \in \bit^m} 16 q^2 \cdot \frac{2^{-3n}}{q} \cdot \wt_{w}^{\leq q}(E_0) \leq 2^{-3n+O(\log \secp)}.
\end{align*}

Third, $\bbE\left[\sum_{w \in \bit^m} Z_w\right] \approx 2^{-n}$. Conditioned on the value of $E_0$,
\[\bbE\left[\sum_{w \in \bit^m} Z_w\right] = \sum_{w \in \bit^m} 2^{-n} \cdot 4 q \wt_{w}^{\leq q}(E_0) = 2^{-n} \cdot 4 q^2 = 2^{-n + O(\log \secp)}\]

Fourth, by applying the Hoeffding bound, we can show that
\[\Prob_{G}\!\left[\norm{|\phi\rangle-|\psi^{G}\rangle}^{2} \geq 2^{-n+O(\log \secp)} \right] \leq e^{-2^{\Omega(n)}}\]

Fifth, by the same reasoning, we can show that $\ket{\phi}$ is close to $|\psi^{G'}\rangle$:
\[\Prob_{G'}\!\left[\norm{|\phi\rangle-|\psi^{G'}\rangle}^{2} \geq 2^{-n+O(\log \secp)} \right] \leq e^{-2^{\Omega(n)}}\]
Then by the triangle inequality,
\[\Prob_{G, G'}\!\left[\norm{|\psi^{G}\rangle-|\psi^{G'}\rangle}^{2} \geq 2^{-n+O(\log \secp)} \right] \leq 2^{-2^{\Omega(n)}}\]
\end{proof}

\subsubsection{Case 2: Heavy queries on a fixed set}\label[section]{sec:tech-overview-heavy-queries-fixed-set}
Here we handle adversaries that don't follow the \emph{light queries} constraint. These adversaries may give heavy query weight ($\geq \frac{2^{-3n}}{q^2}$) to some inputs. However, to make things simple, let us assume that the set of heavily queried points is the same for all oracles $G$. 

For a given oracle $G$, let the heavily queried set $\Delta$ be the set of all values $w \in \bit^m$ for which the query weight is large on some query $j \in [q]$: $\wt_w^j(G) \geq \frac{2^{-3n}}{q^2}$. Then, let us remove the \emph{light queries} constraint, and instead assume the following weaker constraint:
\begin{itemize}
    \item \textbf{Heavy queries on a fixed set.} All oracles $G$ have the same heavily queried set.
\end{itemize}

Unlike in \Cref{sec:tech-overview-light-queries-and-fixed-label}, we cannot reprogram $G$ on any given $w$-value because if we reprogram on a value in $\Delta$, then $|\psi^{G'}\rangle$ might be far from $|\psi^{G}\rangle$. Our solution is to fix $G(w) = G'(w)$ on all heavily queried inputs $w \in \Delta$ and to sample $G^{-1}(x)$ and $G'^{-1}(x)$ independently on the lightly queried inputs.

\paragraph{Proof Intuition.} On the heavily queried inputs, the adversary knows with high certainty which of them belong to $G^{-1}(x)$. The adversary might choose among $\Delta$ to output only the ones in $G^{-1}(x)$. However, their knowledge of $G$ evaluated on $\Delta$ will not let them significantly increase $\Pr[X = x]$. The size of $\Delta$ is relatively small ($|\Delta| \leq q^3 2^{3n}$). Since $H_{\infty}(W) > 4n$, the total probability mass that $W$ gives to $\Delta$ is $\leq q^3 2^{3n} \cdot 2^{-4n} = q^3 2^{-n}$, which is small enough to ignore. 

To make this intuition formal, we will choose $G'$ to agree with $G$ on all inputs in $\Delta$ (including the ones in $G^{-1}(x)$). However, $G'$ gives no information about which lightly queried inputs belong to $G^{-1}(x)$. If the adversary is given query access to $G'$, then with high probability, their final probability mass on $G^{-1}(x)$ will be 
\begin{align*}
    \Pr[X=x] &= \Pr[X = x \land W \notin \Delta] + \Pr[X=x \land W \in \Delta]\\
    &\leq 2^{-n + O(\log \secp)} + q^3 2^{-n} = 2^{-n + O(\log \secp)}
\end{align*}

\subsubsection{Case 3: Heavy queries on an adaptive set}\label[section]{sec:tech-overview-heavy-queries-adaptive-set}
Here we remove all restrictions on the types of queries that the adversary can make to $G$. In \Cref{sec:tech-overview-heavy-queries-fixed-set}, the adversary's set $\Delta$ of heavily queried inputs is constant for all oracles $G$. In this section, the adversary may choose $\Delta$ adaptively based on $G$. For example, after making the first query, the adversary may use the response to decide which inputs to heavily query on the second query.

If the adversary chooses its query weight adaptively, then $\Delta$ itself may be correlated with $D(w)$ on the lightly queried inputs. Our goal is to sample $D, D', \Delta$ so that $D$ and $D'$ are independent on inputs outside of $\Delta$, yet $\norm{|\psi^{G}\rangle-|\psi^{G'}\rangle}$ is small.

Our solution is to sample $D, D', \Delta$ lazily. Our sampling procedure is similar to the one in \Cref{sec:tech-overview-light-queries-and-fixed-label} except that we wait to sample each $D(w)$ until $w$ is about to receive heavy query weight. At that point, we sample $D(w)$ and reprogram the oracle $E_0$ as necessary, to map $w \to x$ if $D(w) = 1$. The sampling procedure is as follows:
\begin{enumerate}
    \item Sample a random oracle $E_0: \bit^m \to \bit^n$ subject to the constraint that $E_0$ never outputs $x$. That is to say, for each $w \in \bit^m$, sample $E_0(w) \getsr \bit^n \backslash \{x\}$.
    \item On each query $j \in [q]$,
    \begin{enumerate}
        \item Let $\Delta_j$ be the set of inputs $w$ that receive large query weight ($\wt_w^j \geq \frac{2^{-3n}}{q^2}$) and do not belong to any earlier $\Delta_{j'}$ for any $j' < j$.
        \item For each $w \in \Delta_j$, sample $D(w) = D'(w) \gets \mathsf{Bernoulli}\left(2^{-n}\right)$. For any $w$ on which $D$ has not been sampled yet, let $D(w) = D'(w) = 0$ as a placeholder.
        \item Set $E_j(w) = \begin{cases}
            x, &D(w) = 1\\
            E_{0}(w), &\text{Otherwise}
        \end{cases}$
        \item Respond to the query using $E_j$.
    \end{enumerate}
    \item Let $\ket{\phi}$ be the adversary's final state, and let $\Delta = \Delta_1 \cup \ldots \cup \Delta_q$.
    \item For each remaining $w \notin \Delta$, sample two independent Bernoulli variables $D(w) \gets \mathsf{Bernoulli}\left(2^{-n}\right)$ and $D'(w) \gets \mathsf{Bernoulli}\left(2^{-n}\right)$.
    \item Define $G$ and $G'$:
    \[G(w) := \begin{cases}
                x,&D(w)=1,\\E_0(w),&\text{Otherwise},
            \end{cases}
            \qquad
    G'(w) := \begin{cases}
                x,&D'(w)=1,\\E_0(w),&\text{Otherwise}
            \end{cases}\]
\end{enumerate}
This procedure has the following nice properties. 

First, the reprogramming does not significantly change the adversary's behavior. We can show that with high probability,
\begin{align*}
    \norm{|\phi\rangle-|\psi^{G}\rangle}^2 &\leq 2^{-n + O(\log \secp)}\\
    \norm{|\phi\rangle-|\psi^{G'}\rangle}^2 &\leq 2^{-n + O(\log \secp)}
\end{align*}
The proof of this fact applies the swapping lemma after each query. During query $j$, right before the adversary puts heavy query weight on a value $w \in \Delta_j$, we reprogram the adversary's oracle $E_j$ so that $E_j(w) = G(w) = G'(w)$. Then the adversary never puts large query weight on an input where $E_j$ differs from $G$ or $G'$. Note that while we reprogram $E_j(w)$ during query $j$, the adversary may have given light query weight to $w$ on a prior query, before the reprogramming. On earlier queries $j' < j$, $E_{j'}(w)$ disagrees with $G(w)$ and $G'(w)$, but this does not significantly affect the adversary's behavior because their query weight on $w$ was small during query $j'$.

Second, $G'$ is independent of $D$, after conditioning on $E_0$, $\Delta$, and $D|_{\Delta}$. This is because $E_0, \Delta, D|_{\Delta}$ were sampled before we sampled $D(w)$ and $D'(w)$ for values of $w \notin \Delta$.

The two properties above allow us to complete the proof of security along similar lines as in \Cref{sec:tech-overview-light-queries-and-fixed-label}. The main difference is that we will condition on $(E_0, \Delta, D|_{\Delta})$ to break the dependence between $G'$ and $D$.

\subsubsection{Case 4: Adaptive Labels}\label[section]{sec:tech-overview-adaptive-labels}
In Cases 1-3, the adversary always outputs the same label $C = \mathbf{0}$, so we only
need to analyze one fixed slice $G(\mathbf{0}, \cdot)$ of the outer random oracle $G$. We now
remove the \emph{fixed label} constraint. The adversary may make adaptive quantum
queries to the entire labelled oracle $
G:\{0,1\}^{\ell}\times\{0,1\}^{m}\longrightarrow\{0,1\}^{n}
$
before choosing $C$ and $\Pi$. 

While the previous cases show that $H_\infty(X)$ is maximal, if $X$ and $C$ are correlated, the prior techniques cannot rule out the possibility that $H_\infty^\mathsf{avg}(X|C)$ is small. Our solution is to use the measure-and-reprogram technique~\cite{DFM20} to move to an ideal world where the adversary must choose their eventual output $C$ before they can make any queries to $G(C, \cdot)$. Once they start querying $G(C, \cdot)$, they cannot change the value of $C$, so they are in essentially the same position as the adversary in Case 3. The only way to increase $\Pr[X = x]$ is to choose $W$ to lie in $G(C, \cdot)^{-1}(x)$.

\paragraph{Making the selected slice fresh.}
We use the block measure-and-reprogram lemma (Lemma \ref{lem:block-mr}) to move to an ideal world where the adversary must choose $C$ before querying $G(C, \cdot)$.

The ideal-world adversary (the simulator) operates in two stages. In the first stage, the simulator queries an oracle  $G_0:\{0,1\}^{\ell}\times\{0,1\}^{m}\longrightarrow\{0,1\}^{n}$, then outputs a classical label $C$, and retains a residual quantum state. For the second stage, the simulator's oracle is reprogrammed on the $C$-th slice to a freshly random function $\Theta : \{0,1\}^{m}\longrightarrow\{0,1\}^{n}$. More formally, in the second stage, the simulator receives query access to the following oracle $G^\star$:
$$
G^\star(c',w)=
\begin{cases}
\Theta(w),&c'=C,\\
G_0(c',w),&c'\ne C.
\end{cases}
$$
Crucially, $\Theta$ is sampled independently of both $C$ and the residual state. 

\Cref{lem:block-mr} says that compared to the real-world adversary, the ideal-world simulator cannot do significantly worse at any given task. For any
specified success event, say $V$, the following holds:
$$
\Pr_{\mathrm{sim}}\bigl[V(G^\star,H,C,\Pi)\bigr]
\ge
\frac{1}{\operatorname{poly}(\lambda)}
\Pr_{\mathrm{real}}\bigl[V(G,H,C,\Pi)\bigr].
$$

\paragraph{Bounding the simulator's success.}
Now we can reduce the ideal world to Case 3 (\Cref{sec:tech-overview-heavy-queries-adaptive-set}) because the value of $G(C, \cdot) = \Theta(\cdot)$ is independent of $C$.

The remaining issue is the entropy after conditioning on the
label.  Section \ref{sec:lab} bounds the joint probability of a label and proof:
$$
\max_{\substack{c\in\{0,1\}^{\ell}\\w\in\{0,1\}^{m}}}
\Pr_{\mathrm{sim}}[C=c,W=w\mid\Theta=\theta]
\le 2^{1-h}.
$$
Whenever this bound holds, a label which is not ``rare'', or too unlikely, satisfies:
\begin{align*}
    \Pr[C = c] \ge 2^{4n+1-h}
    \quad\Longrightarrow\quad
    \max_{w\in\{0,1\}^{m}}
    \Pr_{\mathrm{sim}}[W=w\mid C=c,\Theta=\theta]
    &\le
    \frac{2^{1-h}}{\Pr[C = c]}\ifSubmission{\\&\quad\quad}\fi
    \le 2^{-4n}.
\end{align*}
This is precisely the point-probability bound used in Cases 1--3.
The heavy-set contribution is small as in Case 2, while Case 3's lazy
sampling handles adaptively chosen heavy sets and allows the swapping
and Hoeffding bounds to apply as before. 

For rare labels, that is labels $c$ such that $\Pr[C = c]<2^{4n+1-h}$,
the total probability mass is at most
$
\sum_{c:\Pr[C = c]<2^{4n+1-h}}\Pr[C = c]
\le 2^{\ell+4n+1-h}
$.
We show that this contribution is
small enough to absorb into the error term.
\section{Preliminaries}
\subsection{Probability}

\begin{definition}[Min-Entropy and Average Conditional Min-Entropy, \allowbreak\cite{DORS03} Section 2.4]\label[definition]{def:min-entropy}
    Let $A, B$ be random variables. We define the min-entropy $H_\infty$ and average conditional min-entropy $H_\infty^{\mathsf{avg}}$ as follows:
    \begin{align*}
        H_\infty(A) &= - \log \max_{a \in \mathsf{support}(A)} \Pr[A = a]\\
        H_\infty^{\mathsf{avg}}(A|B) &= - \log\left(\bbE_{b \gets B}\left[\max_{a \in \mathsf{support}(A)} \Pr[A = a | B = b]\right]\right)\\
        &= - \log\left(\sum_{b \in \mathsf{support}(B)} \max_{a \in \mathsf{support}(A)} \Pr[A = a, B = b]\right)
    \end{align*}
\end{definition}
Let us clarify some complicated notation that arises later on. Let $H, X, C$ be random variables, and let $E$ be an event determined by the values of $(H, X, C)$ such that for every value $\tilde{h} \in \mathsf{support}(H)$, $\Pr[E | H = \tilde{h}] > 0$. In the expression
\[\Pr_H\left[H_\infty(X|E) \geq h\right],\]
the min-entropy $H_\infty(X|E)$ implicitly conditions on the value of $H$ because it is inside of $\Pr_H[\cdot]$. Then the expression above is equivalent to the following:
\begin{align*}
    \Pr_H\left[H_\infty(X|E) \geq h\right] &= \sum_{\tilde{h} \in \mathsf{support}(H)}\Pr[H = \tilde{h}] \cdot \mathbbm{1}_{H_\infty(X|E, H = \tilde{h}) \geq h}\\
    &= \sum_{\tilde{h} \in \mathsf{support}(H)}\Pr[H = \tilde{h}] \cdot \mathbbm{1}_{-\log \max_x \Pr[X = x| E, H = \tilde{h}] \geq h}
\end{align*}
The same convention holds for average conditional min-entropy. For example,
\begin{align*}
    \Pr_H\left[H_\infty^{\mathsf{avg}}(X|C, E) \geq h\right] &= \sum_{\tilde{h} \in \mathsf{support}(H)}\Pr[H = \tilde{h}] \ifSubmission{\\&\quad\quad}\fi\cdot \mathbbm{1}_{-\log \sum_{c} \max_x \Pr[X = x, C = c | E, H = \tilde{h}] \geq h}
\end{align*}

Next, Hoeffding's inequality says that the sum of bounded independent random variables is tightly concentrated around its mean.
\begin{theorem}[Hoeffding's Inequality, \cite{Hoe63} Theorem 2]\label{thm:hoeffding-inequality}
    Let $t > 0$, let $(a_1, \ldots, a_n, b_1, \ldots, b_n)$ be constants, and let $(X_1, \dots, X_n)$ be independent random variables such that for all $i \in [n]$, $a_i \leq X_i \leq b_i$. Next, let $S = X_1 + \ldots + X_n$. Then,
    \[\Pr\left[S \geq \bbE[S] + t\right] \leq e^{-\frac{2 t^2}{\sum_{i \in [n]} (b_i - a_i)^2}}\]
\end{theorem}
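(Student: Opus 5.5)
We would prove the statement by the standard Chernoff (exponential moment) method, with Hoeffding's lemma as the key ingredient. Write $\mu_i = \bbE[X_i]$ and $\sigma^2 = \sum_{i \in [n]} (b_i - a_i)^2$. First dispose of the degenerate case $\sigma^2 = 0$. Then every $X_i$ equals the constant $a_i = b_i$, so $S = \bbE[S]$ almost surely, and $\Pr[S \geq \bbE[S] + t] = 0$ for $t>0$. This matches the bound read as $e^{-\infty} = 0$. From now on assume $\sigma^2 > 0$.

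For any $s > 0$, Markov's inequality applied to the nonnegative variable $e^{s(S - \bbE[S])}$ gives
\[\Pr\left[S \geq \bbE[S] + t\right] \leq e^{-st}\, \bbE\left[e^{s(S-\bbE[S])}\right] = e^{-st} \prod_{i \in [n]} \bbE\left[e^{s(X_i - \mu_i)}\right].\]
The factorization uses independence of $X_1, \ldots, X_n$. The core step is then \emph{Hoeffding's lemma}: if $a \leq Y \leq b$ and $\bbE[Y] = 0$, then $\bbE[e^{sY}] \leq e^{s^2 (b-a)^2/8}$. We apply it to $Y = X_i - \mu_i$, which lies in $[a_i - \mu_i,\, b_i - \mu_i]$, an interval of length $b_i - a_i$. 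This yields
\[\Pr\left[S \geq \bbE[S] + t\right] \leq \exp\left(-st + \frac{s^2 \sigma^2}{8}\right).\]
Minimizing the exponent over $s>0$, we choose $s = 4t/\sigma^2$. The exponent becomes $-4t^2/\sigma^2 + 2t^2/\sigma^2 = -2t^2/\sigma^2$, which is exactly the claimed bound.

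The main obstacle is proving Hoeffding's lemma itself. Our approach is as follows. Since $\bbE[Y]=0$ and $Y \in [a,b]$, we have $a \leq 0 \leq b$; if $a=b$ the claim is trivial, so assume $a<b$. By convexity of $y \mapsto e^{sy}$, for $y \in [a,b]$,
\[e^{sy} \leq \frac{b-y}{b-a} e^{sa} + \frac{y-a}{b-a} e^{sb}.\]
Taking expectations and using $\bbE[Y] = 0$ gives $\bbE[e^{sY}] \leq (1-p)e^{sa} + p e^{sb}$, where $p = -a/(b-a) \in [0,1]$. Set $u = s(b-a)$. The logarithm of the right-hand side equals $L(u) = -pu + \log(1 - p + p e^{u})$. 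A direct computation gives $L(0) = 0$ and $L'(0) = 0$. Moreover $L''(u) = \rho(1-\rho)$ with $\rho = p e^u/(1-p+pe^u) \in [0,1]$, so $L''(u) \leq 1/4$. Taylor's theorem with remainder then gives $L(u) \leq u^2/8 = s^2(b-a)^2/8$, completing the lemma and hence the proof.
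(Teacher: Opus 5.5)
Your proof is correct. It is the standard Chernoff (exponential-moment) argument combined with Hoeffding's lemma, and the degenerate case, the factorization by independence, the convexity-plus-Taylor proof of the lemma (with $L''\le 1/4$), and the choice $s=4t/\sum_i(b_i-a_i)^2$ all check out. The paper gives no proof of its own and simply cites \cite{Hoe63}, whose proof of Theorem 2 proceeds exactly this way; Hoeffding states the result for the sample mean, which matches the form here after rescaling $t$ by $n$.
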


\subsection{Quantum Random Oracle Model} \label{subsec:qrom}
\footnote{This section comes from \cite{CKMRST26} verbatim.} Bellare and Rogaway \cite{BR93} defined the random oracle as a black-box oracle providing access to a uniformly random function whose inputs and outputs are arbitrarily long. The oracle $H$ takes as input a binary string $x \in \bit^*$ of any finite length, and outputs a uniformly random binary string $H(x)$ of any desired length. Furthermore, quantum algorithms and adversaries can query the random oracle on a quantum superposition of inputs, and the oracle responds coherently. This is known as the quantum random oracle model (QROM) \cite{BDFLSZ11}.

For notational convenience, we will often deal with random oracles that have a finite domain and range or constructions that use two independent random oracles. These primitives can be constructed in the random oracle model, as it was defined above, using natural encoding schemes.

\subsection{Oracle-Aided Quantum Algorithms}\label{sec:prelim-oracle-aided-algorithms}
Here we define the notion of query weight for oracle-aided algorithms and provide the swapping lemma, which allows us to analyze such algorithms. This section is adapted from \cite{KRT26}.

Let $\calA^H$ denote a quantum algorithm $\calA$ with quantum query access to oracle $H$. $\calA$ is parametrized by $Q \geq 1$, which is the maximum number of quantum queries $\calA$ can make.

Without loss of generality, let $\calA^H$ operate as follows: 
\begin{itemize}
    \item $\calA$ starts with an initial pure state $\ket{\psi_0} = \ket{\psi^H_0}$ on registers $R = R_{Q} \times R_A$, a sequence of unitaries $(U_1, \dots, U_Q)$, and a measurement operator $M$.
    \item For each layer $j \in [Q]$:
    \begin{itemize}
        \item $\calA$ submits the query register $R_{Q}$ of $\ket{\psi_{j-1}^H}$ to the oracle $H$. Then $H$ acts as a phase oracle on $R_{Q}$ and returns $R_{Q}$ to $\calA$.
        \item $\calA$ applies $U_j$ to its state to obtain a new state $\ket{\psi_{j}^H}$.
    \end{itemize}
    \item Finally, $\calA$ applies measurement $M$ to $\ket{\psi_{Q}^H}$ and outputs the measurement outcome.
\end{itemize}

Next, we define the query weight $w_{x}^{j}(H)$ to be the probability that we obtain query $x$ if we measure $\calA^H$'s $j$-th query.

\begin{definition}[Query Weight]
For each possible classical query $x$ and each query $j \in [Q]$, let $w_{x}^{j}(H)$ be the \textbf{query weight} that $\calA^H$ gives to $x$ on the $j$-th query.
\[w_{x}^{j}(H) = \Tr\left[\left(\ketbra{x}_{R_{Q}} \otimes \mathbb{I}_{R_A}\right) \ketbra{\psi_{j-1}^H}\right]\]
Also, let $w_{x}^0(H) = 0$ for all $(x, H)$. Finally, let $w_{x}^{\leq j}(H)$ be the \textbf{cumulative query weight} that $\calA^H$ gives to $x$ on the first $j$ queries.
\[w_{x}^{\leq j}(H) = \sum_{j' = 0}^j w_{x}^{j'}(H)\]
\end{definition}

The following lemma gives some basic properties of the query weight.
\begin{lemma}[\cite{KRT26} Lemmas 3.2-3.5]\label{thm:query-weight-properties}
For any $\calA, H$ and any $j', j \in [Q]$ such that $j' < j$,
\begin{enumerate}
    \item For any $x$, $0 \leq w_{x}^{j}(H)$
    \item For any $x$, $w_{x}^{\leq j'}(H) \leq w_{x}^{\leq j}(H)$
    \item $\sum_{x} w^{j}_{x}(H) = 1$
    \item $\sum_{x} w^{\leq j}_{x}(H) = j$
\end{enumerate}
\end{lemma}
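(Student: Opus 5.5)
All four properties follow directly from the definition of $w_x^j(H)$ as the expectation of a projector in a normalized pure state, together with the convention $w_x^0(H)=0$. I would prove them in the order stated, since each relies only on the previous ones and on the structure of $\calA^H$ described above.

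For property 1, observe that $P_x := \ketbra{x}_{R_Q}\otimes \mathbb{I}_{R_A}$ is an orthogonal projector. Hence
\[
w_x^j(H)=\Tr\bigl[P_x\ketbra{\psi_{j-1}^H}\bigr]=\bra{\psi_{j-1}^H}P_x\ket{\psi_{j-1}^H}=\norm{P_x\ket{\psi_{j-1}^H}}^2\ge 0 .
\]
For property 2, write $w_x^{\le j}(H)-w_x^{\le j'}(H)=\sum_{j''=j'+1}^{j} w_x^{j''}(H)$. This is a sum of nonnegative terms by property 1, so the cumulative weight is monotone in $j$.

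For property 3, the key intermediate claim is that every $\ket{\psi_{j}^H}$ is a unit vector. I would prove this by induction on $j$. The base case is that $\ket{\psi_0}$ is a normalized initial pure state. For the inductive step, the phase oracle maps $\ket{y}_{R_Q}\ket{z}_{R_A}\mapsto \omega(H,y)\ket{y}\ket{z}$ with $|\omega(H,y)|=1$, so it is unitary; each $U_j$ is also unitary, so norms are preserved. Then, since $\{\ket{x}\}$ is an orthonormal basis of $R_Q$,
\[
\sum_x P_x=\mathbb{I}_{R_Q}\otimes\mathbb{I}_{R_A},
\]
and linearity of the trace gives $\sum_x w_x^j(H)=\Tr\bigl[\ketbra{\psi_{j-1}^H}\bigr]=\norm{\ket{\psi_{j-1}^H}}^2=1$.

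For property 4, swap the finite sums and use the convention $w_x^0(H)=0$ together with property 3:
\[
\sum_x w_x^{\le j}(H)=\sum_x w_x^0(H)+\sum_{j'=1}^{j}\sum_x w_x^{j'}(H)=0+\sum_{j'=1}^{j}1=j .
\]

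None of these steps is a genuine obstacle. The one point needing care is the normalization claim in property 3: I must check that the oracle acts unitarily, which it does as a phase oracle. I must also note that the $j$-th query weight is evaluated on $\ket{\psi_{j-1}^H}$, the state before the $j$-th query, so that the index shift is handled consistently in property 4.
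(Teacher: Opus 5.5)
Your proof is correct: property 1 is the nonnegativity of a projector's expectation value, property 2 follows from property 1, property 3 uses that the projectors onto the query-register basis states sum to the identity, together with norm preservation under the unitary phase oracle and the $U_j$, and property 4 follows by exchanging the sums and using $w_x^0(H)=0$. The paper does not reprove this lemma but imports it from \cite{KRT26}, and your argument is the standard direct verification of these facts.
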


The swapping lemma (\cref{thm:swapping-lemma}) says that the adversary's states on two oracles $\RO$ and $\RO'$ will be close in Euclidean distance if the adversary gives small query weight to the positions where the oracles differ.
 \begin{lemma}[Swapping Lemma, \cite{KRT26} Lemma 3.6, Adapted from \allowbreak\cite{Vaz98} Lemma 2,
\cite{BBBV97} Theorem 3.3]\label{thm:swapping-lemma}
        Given two oracles $\RO, \RO'$, let $X$ be the subset of inputs on which $\RO$ and $\RO'$ differ. Then, for any $j \in \{0, \ldots, Q\}$,
        \[\left\|\ket{\psi^\RO_{j}} - \ket{\psi^{\RO'}_{j}}\right\|_2 \leq 2 \cdot \sqrt{j \cdot \sum_{x \in X} w_{x}^{\leq j}(\RO)}\]
    \end{lemma}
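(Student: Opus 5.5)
The plan is a standard hybrid argument in the style of \cite{BBBV97,Vaz98}, with the error charged to the query weight under $\RO$. For $k \in \{0,\ldots,j\}$, let $\ket{\chi_k}$ be the state after $j$ queries of a run that answers the first $k$ queries with $\RO$ and the remaining $j-k$ queries with $\RO'$. Then $\ket{\chi_j} = \ket{\psi^{\RO}_j}$ and $\ket{\chi_0} = \ket{\psi^{\RO'}_j}$. By the triangle inequality,
\[\left\|\ket{\psi^\RO_{j}} - \ket{\psi^{\RO'}_{j}}\right\|_2 \leq \sum_{k=1}^{j} \left\|\ket{\chi_k} - \ket{\chi_{k-1}}\right\|_2 .\]

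Next I bound each consecutive difference. The runs $\chi_k$ and $\chi_{k-1}$ agree on the first $k-1$ queries, which are answered by $\RO$. So both are in state $\ket{\psi^{\RO}_{k-1}}$ just before the $k$-th query. They differ only in whether that query applies the phase oracle $O_{\RO}$ or $O_{\RO'}$, and afterwards they apply the same sequence of unitaries. These are $U_k,\ldots,U_j$ interleaved with $O_{\RO'}$ queries, and they preserve norm. Hence
\[\left\|\ket{\chi_k} - \ket{\chi_{k-1}}\right\|_2 = \left\|(O_{\RO} - O_{\RO'})\ket{\psi^{\RO}_{k-1}}\right\|_2 .\]
Both oracles are diagonal in the computational basis of the query register, with unit-modulus phases. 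On basis states whose query component is $x \notin X$ the two phases coincide, and on $x \in X$ the phases differ by at most $2$ in modulus. If the query register also carries an output or phase-control subregister, the same holds, since the phase depends only on $\RO(x)$ together with that subregister. Writing $\ket{\psi^{\RO}_{k-1}} = \sum_x \ket{x}_{R_Q}\ket{\alpha_x}$, the bound becomes
\[\left\|(O_{\RO} - O_{\RO'})\ket{\psi^{\RO}_{k-1}}\right\|_2 \leq 2\sqrt{\sum_{x\in X}\|\alpha_x\|^2} = 2\sqrt{\sum_{x \in X} w_x^{k}(\RO)} .\]
Here I use the definition of query weight, under which $w_x^k(\RO)$ is measured on $\ket{\psi^{\RO}_{k-1}}$.

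Finally, I sum over $k$ and apply Cauchy--Schwarz:
\[\sum_{k=1}^j 2\sqrt{\sum_{x\in X} w_x^k(\RO)} \leq 2\sqrt{j \sum_{k=1}^j \sum_{x \in X} w_x^k(\RO)} = 2\sqrt{j\sum_{x\in X} w_x^{\leq j}(\RO)},\]
using $w_x^0 = 0$. The case $j=0$ is trivial, since both sides are $0$.

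The only step needing care is the per-query bound. The hybrid must be ordered so that the state fed into the differing query is always the $\RO$-state, because the bound is stated in terms of query weights under $\RO$ and not under $\RO'$. This is why the hybrids use $\RO$ first and $\RO'$ afterwards. I also need to check that the phase oracle's action factors through the query value $x$, so that the difference operator is block-diagonal with blocks of norm at most $2$ that are supported on $X$.
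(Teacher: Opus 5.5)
Your argument is correct. It is the standard hybrid proof from \cite{BBBV97,Vaz98} that underlies \cite{KRT26} Lemma~3.6: you switch answers from $\RO$ to $\RO'$ one query at a time, with the $\RO$-queries first. You bound each step by $2\sqrt{\sum_{x\in X} w_x^k(\RO)}$ using the block-diagonal structure of the phase oracle, and you finish with Cauchy--Schwarz. The paper does not reprove the lemma but imports it from these sources, so your route coincides with the intended one.
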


The following lemma says that the Euclidean distance between two states upper-bounds the trace distance.
    \begin{lemma}[\cite{KRT26} Lemma~3.7]\label{thm:trace-dist-euclidian-dist}
        For any quantum pure states $\ket{\psi}$ and $\ket{\phi}$, \[\mathsf{TraceDist}\left(\ket{\psi}, \ket{\phi}\right) \leq \left\|\ket{\psi} - \ket{\phi}\right\|_2\]
    \end{lemma}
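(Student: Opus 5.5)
The target is the last statement above, \Cref{thm:trace-dist-euclidian-dist}: for pure states, $\mathsf{TraceDist}(\ket{\psi},\ket{\phi}) \leq \|\ket{\psi}-\ket{\phi}\|_2$. The plan is to compute both sides in terms of the inner product $\braket{\psi}{\phi}$ and compare them directly. I assume both states are normalized and take $\mathsf{TraceDist}(\ket{\psi},\ket{\phi}) = \frac12\|\ketbra{\psi}-\ketbra{\phi}\|_1$.

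The first step is to compute the trace distance exactly. The operator $\ketbra{\psi}-\ketbra{\phi}$ is Hermitian, traceless, and of rank at most two, since it is supported on $\mathrm{span}\{\ket{\psi},\ket{\phi}\}$. Its two nonzero eigenvalues are therefore $\pm\mu$. Squaring it and taking the trace gives
\begin{align*}
2\mu^2 = 2 - 2\abs{\braket{\psi}{\phi}}^2 .
\end{align*}
Hence the trace distance is $\mu = \sqrt{1-\abs{\braket{\psi}{\phi}}^2}$.

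The second step is to expand the Euclidean distance:
\begin{align*}
\|\ket{\psi}-\ket{\phi}\|_2^2 = 2-2\,\mathrm{Re}\braket{\psi}{\phi}.
\end{align*}
Write $r=\abs{\braket{\psi}{\phi}}\in[0,1]$ and use $\mathrm{Re}\braket{\psi}{\phi}\leq r$. Then
\begin{align*}
\|\ket{\psi}-\ket{\phi}\|_2^2 \geq 2-2r \geq (1-r)(1+r) = 1-r^2 .
\end{align*}
The middle inequality holds because $2 \geq 1+r$ and $1-r\geq 0$. Taking square roots gives $\|\ket{\psi}-\ket{\phi}\|_2 \geq \sqrt{1-r^2}$, which equals the trace distance.

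The only step needing care is the first one: justifying the rank-two eigenvalue computation, or equivalently identifying the trace distance of pure states with $\sqrt{1-\abs{\braket{\psi}{\phi}}^2}$. If one prefers to cite the standard Fuchs--van de Graaf identity for pure states, that step becomes immediate. The global phase is irrelevant on the left-hand side, and the bound holds for every phase choice on the right-hand side. So no optimization over phases is needed; the inequality is simply loose by the phase mismatch.
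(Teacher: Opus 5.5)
Your proof is correct. The spectral step is sound: $\ketbra{\psi}-\ketbra{\phi}$ is traceless Hermitian of rank at most two, so its nonzero eigenvalues are $\pm\mu$, and $\mathrm{Tr}\bigl[(\ketbra{\psi}-\ketbra{\phi})^2\bigr]=2-2r^2$ gives $\mu=\sqrt{1-r^2}$. The chain $\norm{\ket{\psi}-\ket{\phi}}_2^2=2-2\,\mathrm{Re}\braket{\psi}{\phi}\geq 2-2r\geq 1-r^2$ then finishes it.

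The paper gives no proof of its own for this statement; it imports it by citation from \cite{KRT26}. It is also not invoked later: the soundness argument, e.g.\ in Lemma~\ref{lem:fixed-output}, bounds Euclidean norms of projected states directly rather than going through trace distance. Your explicit convention $\mathsf{TraceDist}=\frac12\norm{\cdot}_1$ is necessary, not just a choice. Without the factor $\frac12$ the inequality fails for orthogonal states, where the left side is $2$ and the right side is $\sqrt{2}$.

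There is a shorter route that avoids the spectral computation. Write $\ketbra{\psi}-\ketbra{\phi}=\ket{\psi}\bigl(\bra{\psi}-\bra{\phi}\bigr)+\bigl(\ket{\psi}-\ket{\phi}\bigr)\bra{\phi}$. Each term is rank one with trace norm $\norm{\ket{\psi}-\ket{\phi}}_2$, so the triangle inequality for $\norm{\cdot}_1$ gives the lemma directly. That route also covers subnormalized vectors, with the constant $\frac12\bigl(\norm{\ket{\psi}}+\norm{\ket{\phi}}\bigr)\leq 1$.

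Your argument relies on normalization through the identity $\mathsf{TraceDist}=\sqrt{1-r^2}$. In return it shows exactly where the inequality is loose: the gap between $\mathrm{Re}\braket{\psi}{\phi}$ and $r$, and the factor $\frac{1+r}{2}$ relating $\mathsf{TraceDist}^2$ to $2-2r$.
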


\subsection{Proof of Min-Entropy}
Here we define a proof of min-entropy, which is a protocol to generate a random string $x$ and prove that it was generated according to a procedure with high min-entropy. \Cref{def:proof-of-min-entropy-YZ} is adapted from \cite[Definition~3.5]{YZ24} and \cite[Definition~6.1]{CKMRST26}. However, we use a stronger notion of soundness than was considered in prior work, requiring an extremely negligible soundness parameter.

\begin{definition}[Proof of Min-Entropy]\label{def:proof-of-min-entropy-YZ}
    A (keyless, non-interactive, publicly verifiable) \textbf{proof of min-entropy} relative to a random oracle  consists of the following algorithms $(\Prove, \Verify)$. \\
    
    \noindent$\Prove^{H}(1^\lambda,1^h) \to \pi:$ This is a QPT algorithm that takes as input the security parameter $1^{\lambda}$ and a min-entropy threshold $1^h$. Then $\Prove$ makes $\text{poly}(\lambda, h)$ quantum queries to the random oracle $H$, and outputs a classical proof $\pi$.\\
  
  \noindent$\Verify^{H}(1^{\lambda},1^h,\pi) \to x:$ This is a deterministic classical polynomial time algorithm that takes as input the security parameter $1^{\lambda}$, the min-entropy threshold $1^h$, and a proof $\pi$, then makes $\text{poly}(\lambda, h)$ queries to $H$, and outputs either a bitstring $x \in \bit^n$, whose length $n$ may depend on $h$, or $\bot$ indicating rejection.\\

  We require a proof of min-entropy to satisfy the following properties:\\
  
  \noindent\textbf{Correctness:} For any function $h = h(\secp)$, there is a negligible function $\negl$ such that for any $\secp \in \bbN$, we have
  \begin{equation*}
      \Pr_{H}\left[\Verify^{H}(1^{\lambda},1^h,\pi) = \bot : \pi \gets \Prove^{H}(1^\lambda,1^h) \right] \leq \negl(\secp).
  \end{equation*}
  
  \noindent\textbf{Strong Soundness:} For any polynomially bounded functions $h=h(\lambda)$, $Q = Q(\secp)$, and any inverse-polynomial function $\delta$, there exists a negligible function $\nu$ such that for any unbounded-time adversary $\mathcal{A}$ that makes at most $Q$ quantum queries to $H$, the following holds.
  
  Let $\mathcal{A}^{H}_{\top}(1^{\lambda}, 1^h)$ be the distribution of $\Verify^H(1^{\lambda}, 1^{h}, \mathcal{A}^{H}(1^{\lambda}, 1^h))$ conditioned on the output not being $\bot$. Then, for any $\secp \in \bbN$,
  \[
  \Pr_{H}\left[\begin{array}{cc}
       &  \Pr[\Verify^{H}(1^{\lambda}, 1^h,\mathcal{A}^{H}(1^{\lambda}, 1^h))\neq \bot]\geq \delta (\lambda)\\
       & \wedge H_{\infty}\Big( \mathcal{A}^{H}_{\top}(1^{\lambda}, 1^h)\Big)\leq h(\lambda)
  \end{array} \right]\leq \nu(\lambda)^{h(\secp)}
  \]
\end{definition}

In this paper, all proofs of min-entropy are assumed to be keyless, non-interactive, and publicly verifiable unless otherwise stated.

\begin{theorem}\label{thm:YZ-proof-of-min-entropy}
    There exists a proof of min-entropy relative to a random oracle satisfying \cref{def:proof-of-min-entropy-YZ}. In particular, the protocol satisfies strong soundness.
\end{theorem}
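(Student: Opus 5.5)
The plan is to instantiate \Cref{thm:YZ-proof-of-min-entropy} with the proof of min-entropy of \cite{CKMRST26} (their Construction 6.2), run at an enlarged internal security parameter $\secp^\star = \secp^\star(\secp, h)$, and absorb the factor $h$ in the exponent of $\nu(\secp)^{h(\secp)}$ into the choice of $\secp^\star$. Concretely, $\Prove^H(1^\secp,1^h)$ runs the \cite{CKMRST26} prover with parameter $1^{\secp^\star}$, and $\Verify^H(1^\secp,1^h,\pi)$ runs their deterministic verifier with the same $1^{\secp^\star}$. I would take $\secp^\star = (\secp\cdot h)^{k}$ for a constant $k$ fixed below. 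Since $h$ is polynomially bounded, $\secp^\star = \poly(\secp)$. Hence both algorithms still make $\poly(\secp,h)$ queries, and correctness is inherited directly from \cite{CKMRST26}, because a $\negl(\secp^\star)$ failure probability is $\negl(\secp)$.

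For the min-entropy threshold, \cite{CKMRST26} certifies $h^\star(\secp^\star) = \Theta((\secp^\star)^{1/2-\epsilon})$ bits, and in any case some fixed polynomial in $\secp^\star$; this is the $o(\secp^{1/2})$ entry in \Cref{tab:min-entropy-comparison}. Taking $k$ large enough gives $h^\star(\secp^\star) \geq h(\secp)$. Certifying $h^\star$ bits then certifies $h$ bits, since the event $H_\infty \leq h$ is contained in the event $H_\infty \leq h^\star$.

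For soundness, fix $Q = \poly(\secp)$ and an inverse polynomial $\delta$. The adversary's query count is $\poly(\secp) \leq 2^{o((\secp^\star)^c)}$, so it lies in the regime covered by \cite{CKMRST26}. Their soundness guarantee then bounds the bad event by $\varepsilon(\secp^\star)$. The key step is to obtain $\varepsilon$ in explicit form, something like $\varepsilon(\secp^\star) \leq 2^{-(\secp^\star)^{c'}}$ for a constant $c'>0$, rather than merely $\negl(\secp^\star)$. Given that form, setting $\nu(\secp) = 2^{-\secp}$ and choosing $k$ with $k c' \geq 1$ yields
\[
\varepsilon(\secp^\star) \leq 2^{-(\secp h)^{kc'}} \leq 2^{-\secp h} = \nu(\secp)^{h(\secp)} .
\]
The dependence on $\delta$ only affects constants or lower-order terms in their bound, since $\delta \geq 1/\poly(\secp)$.

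I expect this explicit-bound step to be the main obstacle. A bare $\negl(\secp^\star)$ guarantee does not suffice, because $\negl(\secp h)$ need not be at most $\nu^h$ for any negligible $\nu$. I would therefore reopen the \cite{CKMRST26} analysis and track the error terms. Those terms come from Fourier, hypercontractivity-type and list-recovery arguments, which typically give errors of the form $2^{-\Omega(\text{poly}(\secp^\star))}$ plus a Markov step over $H$. I would verify that each term is exponentially small in a fixed power of $\secp^\star$, and that the Markov step costs only a $1/\delta$ factor. If some term were only inverse-polynomial in $\secp^\star$, the fallback is to amplify by $t = \secp h$ independent parallel repetitions under domain-separated oracles, with the output being the concatenation, and argue that the probability over $H$ that every coordinate is simultaneously bad decays like $\varepsilon^{t}$. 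This fallback needs a direct-product argument in the QROM and is considerably harder, so I would try the explicit-parameter route first.
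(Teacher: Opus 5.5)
Your plan matches the paper's proof. The paper uses \cite{CKMRST26}'s Construction 6.4, which is their Construction 6.2 run at the leveraged security parameter $\max\{\secp, h^{2/c}\}$, playing the role of your $(\secp h)^k$. It carries out your ``key step'' by reading off from the proof of \cite[Theorem~7.1]{CKMRST26} the explicit bound $\beta \leq L(1-p)^{4n/5} + 2\delta^{-1}2^{h}(1-p)^{s} = e^{-\Omega(\sqrt{\secp})}$, where $n=\Theta(\secp)$ is their internal parameter, so no parallel-repetition fallback is needed. One correction: $\delta$ and $h$ enter as a multiplicative factor $\delta^{-1}2^{h}$, not as constants or lower-order terms. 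This is harmless because both $\log\delta^{-1}$ and $h$ are $o(\sqrt{\secp^\star})$ in your parameter regime.
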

\begin{proof}
    We will show that \cite[Construction~6.4]{CKMRST26} satisfies the properties of \cref{def:proof-of-min-entropy-YZ}, including strong soundness. \cite{CKMRST26} proved a weaker soundness condition, but we can tighten their analysis to prove the strong soundness of their scheme. 
    
    First, to allow us to compare the strong soundness of \cref{def:proof-of-min-entropy-YZ} to the certifiable min-entropy property of \cite[Definition~6.1]{CKMRST26}, let us define the \textbf{soundness parameter} to be the following probability, as a function of $\secp$:
    \[\Pr_{H}\Big[\Pr[\Verify^{H}(1^{\lambda}, 1^h,\mathcal{A}^{H}(1^{\lambda}, 1^h))\neq \bot]\geq \delta (\lambda) \wedge H_{\infty}\Big( \mathcal{A}^{H}_{\top}(1^{\lambda}, 1^h)\Big)\leq h(\lambda)\Big]\]
    
    Second, constructions 6.2 and 6.4 of \cite{CKMRST26} satisfy the syntax and correctness properties of a proof of min-entropy (Theorems 6.3 and 6.5 of \cite{CKMRST26}). Construction 6.2 provides $h(\secp) = o(\secp^c)$ bits of min-entropy for $Q = 2^{o\left(\secp^c\right)}$ with soundness parameter
    \[\beta(\secp) = e^{-\Omega(\sqrt{\secp})}\]
    This follows from the proof of \cite[Theorem 7.1]{CKMRST26}. The proof of that theorem shows that for any $h(\secp) = o(\secp^c)$, $Q = 2^{o\left(\secp^c\right)}$, and any inverse-polynomial function $\delta$, the soundness parameter $\beta$ satisfies:
    \[\left[\beta(\secp) - L\cdot (1-p)^{4n/5}\right] \cdot \alpha(\secp) \leq 2\cdot (1-p)^{s}\]
    for any $Q$-query adversary.
    In this proof, the parameters satisfy $L = 2^{O(\secp^c \log \secp)}$, $\alpha(\secp) = \delta(\secp) \cdot 2^{-h(\secp)}$, $p = \Theta\left(n^{-1/2}\right)$, $s = \lceil{n/10\rceil} = \Theta(n)$, $n(\secp) = 2^{\lfloor\log \lambda \rfloor}-1 = \Theta(\secp)$, $c \in \left(0, \frac{1}{2}\right)$.
    This implies that for sufficiently large $\secp$,
    \begin{align*}
        \beta(\secp) &\leq L\cdot (1-p)^{4n/5} + \frac{2}{\alpha(\secp)} \cdot (1-p)^{s}\\
        &= L\cdot (1-p)^{4n/5} + 2 \cdot \delta(\secp)^{-1} \cdot 2^{h(\secp)} \cdot (1-p)^{s}\\
        &\leq 2^{O(\secp^c \log \secp)} \cdot e^{-\frac{p \cdot 4n}{5}} + 2 \cdot 2^{\secp^{c}} \cdot 2^{\secp^{c}} \cdot e^{-p \cdot s}\\
        &\leq 2^{O(\secp^c \log \secp)} \cdot e^{-\Theta(\sqrt{\secp})} + 2 \cdot 2^{\secp^{c}} \cdot 2^{\secp^{c}} \cdot e^{-\Omega(\sqrt{\secp})}\\
        &= e^{-\Omega(\sqrt{\secp})}
    \end{align*}

    Third, Construction 6.4 uses complexity leveraging by running Construction 6.2 with security parameter $\max\{\secp, h(\secp)^{2/c}\}$. The result is that for any $h = \poly(\secp)$, $Q = \poly(\secp)$, and inverse-polynomial $\delta$, Construction 6.4 achieves soundness parameter
    \begin{align*}
        \beta'(\secp) &= e^{-\Omega\left(\sqrt{\max\{\secp, h(\secp)^{2/c}\}}\right)} \leq e^{-\Omega\left(\sqrt{\sqrt{\secp} \cdot \sqrt{h(\secp)^{2/c}}}\right)} = e^{-\Omega\left(\secp^{1/4} \cdot h(\secp)^{1/(2c)}\right)}\\
        &\leq e^{-\Omega\left(\secp^{1/4} \cdot h(\secp)\right)} = \left(e^{-\Omega\left(\secp^{1/4}\right)}\right)^{h(\secp)}
    \end{align*}
    for any $Q$-query adversary.
    
    Fourth, let $\nu(\secp) = e^{-\secp^{1/5}}$ for sufficiently large $\secp$. Then the soundness parameter of Construction 6.4 satisfies $\beta'(\secp) \leq \nu(\secp)^{h(\secp)}$. Additionally, $\nu(\secp)$ is negligible in $\secp$.
\end{proof}

\begin{lemma}\label[lemma]{lem:Pr-X-and-C-is-small}
    For any polynomially bounded functions $h=h(\lambda)$, $Q = Q(\secp)$, and any inverse-polynomial function $\delta$, there exists a negligible function $\nu$ such that for any unbounded-time adversary $\mathcal{A}$ that makes at most $Q$ quantum queries to $H$, the following holds.

    Let
    \begin{align*}
        (\Pi, C) &\gets \calA^H(1^\secp, 1^{h(\secp)})\\
        X &= \Verify^H(1^\secp, 1^{h(\secp)}, \Pi)
    \end{align*}
    Then for any $\secp \in \bbN$,
    \[
  \Pr_{H}\Big[\Pr[X \neq \bot] \geq \delta (\lambda) \land \max_{x, c} \Pr[X = x, C = c | X \neq \bot] \geq 2^{-h(\lambda)}\Big] \leq \nu(\lambda)^{h(\secp)}
  \]
\end{lemma}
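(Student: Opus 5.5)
The plan is to reduce to strong soundness (\Cref{thm:YZ-proof-of-min-entropy}) via the trivial bound $\Pr[X = x, C = c \mid X \neq \bot] \leq \Pr[X = x \mid X \neq \bot]$, which holds for every pair $(x,c)$. It gives
\[\max_{x,c}\Pr[X=x, C=c \mid X\neq\bot] \;\leq\; \max_x \Pr[X = x\mid X\neq \bot] \;=\; 2^{-H_\infty(X\mid X\neq\bot)}.\]
So if $\max_{x,c}\Pr[X = x, C = c\mid X\neq\bot]\geq 2^{-h}$, then $H_\infty(X\mid X\neq\bot)\leq h$. Hence, for each fixed $H$, the event inside the probability in \Cref{lem:Pr-X-and-C-is-small} is contained in the event inside the strong soundness probability of \Cref{def:proof-of-min-entropy-YZ}.

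To make that containment usable, I need an adversary that fits \Cref{def:proof-of-min-entropy-YZ}. Given $\calA$ outputting $(\Pi, C)$, define $\calA'$ that runs $\calA^H(1^\secp,1^h)$, discards $C$, and outputs $\Pi$. Then $\calA'$ makes at most $Q$ queries and is unbounded-time, which the definition allows. The joint distribution of $(\Pi, X)$ under $\calA'$ equals the marginal under $\calA$, for every fixed oracle $H$. In particular, $\Pr[X\neq\bot]$ and the distribution $\calA'^H_\top(1^\secp,1^h)$ equal the law of $X$ conditioned on $X\neq\bot$ in the lemma's experiment.

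Applying strong soundness to $\calA'$ with the same $h$, $Q$, $\delta$ yields a negligible $\nu$ such that
\[\Pr_H\bigl[\Pr[X\neq\bot]\geq\delta \wedge H_\infty(X\mid X\neq\bot)\leq h\bigr]\leq \nu^{h}.\]
By monotonicity, this bounds the lemma's probability as well. One might worry that $\nu$ depends on the adversary. It does not: the definition quantifies $\nu$ before $\calA$, and it is uniform over all $Q$-query adversaries. The reduction therefore preserves the order of quantifiers.

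I do not expect a genuine obstacle here. The only points needing care are the inequality direction at the threshold (the lemma's ``$\geq 2^{-h}$'' maps exactly to the definition's ``$\leq h$'', which matches) and checking that both probabilities are computed per fixed $H$.
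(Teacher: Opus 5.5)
Your proposal is correct and follows essentially the same argument as the paper. Like the paper, you define $\calA'$ to run $\calA$ and discard $C$, apply strong soundness to it, and then use $\max_{x,c}\Pr[X=x,C=c\mid X\neq\bot]\leq\max_x\Pr[X=x\mid X\neq\bot]$ for each fixed $H$ to get the event containment and conclude by monotonicity.
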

\begin{proof}
    First, let us define adversary $\calA'^H\left(1^\secp, 1^{h(\secp)}\right)$ to run $\calA^H\left(1^\secp, 1^{h(\secp)}\right) \to (\Pi, C)$ and then output $\Pi$. $\calA'$ is an adversary for the strong soundness of $\mathsf{MinEnt}$ (\cref{def:proof-of-min-entropy-YZ}) that makes at most $Q$ quantum queries to $H$. By the strong soundness property of $\mathsf{MinEnt}$,
\begin{align*}
    \nu(\lambda)^{h(\secp)} & \geq \Pr_{H}\left[\begin{array}{cc}
         &\Pr[\mathsf{MinEnt}.\Verify^{H}(1^{\lambda}, 1^{h},\mathcal{A}'^{H}(1^{\lambda}, 1^{h}))\neq \bot]\geq \delta (\lambda)  \\
         &\wedge H_{\infty}\Big(\mathcal{A}'^{H}_{\top}(1^{\lambda}, 1^{h})\Big)\leq h(\lambda) 
    \end{array}\right]\\
    &= \Pr_{H}\left[\begin{array}{cc}
         &\Pr[X \neq \bot]\geq \delta (\lambda)  \\
         &\wedge H_{\infty}\Big(X | X \neq \bot \Big)\leq h(\lambda) 
    \end{array}\right]
\end{align*}
In the expression above $H_{\infty}\Big(X | X \neq \bot \Big)$ is really $H_{\infty}\Big(X | X \neq \bot, H = \tilde{h} \Big)$ for the value of $\tilde{h}$ that $H$ takes.

Second, for any $\tilde{h} \in \mathsf{support}(H)$, if $\max_{x,c} \Pr[X = x, C = c | X \neq \bot, H = \tilde{h}] \geq 2^{-h}$, then
\begin{align*}
    2^{-h} &\leq \max_{x,c} \Pr[X = x, C = c | X \neq \bot, H = \tilde{h}]\\
    &\leq \max_x \Pr[X = x | X \neq \bot, H = \tilde{h}]\\
    h &\geq H_\infty(X | X \neq \bot, H = \tilde{h})
\end{align*}

Third, let us put everything together.
\begin{align*}
    \nu(\lambda)^{h(\secp)} &\geq \Pr_{H}\left[\begin{array}{cc}
         &\Pr[X \neq \bot] \geq \delta (\lambda)  \\
         &\land H_{\infty}\Big(X | X \neq \bot \Big) \leq h(\lambda) 
    \end{array}\right]\\
    &\geq \Pr_{H}\left[\begin{array}{cc}
         &\Pr[X \neq \bot] \geq \delta (\lambda)  \\
         &\land \max_{x,c} \Pr[X = x, C = c | X \neq \bot] \geq 2^{-h(\lambda)} 
    \end{array}\right]
\end{align*}
\end{proof}
\section{Proof of Maximal Conditional Min-Entropy}\label{sec:proof-of-maximal-conditional-min-entropy}

\subsection{Definition}

\begin{definition}[Proof of maximal conditional min-entropy]\label{def:adaptive}
A (keyless, non-interactive, publicly verifiable) proof of maximal conditional min-entropy, relative to a random oracle, consists of the following algorithms $(\Prove, \Verify)$:\\

\noindent
$\mathsf{Prove}^{H}(1^{\lambda},1^n,c)\longrightarrow \pi:$ This is a QPT algorithm that takes as input the security parameter $1^{\lambda}$, a label $c\in\bits^{\ell}$ and an output length $1^n$. Then $\Prove$ makes $\text{poly}(\lambda, n, \ell)$ quantum queries to the random oracle $H$, and outputs a classical proof $\pi$.\\

\noindent $
\mathsf{Verify}^{H}(1^{\lambda},1^n,c,\pi)
\longrightarrow x:
$
This is a \emph{deterministic} classical polynomial time algorithm that takes as input the security parameter $1^{\lambda}$, the output length $1^n$, label $c\in \{0,1\}^{\ell}$, and a proof $\pi$, and then makes $\text{poly}(\lambda, n, \ell)$ queries to $H$, and outputs either a bitstring $x \in \bit^n$, or $\bot$ indicating rejection.\\

\noindent We require a proof of maximal conditional min-entropy to satisfy the following properties:\\

\noindent\textbf{Correctness.}  For every polynomially bounded $n=n(\lambda)$ and $\ell=\ell(\lambda)$, there exists a negligible function $\negl$ such that, for every $c\in\bits^{\ell}$,
$$
\Prob_H\!
\left[
\mathsf{Verify}^{H}(1^{\lambda},1^n,c,\pi)=\bot
:
\pi\leftarrow\mathsf{Prove}^{H}(1^{\lambda},1^n,c)
\right]
\leq \negl(\lambda).
$$

\noindent\textbf{Soundness.}  Let $n=n(\lambda)=\omega(\log\lambda)$, let $\ell=\ell(\lambda)$ and $Q=Q(\lambda)$ be polynomially bounded, let $\gamma(\lambda)=\omega(\log\lambda)$, and let $\delta= \delta(\lambda)$ be inverse polynomial.  For every unbounded-time adversary $A$ making at most $Q$ quantum  queries to $H$, define
$$
(C,\Pi)\leftarrow A^{H}(1^{\lambda},1^{n(\lambda)}),
$$
$$
X:=\mathsf{Verify}^{H}(1^{\lambda},1^{n(\lambda)},C,\Pi),
\qquad
\Acc:=\{X\neq\bot\}.
$$
Then there exists a negligible function $\negl$ such that for all $\secp \in \bbN$,
\begin{equation}
\Prob_H\!
\left[
\Prob[\Acc]\geq\delta(\lambda)
\;\wedge\;
\Havg(X\mid C,\Acc)\leq n(\lambda)-\gamma(\lambda)
\right]
\leq \negl(\lambda).
\end{equation}

\end{definition}

\subsection{Construction}
\textit{Parameters}: When $\Prove$ and $\Verify$ receive the input  parameters $(\secp, n, \ell)$, they let $\lambda^{\star}$ be the amplified inner security parameter $ \lambda^{\star}:=\lambda+3n+\ell+10$,  
let $h = 4 n+ \ell+ \lambda^\star+2$, and let $m$ be the output length of $\mathsf{MinEnt}.\Verify$ for the given values of $(\lambda^{\star}, h)$.

\begin{construction}[Proof of maximal conditional min-entropy]\label{constr:maxent}
Let
$$
(\mathsf{MinEnt.Prove},\mathsf{MinEnt.Verify})
$$
be the unlabelled proof of min-entropy supplied by \cref{thm:YZ-proof-of-min-entropy} instantiated at parameters $(\lambda^{\star},h)$.
This construction uses two independent random oracles.  The inner random oracle is $H$.  The outer random oracle is
$G:\bits^{\ell}\times\bits^m\longrightarrow\bits^n.$
The two oracles can be implemented by one global random oracle using prefix-free domain separation, as in section \ref{subsec:qrom}.
Now we construct $(\Prove, \Verify)$ below. 
\begin{itemize}
\item $\mathsf{Prove}^{G,H}(1^{\lambda},1^n,c)$: Run
$
\pi\leftarrow
\mathsf{MinEnt.Prove}^{H}(1^{\lambda^{\star}},1^h)
$
and output $\pi$.

\item 
$\mathsf{Verify}^{G,H}(1^{\lambda},1^n,c,\pi):$
Compute $
w\leftarrow
\mathsf{MinEnt.Verify}^{H}(1^{\lambda^{\star}},1^h,\pi)
$. 
If $w=\bot$, output $\bot$.  Otherwise output
$
x:=G(c,w).
$
\end{itemize}
\end{construction}

\begin{theorem}[Syntax and Correctness]\label{thm:syntax}
The construction above satisfies the syntax and correctness requirements of Definition~\ref{def:adaptive}.
\end{theorem}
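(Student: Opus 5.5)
The plan is to check the syntax requirements of Definition~\ref{def:adaptive} directly from Construction~\ref{constr:maxent}, and then to reduce correctness to the correctness of $\mathsf{MinEnt}$ from Theorem~\ref{thm:YZ-proof-of-min-entropy}.

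\textbf{Syntax.} Let $n,\ell$ be polynomially bounded. Then $\lambda^\star=\lambda+3n+\ell+10$ and $h=4n+\ell+\lambda^\star+2$ are both $\poly(\lambda,n,\ell)$.
\begin{itemize}
\item $\Prove$ only runs $\mathsf{MinEnt.Prove}^H(1^{\lambda^\star},1^h)$. That algorithm is QPT and makes $\poly(\lambda^\star,h)=\poly(\lambda,n,\ell)$ quantum queries to $H$. So $\Prove$ is QPT, makes the required number of queries, and outputs a classical $\pi$.
\item $\Verify$ runs the deterministic classical $\mathsf{MinEnt.Verify}^H(1^{\lambda^\star},1^h,\pi)$, which makes $\poly(\lambda,n,\ell)$ queries. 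It then either outputs $\bot$ or makes one classical query $G(c,w)$. The output $G(c,w)$ lies in $\bit^n$ because $G:\bit^\ell\times\bit^m\to\bit^n$. Here $m$ is the (polynomial) output length of $\mathsf{MinEnt.Verify}$ at parameters $(\lambda^\star,h)$, so encoding $(c,w)$ takes polynomial time.
\item $\Verify$ is therefore deterministic, classical and polynomial time.
\item The two oracles $G,H$ are realized from one random oracle by prefix-free domain separation, as in Section~\ref{subsec:qrom}. This does not change query counts beyond constant overhead.
\end{itemize}

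\textbf{Correctness.} Fix $c\in\bit^\ell$. Because $G(c,w)\neq\bot$ always, the honest output is $\bot$ exactly when $\mathsf{MinEnt.Verify}^H(1^{\lambda^\star},1^h,\pi)=\bot$ for $\pi\gets\mathsf{MinEnt.Prove}^H(1^{\lambda^\star},1^h)$. The honest prover ignores $c$, so this probability does not depend on $c$. By correctness of $\mathsf{MinEnt}$ (Theorem~\ref{thm:YZ-proof-of-min-entropy}, with the threshold function $h$ viewed through $\lambda^\star$), the probability is at most $\negl(\lambda^\star)$.

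\textbf{Main obstacle.} The only subtle step is a parameter-indexing issue. Correctness of $\mathsf{MinEnt}$ is stated for a function $h=h(\lambda)$ with negligibility in its own security parameter, but here both $\lambda^\star$ and $h$ are functions of $\lambda$. I would resolve it as follows:
\begin{itemize}
\item Since $\lambda^\star\geq\lambda$ and $\lambda^\star$ is strictly increasing in $\lambda$, any function negligible in $\lambda^\star$ is negligible in $\lambda$.
\item To apply the correctness statement, define the sequence of inner security parameters $\lambda^\star(\lambda)$ and the threshold $h$ as a function of $\lambda^\star$ along that sequence.
\item Fill in arbitrary polynomial values for indices not in the range of $\lambda\mapsto\lambda^\star(\lambda)$. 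This is harmless because correctness holds for every function $h$.
\end{itemize}
With this, the negligible bound transfers directly and the statement follows.
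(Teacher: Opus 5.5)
Your argument matches the paper's proof. Syntax holds because $\lambda^\star$ and $h$ are polynomial in $(\lambda,n,\ell)$ and the outer verifier adds one classical $G$-query. Correctness holds because the outer verifier rejects iff $\mathsf{MinEnt.Verify}$ rejects, an event that involves neither $G$ nor $c$. The paper then simply invokes inner correctness and does not discuss the indexing issue. Your extra treatment of that issue is more careful than the paper's, but one claim in it is false.

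The map $\lambda\mapsto\lambda^\star=\lambda+3n(\lambda)+\ell(\lambda)+10$ need not be strictly increasing, or even injective. Definition~\ref{def:adaptive} only assumes $n$ and $\ell$ are polynomially bounded, not monotone. So ``define $h$ as a function of $\lambda^\star$ along that sequence'' can fail when two values of $\lambda$ with different $h(\lambda)$ share the same $\lambda^\star$.

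The repair is short:
\begin{itemize}
\item \emph{Transferring negligibility.} This uses only $\lambda^\star(\lambda)\ge\lambda$: if $\nu(k)\le k^{-c}$ for $k\ge k_0$, then $\nu(\lambda^\star(\lambda))\le\lambda^{-c}$ for $\lambda\ge k_0$.
\item \emph{Non-injectivity.} Argue by contradiction. Suppose the honest rejection probability is at least $\lambda^{-c}$ for infinitely many $\lambda$. Each fibre $\{\lambda:\lambda^\star(\lambda)=k\}$ lies in $\{1,\dots,k\}$ and is finite, so you can extract an infinite subset of these $\lambda$ on which $\lambda\mapsto\lambda^\star(\lambda)$ is injective. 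Define $h'(\lambda^\star(\lambda)):=h(\lambda)$ on that subset, and any polynomial value elsewhere.
\item \emph{Conclusion.} Inner correctness for $h'$ gives a negligible $\nu'$ with $\nu'(k)\ge\lambda^{-c}\ge k^{-c}$ for infinitely many $k$, a contradiction.
\end{itemize}
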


\begin{proof}
    The parameters $h$ and $\lambda^{\star}$ are polynomial in the external parameters.  By the syntax of the inner construction denoted by $(\mathsf{MinEnt.Prove},\mathsf{MinEnt.Verify})$, its prover and verifier run in time $\poly(\lambda^{\star},h,\ell)=\poly(\lambda,n,\ell)$ and make polynomially many queries to $H$.  The outer verifier makes one additional classical query to $G$.  Hence the syntax requirement holds.

Fix a label $c\in\bits^{\ell}$.  The outer verifier rejects if and only if the unlabelled inner verifier $\mathsf{MinEnt.Verify}$ rejects.  Therefore
$$
\begin{aligned}
&\Prob_{G,H}\!\left[
\mathsf{Verify}^{G,H}
(1^{\lambda},1^n,c,
\mathsf{Prove}^{G,H}(1^{\lambda},1^n,c))
=\bot
\right]
\\
&\qquad=
\Prob_H\!\left[
\mathsf{MinEnt.Verify}^{H}
(1^{\lambda^{\star}},1^h,
\mathsf{MinEnt.Prove}^{H}(1^{\lambda^{\star}},1^h))
=\bot
\right].
\end{aligned}
$$
The right-hand side is negligible by correctness of the inner construction.  The random choice of $G$ is irrelevant to rejection, so the equality is exact.  This proves correctness.
\end{proof}

\ifSubmission
    In \Cref{sec:proof-of-soundness} we prove that Construction \ref{constr:maxent} satisfies the soundness property of Definition \ref{def:adaptive}.
\fi
\ifSubmission\else
    \ifSubmission
    \section{Proof of Soundness}\label[section]{sec:proof-of-soundness}
\else
    \subsection{Proof of Soundness}\label[section]{sec:proof-of-soundness}
\fi
Here we prove that Construction \ref{constr:maxent} satisfies the soundness property of Definition \ref{def:adaptive} (\Cref{thm:adaptive-maximal-min-entropy}).

Fix polynomially bounded functions $n=n(\lambda)$, $Q=Q(\lambda)$, and $\ell=\ell(\lambda)$ with $n(\lambda)=\omega(\log\lambda)$.  Fix $\gamma(\lambda)=\omega(\log\lambda)$ and a function $\delta= 1/\poly(\lambda)$ .  Let $A^{G,H}$ be an arbitrary, possibly unbounded-time adversary making at most $Q$ total quantum queries to the pair $(G,H)$.  Define
$$
(C,\Pi)
\leftarrow
A^{G,H}(1^{\lambda},1^{n}),
$$
$$
W
:=
\mathsf{MinEnt.Verify}^{H}(1^{\lambda^{\star}},1^h,\Pi),
$$
$$
X
:=
\begin{cases}
G(C,W),&W\neq\bot,\\
\bot,&W=\bot.
\end{cases}
$$
Thus $X\neq\bot$ if and only if $W\neq\bot$.

\noindent The rest of the section proves that
$$
\Prob_{G,H}\!\left[
\Prob[X\neq\bot\mid G,H]\geq\delta
\;\wedge\;
\Havg(X\mid C,X \neq \bot, G,H)\leq n-\gamma
\right]\leq \negl(\lambda)
$$
 \\
We repeatedly use Hoeffding's inequality in the following standard form.  If $Z_1,\ldots,Z_N$ are independent random variables satisfying $a_i\leq Z_i\leq b_i$ almost surely and $S=\sum_i Z_i$, then for every $t>0$,
$$
\Prob[S\geq\Exp[S]+t]
\leq
\exp\!\left(-\frac{2t^2}{\sum_{i=1}^{N}(b_i-a_i)^2}\right).
$$

\subsubsection{Query normal form and query weights}
\noindent Let $R$ be a classical variable sampled independently of a fresh random function
$
G:\bits^m\longrightarrow\bits^n.
$
The value $R=r$ may specify arbitrary auxiliary oracles, arbitrary classical side information, and an arbitrary initial mixed state.  Conditioned on $R=r$, an algorithm $A^{G,r}$ makes at most $q$ quantum queries to $G$ and outputs
$$
W\in\bits^m\cup\{\bot\}.
$$
Define
$$
Y:=
\begin{cases}
G(W),&W\neq\bot,\\
\bot,&W=\bot.
\end{cases}
$$
The variable $Y$ is used only in this subsection; the final construction uses the notation $X$ defined earlier.\\

\noindent Without loss of generality, conditioned on $R=r$, $A^{G,r}$  has the state $|\psi^{g,r}_0\rangle$ and a sequence of oracle-independent unitaries.  Immediately before query layer $j\in[q]$, its state is $|\psi^{g,r}_{j-1}\rangle$. 

\begin{lemma}[Swapping lemma, \cite{KRT26} Lemma 3.6, Adapted from \allowbreak\cite{Vaz98} Lemma 2,
\cite{BBBV97} Theorem 3.3]\label{lem:swap}
For any functions $g,g':\bits^m\rightarrow\bits^n$, any fixed $r$, and every $j\leq q$,
$$
\norm{|\psi^{g,r}_j\rangle-|\psi^{g',r}_j\rangle}
\leq
2\sqrt{
j\sum_{w:g(w)\neq g'(w)}
\wt^{\leq j}_w(g,r)
}.
$$
\end{lemma}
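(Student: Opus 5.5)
The plan is the standard hybrid argument of \cite{BBBV97,Vaz98}: telescope over query layers, then bound each layer's contribution by the query weight on the disagreement set. Fix $r$, $g$, $g'$, and let $S=\{w: g(w)\neq g'(w)\}$. For $k\le j$, define the hybrid $|\phi_k\rangle$ as follows. Run the algorithm with oracle $g$ for the first $k$ query layers, then with oracle $g'$ for layers $k+1,\dots,j$, and take the state immediately after layer $j$ (in the indexing of $|\psi_j\rangle$). Then $|\phi_j\rangle=|\psi^{g,r}_j\rangle$ and $|\phi_0\rangle=|\psi^{g',r}_j\rangle$. The triangle inequality gives
\[
\norm{|\psi^{g,r}_j\rangle-|\psi^{g',r}_j\rangle}\le\sum_{k=1}^{j}\norm{|\phi_k\rangle-|\phi_{k-1}\rangle}.
\]

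Next I bound a single term. The hybrids $|\phi_k\rangle$ and $|\phi_{k-1}\rangle$ agree up to the point just before query layer $k$, where both equal $|\psi^{g,r}_{k-1}\rangle$. They differ only in whether layer $k$ applies the phase oracle $O_g$ or $O_{g'}$. Everything after that is a common sequence of unitaries, so it preserves norm. Hence
\[
\norm{|\phi_k\rangle-|\phi_{k-1}\rangle}=\norm{(O_g-O_{g'})|\psi^{g,r}_{k-1}\rangle}.
\]
The operator $O_g-O_{g'}$ is diagonal in the query register. It vanishes on $\ket{w}$ for $w\notin S$, and has norm at most $2$ on $\ket{w}$ for $w\in S$. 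Therefore
\[
\norm{(O_g-O_{g'})|\psi^{g,r}_{k-1}\rangle}\le 2\sqrt{\sum_{w\in S}\wt^{k}_w(g,r)}.
\]
This uses only the definition of query weight as the squared norm of the query-register projection onto $\ket{w}$.

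Finally, Cauchy--Schwarz gives
\[
\sum_{k=1}^j 2\sqrt{a_k}\le 2\sqrt{j\sum_k a_k}
\quad\text{with}\quad
a_k=\sum_{w\in S}\wt^k_w(g,r),
\]
and $\sum_k a_k=\sum_{w\in S}\wt^{\le j}_w(g,r)$. This yields the stated bound. The case where $r$ specifies an initial mixed state and auxiliary oracles is handled by purifying into the fixed-$r$ pure-state normal form already assumed above; any auxiliary oracles are independent of $G$ and act identically in both hybrids. Alternatively, one can simply invoke \Cref{thm:swapping-lemma} with $r$ absorbed into the algorithm. The only delicate point is indexing: the weight at layer $k$ must be computed on the $g$-run state $|\psi^{g,r}_{k-1}\rangle$, which is why the hybrids switch from $g$ to $g'$ rather than the reverse.
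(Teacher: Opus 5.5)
Your proposal is correct. The layer-by-layer hybrid switches from $g$ to $g'$, so each layer's weight is measured on the $g$-run state $|\psi^{g,r}_{k-1}\rangle$; combined with the per-layer bound $\norm{(O_g-O_{g'})|\psi^{g,r}_{k-1}\rangle}\le 2\sqrt{\sum_{w\in S}\wt^k_w(g,r)}$ and Cauchy--Schwarz, this is exactly the standard argument of \cite{BBBV97,Vaz98}. The paper gives no proof of its own; it imports the lemma from \cite{KRT26} Lemma 3.6, which rests on this same hybrid argument, so your route coincides with the one the paper relies on.
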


\subsubsection{The \texorpdfstring{target-$x$}{target-x} completion procedure}

Fix a target output value
$$
x\in\bits^n.
$$
The following procedure samples two correlated functions $G,G'$ with uniform marginals.  They agree on every input that becomes noticeably queried and are resampled independently only on inputs that remain light.

\begin{definition}[Target-$x$ completion procedure]\label{def:completion}
Perform the following experiment.

\begin{enumerate}
\item Sample $R$ according to its prescribed distribution.

\item Independently for every $w\in\bits^m$, sample
$$
E_0(w)\xleftarrow{\$}\bits^n\setminus\{x\}.
$$

\item Set $D_0(w)=0$ for every $w$, set $\Delta_0=\varnothing$, and initialize the algorithm.

\item For every query layer $j\in[q]$, do the following.

\begin{enumerate}
\item Let $|\phi_{j-1}\rangle$ be the current hybrid state immediately before the $j$-th query.  Define
$$
\wt^{j}_w
:=
\Tr\!\left[
(\ketbra{w}{w}\otimes I)
\ketbra{\phi_{j-1}}{\phi_{j-1}}
\right].
$$

\item Define the new heavy set
$$
\Delta_j
:=
\left\{
 w\notin\bigcup_{i<j}\Delta_i
:
\wt^{ j}_w\geq \frac{2^{-3n}}{q^2}
\right\}.
$$

\item For every $w\in\Delta_j$, independently sample a Bernoulli variable $D_j(w)$ satisfying
$$
\Prob[D_j(w)=1]=2^{-n}.
$$
For every previously exposed coordinate retain its old value, and for every coordinate not yet exposed retain the placeholder value zero.

\item Define the oracle used at layer $j$ by
$$
E_j(w)
:=
\begin{cases}
x,&D_j(w)=1,\\
E_0(w),&D_j(w)=0.
\end{cases}
$$
Apply $E_j$ at the $j$-th query and then apply the algorithm's next oracle-independent unitary, obtaining $|\phi_j\rangle$.
\end{enumerate}

\item Let
$$
F:=E_q.
$$
Define the light set
$$
\Delta_{q+1}
:=
\bits^m\setminus\bigcup_{j=0}^{q}\Delta_j.
$$

\item For every $w\in\Delta_{q+1}$, sample two independent Bernoulli variables $D(w)$ and $D'(w)$, each of mean $2^{-n}$.  For $w\notin\Delta_{q+1}$, set
$$
D(w)=D'(w)=D_q(w).
$$

\item Define the completed functions
$$
G(w)
:=
\begin{cases}
x,&D(w)=1,\\E_0(w),&D(w)=0,
\end{cases}
\qquad
G'(w)
:=
\begin{cases}
x,&D'(w)=1,\\E_0(w),&D'(w)=0.
\end{cases}
$$
\end{enumerate}
\end{definition}

 On the heavy coordinates the two bits are the same; on the light coordinates they are independent.

\subsubsection{Uniformity of the completed functions}

\begin{lemma}[Uniform marginals]\label{lem:uniform}
Conditioned on every value $R=r$, each of $G$ and $G'$ in Definition~\ref{def:completion} is a uniformly random function  $\bits^m \rightarrow \bits^n$.
\end{lemma}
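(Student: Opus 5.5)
The plan is to show that, conditioned on $R=r$, the values $\{G(w)\}_{w\in\bits^m}$ are mutually independent and each uniform on $\bits^n$. The marginal of each coordinate is easy: $E_0(w)$ is uniform on $\bits^n\setminus\{x\}$, and $D(w)$ is Bernoulli$(2^{-n})$ independent of $E_0(w)$. So $G(w)=x$ with probability $2^{-n}$, and every $y\neq x$ has probability $(1-2^{-n})\cdot\frac{1}{2^n-1}=2^{-n}$. The only subtlety is independence across coordinates, because the heavy sets $\Delta_j$, and hence which coordinates get exposed when, depend adaptively on $E_0$ and on earlier Bernoulli draws.

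To handle this, I would give an equivalent description of the experiment. Before running the algorithm, pre-sample an independent family $\{B(w)\}_{w}$ of Bernoulli$(2^{-n})$ bits, independent of $E_0$ and $R$. Whenever the procedure would freshly sample $D_j(w)$ for $w\in\Delta_j$, it instead reads $B(w)$; in step 6 it sets $D(w)=B(w)$ for $w\in\Delta_{q+1}$. This coupling does not change the joint law, because each coordinate $w$ is exposed at most once, the sets $\Delta_j$ are disjoint by construction, and the decision to expose $w$ at layer $j$ is a function of $(r,E_0,\{B(w')\}_{w'\text{ exposed earlier}})$, which is independent of $B(w)$. Formally, I would argue by induction over the layers $j=1,\dots,q$ that the joint distribution of $(E_0,\Delta_{\le j},D_j)$ coincides in the two descriptions; this is the standard principle of lazy sampling. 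In the coupled description, every coordinate satisfies $D(w)=B(w)$: heavy ones keep $D_q(w)=B(w)$, and light ones get $B(w)$ directly.

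Given the coupling, $G$ is a deterministic coordinatewise function of $(E_0(w),B(w))_w$, and all of these variables are independent of each other and of $R$. So $G$ is a product of per-coordinate uniform distributions, i.e.\ a uniformly random function, conditioned on every $r$. For $G'$, the same coupling applies with a second independent family $B'$: on light coordinates $D'(w)=B'(w)$, and on heavy ones $D'(w)=B(w)$. Whether a coordinate is heavy depends on $B$ restricted to earlier-exposed coordinates, so I would handle this more carefully, for instance by setting $D'(w):=B(w)\mathbbm{1}[w\notin\Delta_{q+1}]+B'(w)\mathbbm{1}[w\in\Delta_{q+1}]$. The cleanest route is symmetric: run the procedure with the roles of $D$ and $D'$ swapped. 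Since the experiment's steps 1--5 never depend on the light-coordinate draws, and step 6 treats $D$ and $D'$ symmetrically on light coordinates, $(G,G')$ and $(G',G)$ have the same joint law, so the result for $G$ transfers to $G'$.

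The main obstacle I expect is making the lazy-sampling equivalence rigorous in the presence of adaptivity, namely showing that exposing $B(w)$ at a data-dependent time does not bias it. I would handle this with the inductive stopping-time argument above, conditioning at each layer on the history that determines $\Delta_j$.
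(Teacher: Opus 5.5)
Your proof is correct, and in substance it follows the paper's proof. The paper also has to deal with exposure times that depend adaptively on $E_0$ and on earlier bits. It does so by an explicit chain-rule computation: it lists the coordinates in their adaptive exposure order $w_1,\dots,w_N$ and shows that $\Pr[D=d\mid R=r,E_0=e_0]$ factors into Bernoulli-$2^{-n}$ terms. That factorization is exactly the content of the lazy-sampling equivalence you invoke. Your pre-sampled family $B$, with the layer-by-layer induction, packages the same fact so that $D=B$ holds identically, and uniformity of $G$ then follows as in the paper.

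The one real difference is how you treat $G'$. The paper reruns the chain rule for $D'$, noting that every coordinate is still exposed exactly once. You instead use exchangeability. Conditioned on the history of steps 1--5, $D$ and $D'$ coincide on heavy coordinates and form independent pairs of independent Bernoulli bits on light ones. So $(E_0,D,D')$ and $(E_0,D',D)$ have the same law given $R=r$, and the statement for $G$ transfers to $G'$ without a second computation. This argument is valid.
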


\begin{proof}
Let 
$$
p:=2^{-n},
\qquad
N:=2^m.
$$
We first prove that, conditioned on $(R,E_0)=(r,e_0)$, the final bit table $D$ has the product Bernoulli-$p$ distribution.

Fix an arbitrary deterministic ordering of the coordinates inside every exposed batch $\Delta_j$.  This induces a sequence
$$
w_1,w_2,\ldots,w_N
$$
containing every point of $\bits^m$ exactly once.  $w_k$ may depend on $(r,e_0)$ and on bits exposed at earlier positions, because the query weights and the sets $\Delta_j$ are adaptive.  Crucially, before $w_k$ is exposed, its Bernoulli bit has not been sampled and is independent of the entire previous history.

Fix a candidate table $d:\bits^m\rightarrow\bits$.  For $k\geq 1$, abbreviate the previously exposed history by
$$
\mathcal H_{k-1}(d)
:=
\{R=r,E_0=e_0,D(w_1)=d(w_1),\ldots,D(w_{k-1})=d(w_{k-1})\}.
$$
Conditional on $\mathcal H_{k-1}(d)$,
\begin{equation}
\Prob[D(w_k)=d(w_k)\mid\mathcal H_{k-1}(d)]= \begin{cases}
    p  \text{    when } d(w_k)=1\\
    1-p \text{ when } d(w_k)=0
\end{cases} 
\end{equation}
 Thus,
$$
\begin{aligned}
\Prob[D=d\mid R=r,E_0=e_0]
&=
\prod_{k=1}^{N}
\Prob[D(w_k)=d(w_k)\mid\mathcal H_{k-1}(d)]
\\
&=
p^{|\{w:d(w)=1\}|}(1-p)^{|\{w:d(w)=0\}|}.
\end{aligned}
$$
The right-hand side is the product Bernoulli probability and does not depend on $e_0$.  Therefore $D$ is independent of $E_0$ conditional on $R=r$, and its coordinates are mutually independent Bernoulli-$p$ variables.

\noindent The same argument applies to $D'$.  On a heavy coordinate, $D'$ receives the same fresh bit that was exposed during the execution.  On a light coordinate, $D'$ receives a fresh independent bit at the final completion step.  Every coordinate is still exposed exactly once for the marginal $D'$, and the chain-rule calculation is identical.  Thus $D'$ also has the product Bernoulli-$p$ distribution and is independent of $E_0$, conditional on $R=r$.

\noindent Now fix a coordinate $w$.  For the target output $x$,
$$
\Prob[G(w)=x\mid R=r]
=
\Prob[D(w)=1\mid R=r]
=p
=2^{-n}.
$$
For any $y\neq x$,
$$
\begin{aligned}
\Prob[G(w)=y\mid R=r]
&=
\Prob[D(w)=0\mid R=r]
\Prob[E_0(w)=y\mid D(w)=0,R=r]
\\
&=(1-2^{-n})\frac{1}{2^n-1}
\\
&=\frac{2^n-1}{2^n}\frac{1}{2^n-1}
\\
&=2^{-n}.
\end{aligned}
$$
Because the pairs $(D(w),E_0(w))$ are independent across coordinates, the outputs $G(w)$ are independent across coordinates.  Hence, for every fixed function $g:\bits^m\rightarrow\bits^n$,
$$
\Prob[G=g\mid R=r]
=
\prod_{w\in\bits^m}2^{-n}
=
2^{-n2^m},
$$
which is exactly the uniform distribution over all such functions.  Replacing $D$ by $D'$ proves the same statement for $G'$.
\end{proof}

\subsubsection{Distance between the two completed executions}

The hybrid state $|\phi_q\rangle$ was produced by a time-dependent oracle sequence $(E_1,\ldots,E_q)$.  To compare it with the static-oracle executions under $G$ and $G'$, we refer to the hybrid as a single oracle on the enlarged domain $[q]\times\bits^m$: on query layer $j$, the algorithm queries only points of the form $(j,w)$ and receives $E_j(w)$.  The query weight of $(j,w)$ is exactly $\wt_{j,w}$.

\begin{lemma}[Deterministic distance bound]\label{lem:det-distance}

$$
\norm{|\psi^{G',R}_q\rangle-|\psi^{G,R}_q\rangle}^{2}
\leq
8q
\sum_{j=2}^{q+1}
\sum_{w\in\Delta_j}
\bigl(D(w)+D'(w)\bigr)
\left(
\sum_{j'=1}^{j-1}\wt_{j',w}
\right).
$$
\end{lemma}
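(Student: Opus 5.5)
We bound each of $\norm{\ket{\phi_q}-\ket{\psi^{G,R}_q}}$ and $\norm{\ket{\phi_q}-\ket{\psi^{G',R}_q}}$ separately by comparing with the hybrid state $\ket{\phi_q}$, and then combine them with $(a+b)^2\leq 2a^2+2b^2$. For the first comparison we view the hybrid execution, as in the remark preceding the lemma, as a single static oracle $\tilde E$ on the enlarged domain $[q]\times\bits^m$ with $\tilde E(j,w)=E_j(w)$. We view the $G$-execution as the static oracle $\tilde G(j,w)=G(w)$ on the same domain, queried by the same algorithm, which only ever asks points of the form $(j,\cdot)$ at layer $j$. The state produced by $\tilde G$ is exactly $\ket{\psi^{G,R}_q}$, so \Cref{lem:swap}, applied with $\tilde E$ as the reference oracle (whose query weights are the $\wt_{j,w}$), gives
\[
\norm{\ket{\phi_q}-\ket{\psi^{G,R}_q}}^2\leq 4q\sum_{(j',w):\,E_{j'}(w)\neq G(w)}\wt_{j',w}.
\]

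The key step is to identify the disagreement set. Since $G(w)=x$ iff $D(w)=1$, and $E_{j'}(w)=x$ iff $D_{j'}(w)=1$, we always have $E_{j'}(w)=E_0(w)=G(w)$ whenever both bits are $0$. So disagreement at $(j',w)$ requires $D_{j'}(w)\neq D(w)$. Suppose $w\in\Delta_j$ for some $j\in\{1,\dots,q+1\}$; the sets $\Delta_1,\dots,\Delta_{q+1}$ partition $\bits^m$.
\begin{itemize}
\item For $j'\geq j$ with $j\leq q$, the bit was already exposed, so $D_{j'}(w)=D_j(w)=D(w)$ and there is no disagreement.
\item For $j'<j$, the placeholder gives $D_{j'}(w)=0$, so disagreement happens iff $D(w)=1$.
\end{itemize}
The disagreement set is therefore $\{(j',w): w\in\Delta_j,\ j'<j,\ D(w)=1\}$, and the sum becomes
\[
4q\sum_{j=2}^{q+1}\sum_{w\in\Delta_j}D(w)\sum_{j'=1}^{j-1}\wt_{j',w}.
\]
The term $j=1$ drops out because it has no earlier layers. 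For $w\in\Delta_{q+1}$, which is never exposed, every layer $j'\leq q$ uses the placeholder, consistent with the upper limit $j-1=q$.

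The identical argument with $D'$ in place of $D$ bounds $\norm{\ket{\phi_q}-\ket{\psi^{G',R}_q}}^2$ by the same expression with $D'(w)$. This uses that $D'=D_q$ on the exposed coordinates and is fresh only on $\Delta_{q+1}$. The triangle inequality and $(a+b)^2\leq 2a^2+2b^2$ then yield the constant $8q$ and the combined factor $D(w)+D'(w)$. All of this holds pointwise for every outcome of the sampling, so the bound is deterministic.

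The main obstacle is justifying the use of \Cref{lem:swap} with the adaptively defined, time-dependent oracle sequence. The lemma needs two fixed oracles. The enlarged-domain encoding provides them once we fix the realized values of all the randomness, since $\tilde E$ is then a fixed function and the hybrid's state evolution coincides with that of the static $\tilde E$. I would check carefully that the lemma's weights are taken with respect to the $\tilde E$ run, which are exactly the $\wt_{j',w}$ defined in \Cref{def:completion}, and that the query count on the enlarged domain is still $q$.
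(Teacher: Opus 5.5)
Your proposal is correct and follows the paper's proof. Like the paper, you compare both $\ket{\psi^{G,R}_q}$ and $\ket{\psi^{G',R}_q}$ to the hybrid $\ket{\phi_q}$ via the swapping lemma, using the hybrid's weights $\wt_{j',w}$ through the enlarged-domain view. You then identify the disagreement set as the pre-exposure layers $j'<j$ of points $w\in\Delta_j$ with $D(w)=1$ (respectively $D'(w)=1$), and combine the two bounds with the triangle inequality and $(a+b)^2\leq 2a^2+2b^2$ to obtain $8q$. Your observation that $E_0(w)\neq x$ makes disagreement equivalent to $D_{j'}(w)\neq D(w)$ is a slightly more explicit justification of a step the paper states directly.
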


\begin{proof}
We first compare the hybrid state $|\phi_q\rangle$ with the  execution under $G$.  Lemma~\ref{thm:swapping-lemma} implies that
$$
\norm{|\phi_q\rangle-|\psi^{G,R}_q\rangle}^{2}
\leq
4q
\sum_{j'=1}^{q}
\sum_{w:E_{j'}(w)\neq G(w)}
\wt_{j',w}.
$$
Fix a coordinate $w\in\Delta_j$.  If $j\leq q$, then the Bernoulli bit at $w$ is first sampled immediately before query layer $j$.  For every earlier layer $j'<j$, the hybrid oracle still has value $E_0(w)$ at $w$.  For every layer $j'\geq j$, the hybrid oracle has already incorporated the sampled bit and therefore equals $G(w)$ at $w$.  Thus
$$
E_{j'}(w)\neq G(w)
$$
can occur only when $j'<j$, and in that case it occurs exactly when $D(w)=1$.  If $w\in\Delta_{q+1}$, its bit is sampled only after the final query, so the same statement holds for every $j'\leq q$.  Consequently,
$$
\sum_{j'=1}^{q}
\sum_{w:E_{j'}(w)\neq G(w)}
\wt_{j',w}
=
\sum_{j=2}^{q+1}
\sum_{w\in\Delta_j}
D(w)
\left(
\sum_{j'=1}^{j-1}\wt_{j',w}
\right).
$$
Therefore
$$
\norm{|\phi_q\rangle-|\psi^{G,R}_q\rangle}^{2}
\leq
4q
\sum_{j=2}^{q+1}
\sum_{w\in\Delta_j}
D(w)
\left(
\sum_{j'=1}^{j-1}\wt_{j',w}
\right).
$$
An identical argument with $G'$ in place of $G$ gives
$$
\norm{|\phi_q\rangle-|\psi^{G',R}_q\rangle}^{2}
\leq
4q
\sum_{j=2}^{q+1}
\sum_{w\in\Delta_j}
D'(w)
\left(
\sum_{j'=1}^{j-1}\wt_{j',w}
\right).
$$
By the triangle inequality,
$$
\norm{|\psi^{G',R}_q\rangle-|\psi^{G,R}_q\rangle}
\leq
\norm{|\psi^{G',R}_q\rangle-|\phi_q\rangle}
+
\norm{|\phi_q\rangle-|\psi^{G,R}_q\rangle}.
$$
Thus, 
\begin{align*}
&\norm{|\psi^{G',R}_q\rangle-|\psi^{G,R}_q\rangle}^{2}\\&\leq 8q \sum_{j=2}^{q+1}
\sum_{w\in\Delta_j}
D(w)
\left(
\sum_{j'=1}^{j-1}\wt_{j',w}
\right)  +8q \sum_{j=2}^{q+1}
\sum_{w\in\Delta_j}
D'(w)
\left(
\sum_{j'=1}^{j-1}\wt_{j',w}
\right)  \\
&= 8q \sum_{j=2}^{q+1}
\sum_{w\in\Delta_j}
(D(w)+D'(w))
\left(
\sum_{j'=1}^{j-1}\wt_{j',w}
\right)
\end{align*}
\end{proof}

\begin{lemma}[Lightness before exposure]\label{lem:prelight}
For every $j\in\{2,\ldots,q+1\}$, every $j'<j$, and every $w\in\Delta_j$,
$$
\wt_{j',w}<\frac{2^{-3n}}{q^2}.
$$
\end{lemma}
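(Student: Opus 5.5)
This follows directly from how the heavy sets $\Delta_j$ are built in Definition~\ref{def:completion}. The plan is a contradiction argument that uses only the definition of the new heavy sets. No probabilistic or quantum reasoning is needed.

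Fix $j\in\{2,\ldots,q+1\}$, a layer $j'<j$, and a coordinate $w\in\Delta_j$. The key observation is that the sets $\Delta_1,\ldots,\Delta_{q+1}$ are pairwise disjoint:
\begin{itemize}
\item for $j\le q$, $\Delta_j$ contains only points outside $\bigcup_{i<j}\Delta_i$;
\item $\Delta_{q+1}$ is defined as the complement of $\bigcup_{i\le q}\Delta_i$.
\end{itemize}
In either case, $w\in\Delta_j$ implies $w\notin\Delta_{j'}$ and also $w\notin\bigcup_{i<j'}\Delta_i$.

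Now suppose for contradiction that $\wt_{j',w}\ge 2^{-3n}/q^2$. Here $\wt_{j',w}$ is exactly the quantity $\wt^{j'}_w$ computed from $|\phi_{j'-1}\rangle$ in step 4(a) at layer $j'$. Since $w$ lies outside $\bigcup_{i<j'}\Delta_i$ and meets the weight threshold, the definition in step 4(b) forces $w\in\Delta_{j'}$. This contradicts the disjointness above, so $\wt_{j',w}<2^{-3n}/q^2$.

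The only point needing care is notation. I will state explicitly that $\wt_{j',w}$, which appears in Lemma~\ref{lem:det-distance} as the query weight of $(j',w)$ on the enlarged-domain hybrid oracle, is the same number as $\wt^{j'}_w$ from step 4(a). Both are computed from the same hybrid state $|\phi_{j'-1}\rangle$, since the hybrid oracle at layer $j'$ only queries points $(j',\cdot)$. I will also remark that the case $j=q+1$ is covered uniformly: $\Delta_{q+1}$ is disjoint from every $\Delta_{j'}$ with $j'\le q$, so the same contradiction applies.
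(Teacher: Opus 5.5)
Your proof is correct and is the same argument as the paper's: because $w\in\Delta_j$ was never placed in $\Delta_1,\ldots,\Delta_{j-1}$, the threshold rule defining $\Delta_{j'}$ forces $\wt_{j',w}<2^{-3n}/q^2$ at every earlier layer $j'$. Your explicit handling of $j=q+1$ and your identification of $\wt_{j',w}$ with $\wt^{j'}_w$ are useful clarifications that the paper leaves implicit.
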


\begin{proof}
Because $w\in\Delta_j$, it was not selected in any earlier set $\Delta_1,\ldots,\Delta_{j-1}$.  In particular, at layer $j'$ it was neither previously selected nor newly selected.  By the definition of $\Delta_{j'}$, a previously unselected coordinate with weight at least $2^{-3n}/q^2$ would have been placed in $\Delta_{j'}$.  Therefore its weight at layer $j'$ must be strictly smaller than the threshold.
\end{proof}

\begin{lemma}[One-layer Hoeffding bound]\label{lem:layer-hoeffding}
Fix $j\in\{2,\ldots,q+1\}$.  Condition on the complete history immediately before the Bernoulli variables associated with $\Delta_j$ are sampled.  Then
$$
\Prob\!\left[
\sum_{w\in\Delta_j}
\bigl(D(w)+D'(w)\bigr)
\left(
\sum_{j'=1}^{j-1}\wt_{j',w}
\right)
\geq
(q+1)2^{-n+1}
\right]
\leq
\exp(-2^{n+1}).
$$
\end{lemma}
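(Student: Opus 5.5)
Conditioned on the complete history immediately before the bits for $\Delta_j$ are sampled, the set $\Delta_j$ and the cumulative weights $a_w:=\sum_{j'=1}^{j-1}\wt_{j',w}$ for $w\in\Delta_j$ are fixed constants. For $j\le q$ this holds because these weights are computed from the hybrid states $\ket{\phi_0},\ldots,\ket{\phi_{j-1}}$, which depend only on $R$, $E_0$ and bits exposed in earlier batches. For $j=q+1$ the same holds with the final state. The only remaining randomness in the sum comes from the fresh bits $D(w),D'(w)$ for $w\in\Delta_j$. These are mutually independent Bernoulli$(2^{-n})$ variables: for $j\le q$ we have $D(w)=D'(w)$, a single fresh bit; for $j=q+1$ they are two independent bits. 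I would treat these two cases as follows.

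Set $Z_w:=(D(w)+D'(w))a_w$. In both cases the $Z_w$ for distinct $w$ are independent, $0\le Z_w\le 2a_w$, and $\Exp[Z_w]=2\cdot 2^{-n}a_w$. Summing and using \Cref{thm:query-weight-properties}, we get $\sum_{w}a_w\le\sum_{j'<j}\sum_w\wt_{j',w}\le j-1\le q$. This gives $\Exp[S]\le q\,2^{-n+1}$. Taking $t=2^{-n+1}$, the event $S\ge(q+1)2^{-n+1}$ implies $S\ge\Exp[S]+t$, so Hoeffding's inequality applies.

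The remaining work is to bound $\sum_w(b_w-a_w)^2=\sum_w 4a_w^2$. Here I would use \Cref{lem:prelight}: each term in $a_w$ with $j'<j$ is below $2^{-3n}/q^2$. Hence $a_w<(j-1)2^{-3n}/q^2\le 2^{-3n}/q$. It follows that $\sum_w 4a_w^2\le 4\cdot\frac{2^{-3n}}{q}\sum_w a_w\le 4\cdot 2^{-3n}$. Hoeffding then gives a bound of $\exp\!\left(-2\cdot 2^{-2n+2}/(4\cdot 2^{-3n})\right)=\exp(-2^{n+1})$, which is exactly the claimed bound.

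The main obstacle is justifying the conditioning rigorously. One must check that the history determines $\Delta_j$ and the weights, while the batch-$j$ bits remain fresh and independent of that history. This is the same sequential-exposure fact used in \Cref{lem:uniform}. The $j=q+1$ case also needs care, since there $D$ and $D'$ are separate independent bits, whereas for $j\le q$ the variable $Z_w=2D(w)a_w$ has range $[0,2a_w]$. The Hoeffding range bound used above covers both cases uniformly.
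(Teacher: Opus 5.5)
Your proposal is correct and follows the paper's proof essentially step for step. It uses the same $a_w$ and $Z_w=(D(w)+D'(w))a_w$ with range $[0,2a_w]$ in both the $j\le q$ and $j=q+1$ cases, and the same mean bound $\Exp[S]\le q\,2^{-n+1}$ from the unit total query weight per layer. It also uses the same max-times-sum bound $\sum_w 4a_w^2\le 2^{-3n+2}$ via \Cref{lem:prelight}, and Hoeffding with $t=2^{-n+1}$, which gives $\exp(-2^{n+1})$.
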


\begin{proof}
After conditioning on the indicated history, the set $\Delta_j$ and all weights $\wt^{ j'}_w$ for $j'<j$ are fixed constants.  Let
\begin{equation}
a_w:=\sum_{j'=1}^{j-1}\wt_{j',w}
\qquad
(w\in\Delta_j),
\end{equation}
\begin{equation}
Z_w:=\bigl(D(w)+D'(w)\bigr)a_w,
\qquad
S_j:=\sum_{w\in\Delta_j}Z_w.
\end{equation}
For distinct $w\in\Delta_j$, the variables $Z_w$ are independent.  When $j\leq q$, the two completed bits are equal on $\Delta_j$, so $D(w)+D'(w)$ is either $0$ or $2$.  When $j=q+1$, the two bits are independent, but their sum is still in $\{0,1,2\}$.  In both cases,
$$
0\leq Z_w\leq 2a_w
\qquad
$$
and
$$
\Exp[Z_w]=2\cdot 2^{-n}a_w=2^{-n+1}a_w.
$$
Consequently,

\begin{align*}
\Exp[S_j]
&=2^{-n+1}\sum_{w\in\Delta_j}a_w
\\
&=2^{-n+1}
\sum_{j'=1}^{j-1}
\sum_{w\in\Delta_j}\wt_{j',w}
\\
&\leq 2^{-n+1}(j-1)
\\
&\leq q2^{-n+1}.
\end{align*}

The first inequality uses the fact that the total query weight at every layer is one.

We next bound the sum of squared ranges.  By Lemma~\ref{lem:prelight},
$$
\max_{w\in\Delta_j}a_w
<
(j-1)\frac{2^{-3n}}{q^2}
\leq
\frac{2^{-3n}}{q}.
$$
Also,
$$
\sum_{w\in\Delta_j}a_w
\leq j-1\leq q.
$$
Therefore
$$
\begin{aligned}
\sum_{w\in\Delta_j}(2a_w-0)^2
&=4\sum_{w\in\Delta_j}a_w^2
\\
&\leq
4\left(\max_{w\in\Delta_j}a_w\right)
\sum_{w\in\Delta_j}a_w
\\
&\leq
4\cdot\frac{2^{-3n}}{q}\cdot q
\\
&=2^{-3n+2}.
\end{aligned}
$$
If
$$
S_j\geq(q+1)2^{-n+1},
$$
then
$$
S_j\geq \Exp[S_j]+2^{-n+1}.
$$
Hoeffding's inequality gives
$$
\begin{aligned}
\Prob[S_j\geq(q+1)2^{-n+1}]
&\leq
\exp\!\left(
-\frac{2(2^{-n+1})^2}{2^{-3n+2}}
\right)
\\
&=
\exp\!\left(-2^{n+1}\right).
\end{aligned}
$$

\end{proof}

\begin{lemma}[Distance tail]\label{lem:distance-tail}
For the target-$x$ completion procedure,
$$
\Prob\!\left[
\norm{|\psi^{G',R}_q\rangle-|\psi^{G,R}_q\rangle}^{2}
\geq
q^2(q+1)2^{-n+4}
\right]
\leq
q\exp(-2^{n+1}).
$$
\end{lemma}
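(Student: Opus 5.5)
We combine the deterministic bound of Lemma~\ref{lem:det-distance} with the per-layer tail bound of Lemma~\ref{lem:layer-hoeffding}, using a union bound over the $q$ layers. Write
\[
S_j:=\sum_{w\in\Delta_j}\bigl(D(w)+D'(w)\bigr)\Bigl(\sum_{j'=1}^{j-1}\wt_{j',w}\Bigr),
\qquad j\in\{2,\ldots,q+1\}.
\]
Lemma~\ref{lem:det-distance} then says that, for every outcome of the completion procedure,
\[
\norm{|\psi^{G',R}_q\rangle-|\psi^{G,R}_q\rangle}^2\le 8q\sum_{j=2}^{q+1}S_j .
\]

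Let $B_j$ be the event $S_j\ge (q+1)2^{-n+1}$. Outside $\bigcup_j B_j$, each of the $q$ terms satisfies $S_j<(q+1)2^{-n+1}$. Hence the squared distance is strictly less than
\[
8q\cdot q\cdot (q+1)2^{-n+1}=q^2(q+1)2^{-n+4},
\]
which is exactly the threshold in the statement. So the event in the lemma is contained in $\bigcup_{j=2}^{q+1}B_j$.

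It remains to bound $\Pr[B_j]$ unconditionally. Lemma~\ref{lem:layer-hoeffding} gives $\Pr[B_j\mid\mathcal F_j]\le \exp(-2^{n+1})$, where $\mathcal F_j$ is the complete history immediately before the Bernoulli bits for $\Delta_j$ are sampled. This history includes $R$, $E_0$, all earlier exposed bits, and the weights $\wt_{j',w}$ for $j'<j$, hence also $\Delta_j$. Two measurability facts make this usable:
\begin{itemize}
\item[(i)] $S_j$ depends only on $\mathcal F_j$ together with the fresh bits on $\Delta_j$. For $j=q+1$, these fresh bits are the independent $D,D'$ on the light set.
\item[(ii)] Those fresh bits are independent of $\mathcal F_j$, which is exactly the setting in which Lemma~\ref{lem:layer-hoeffding} was proved.
\end{itemize}
Taking expectation over $\mathcal F_j$ gives $\Pr[B_j]\le\exp(-2^{n+1})$. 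A union bound over the $q$ indices $j=2,\ldots,q+1$ then yields the claimed $q\exp(-2^{n+1})$.

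The main point needing care is (i): the bits of $D,D'$ on $\Delta_j$ must not influence the weights $\wt_{j',w}$ for $j'<j$. This holds because those bits are first used only at layer $j$, after the earlier hybrid states $|\phi_{j'-1}\rangle$ have already been fixed. Everything else is arithmetic.
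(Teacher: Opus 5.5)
Your proposal is correct and follows the paper's proof essentially step for step. You define the same per-layer events, apply Lemma~\ref{lem:layer-hoeffding} conditionally on the pre-sampling history and average over that history, take a union bound over $j=2,\ldots,q+1$, and plug into Lemma~\ref{lem:det-distance} to obtain $8q\cdot q\cdot(q+1)2^{-n+1}=q^2(q+1)2^{-n+4}$; your remark that the bits on $\Delta_j$ are first used at layer $j$, so the earlier weights are fixed by the conditioning, makes explicit a point the paper leaves implicit.
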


\begin{proof}
For $j\in\{2,\ldots,q+1\}$, let $\mathsf E_j$ be the event that 
\[
\sum_{w\in\Delta_j}
\bigl(D(w)+D'(w)\bigr)
\left(
\sum_{j'=1}^{j-1}\wt_{j',w}
\right)
\geq
(q+1)2^{-n+1}
\]
Lemma~\ref{lem:layer-hoeffding} bounds the conditional probability of $\mathsf E_j$ by $\exp(-2^{n+1})$ for every possible prior history.  Averaging over that history preserves the same bound, thus
$$
\Prob[\mathsf E_j]\leq\exp(-2^{n+1}).
$$
A union bound over the $q$ indices $j=2,\ldots,q+1$ gives
$$
\Prob\!\left[\bigcup_{j=2}^{q+1}\mathsf E_j\right]
\leq q\exp(-2^{n+1}).
$$
Therefore, with probability at least $1-q\exp(-2^{n+1})$, Lemma \ref{lem:det-distance} implies that
$$
\begin{aligned}
\norm{|\psi^{G',R}_q\rangle-|\psi^{G,R}_q\rangle}^{2}
&<
8q\sum_{j=2}^{q+1}(q+1)2^{-n+1}
\\
&=
8q\cdot q\cdot(q+1)2^{-n+1}
\\
&=
q^2(q+1)2^{-n+4}.
\end{aligned}
$$

\end{proof}

\subsubsection{Projection onto one output value}

For a function $g:\bits^m\rightarrow\bits^n$, define the projector onto the preimage of the fixed target $x$ by
$$
\Pi_{g,x}
:=
\sum_{w\in\bits^m}
\mathbbm{1}_{g(w)=x} \cdot \ketbra{w}{w}\otimes \mathbbm{I}.
$$
The rejection outcome $\bot$ is represented in a subspace orthogonal to this projector.  Measuring the final output register of $|\psi^{g,r}_q\rangle$ gives $w$ with probability $\left\|\left(|w\rangle\langle w| \otimes \mathbbm{I}\right)\mid\psi^{g,r}_q\rangle\right\|^2$.  Hence the squared norm of the projection is
$$
\begin{aligned}
\norm{\Pi_{g,x}|\psi^{g,r}_q\rangle}^{2}
&=\sum_{w\in\bits^m}\mathbbm{1}_{g(w)=x} \cdot \left\|\left(|w\rangle\langle w| \otimes \mathbbm{I}\right)\mid\psi^{g,r}_q\rangle\right\|^2
\\
&=\Prob[Y=x\mid G=g,R=r].
\end{aligned}
$$

\begin{lemma}[Size of the heavy set]\label{lem:heavy-size}
For every outcome of the completion procedure,
$$
\left|\bigcup_{j=1}^{q}\Delta_j\right|
\leq q^3 2^{3n}.
$$
\end{lemma}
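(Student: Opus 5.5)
The bound follows from a counting argument: each layer $j$ can create only a bounded number of new heavy points, because the query weights at that layer sum to one. I will bound each $|\Delta_j|$ separately and then sum over the $q$ layers.

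Fix any outcome of the completion procedure and any layer $j\in[q]$. The weights $\wt^{j}_w$ are defined from the normalized hybrid state $|\phi_{j-1}\rangle$. The hybrid evolution applies only unitaries: the phase oracles $E_1,\ldots,E_{j-1}$ and the algorithm's oracle-independent unitaries. So $|\phi_{j-1}\rangle$ is a unit vector, and the weights are nonnegative with
\[
\sum_{w\in\bits^m}\wt^{j}_w=\Tr\!\left[\ketbra{\phi_{j-1}}{\phi_{j-1}}\right]=1 .
\]
This is the same statement as \Cref{thm:query-weight-properties}(1),(3), now applied to the hybrid state rather than to a static-oracle execution. Every $w\in\Delta_j$ satisfies $\wt^{j}_w\geq 2^{-3n}/q^2$. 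Therefore
\[
|\Delta_j|\cdot\frac{2^{-3n}}{q^2}\leq\sum_{w\in\Delta_j}\wt^{j}_w\leq 1,
\]
which gives $|\Delta_j|\leq q^2 2^{3n}$.

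A union bound over the $q$ layers then yields
\[
\left|\bigcup_{j=1}^{q}\Delta_j\right|\leq\sum_{j=1}^{q}|\Delta_j|\leq q\cdot q^2 2^{3n}=q^3 2^{3n}.
\]
The sets $\Delta_j$ are pairwise disjoint by construction, so the first inequality is in fact an equality, but disjointness is not needed for the bound. The set $\Delta_0$ is empty by definition, so leaving it out of the union costs nothing.

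No step is a real obstacle. The only point to verify is that the weights at each layer sum to exactly one, that is, that the hybrid state remains normalized even though the oracle $E_j$ changes from layer to layer. This holds because each $E_j$ acts as a phase oracle, which is unitary.
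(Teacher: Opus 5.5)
Your proposal is correct and follows the paper's proof: fix a layer $j$, use that every $w\in\Delta_j$ has weight at least $2^{-3n}/q^2$ while the layer-$j$ weights sum to at most one, conclude $|\Delta_j|\leq q^2 2^{3n}$, and sum over the $q$ layers. Your added check that the hybrid state stays normalized is a harmless extra. The paper cites disjointness of the $\Delta_j$ where you use a union bound; either suffices.
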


\begin{proof}
Fix a query layer $j$.  Every $w\in\Delta_j$ has query weight at least $2^{-3n}/q^2$ at that layer.  Hence
$$
|\Delta_j|\frac{2^{-3n}}{q^2}
\leq
\sum_{w\in\Delta_j}\wt_{j,w}
\leq 1,
$$
so
$$
|\Delta_j|\leq q^2 2^{3n}.
$$
The sets $\Delta_1,\ldots,\Delta_q$ are disjoint.  Summing  over  $q$ layers completes the proof.
\end{proof}

\begin{lemma}[Heavy-light projection bound]\label{lem:heavy-light}
Fix values $(\delta,g',r)$ in the support of $(\Delta_{q+1},G',R)$ and suppose
$$
\max_{w\in\bits^m}
\Prob[W=w\mid G'=g',R=r]
\leq 2^{-4n}.
$$
Then, conditioned on $(\Delta_{q+1},G',R)=(\delta,g',r)$, except with probability at most $\exp(-2^{2n+1})$,
$$
\norm{\Pi_{G,x}|\psi^{g',r}_q\rangle}^{2}
<
q^3 2^{-n}+2^{-n+1}.
$$
\end{lemma}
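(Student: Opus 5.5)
We split the projection according to whether $w$ lies in the heavy set $\bigcup_{j\le q}\Delta_j=\bits^m\setminus\delta$ or in the light set $\delta=\Delta_{q+1}$. Write $p_w:=\Prob[W=w\mid G'=g',R=r]=\norm{(\ketbra{w}{w}\otimes\mathbbm{I})|\psi^{g',r}_q\rangle}^2$. Since $\Pi_{G,x}$ is diagonal in the $w$ basis,
$$
\norm{\Pi_{G,x}|\psi^{g',r}_q\rangle}^{2}
=\sum_{w\notin\delta}D(w)\,p_w+\sum_{w\in\delta}D(w)\,p_w .
$$
We bound the two sums separately.

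\textbf{Heavy part.} We use the crude bound $D(w)\le 1$. By Lemma~\ref{lem:heavy-size}, $|\bits^m\setminus\delta|\le q^32^{3n}$, and by hypothesis $p_w\le 2^{-4n}$. Hence the heavy sum is at most $q^32^{3n}\cdot2^{-4n}=q^32^{-n}$ deterministically. Note that conditioning on $(\Delta_{q+1},G',R)$ fixes $p_w$: $g'$ and $r$ determine $|\psi^{g',r}_q\rangle$. It also fixes $D$ on the heavy set, since there $D=D'$ is read off from $g'$.

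\textbf{Light part.} For $w\in\delta$, the bits $D(w)$ are sampled at step 6 of Definition~\ref{def:completion}, independently of $D'(w)$ and of everything before. We must check that conditioning on $(\Delta_{q+1},G',R)$ leaves $\{D(w)\}_{w\in\delta}$ i.i.d.\ Bernoulli$(2^{-n})$. The conditioned values are functions of $(R,E_0,D|_{\text{heavy}},D'|_\delta)$. On $\delta$, $G'$ is a function of $(E_0,D')$ only. The set $\delta$ is determined by the history before step 6. By the chain-rule argument in the proof of Lemma~\ref{lem:uniform}, the fresh bits $D|_\delta$ are independent of this entire history and of $D'|_\delta$. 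Hence, conditioned on $(\delta,g',r)$, the variables $Z_w:=D(w)p_w$ for $w\in\delta$ are independent with $0\le Z_w\le p_w$.

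Their mean is $\Exp[\sum_{w\in\delta}Z_w]=2^{-n}\sum_{w\in\delta}p_w\le 2^{-n}$. The sum of squared ranges is $\sum_{w\in\delta}p_w^2\le 2^{-4n}\sum_w p_w\le 2^{-4n}$. Apply Hoeffding (Theorem~\ref{thm:hoeffding-inequality}) with $t=2^{-n}$:
$$
\Prob\Big[\sum_{w\in\delta}Z_w\ge 2^{-n+1}\Big]\le\exp\!\left(-\frac{2\cdot2^{-2n}}{2^{-4n}}\right)=\exp(-2^{2n+1}).
$$

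Adding the two parts gives, except with probability $\exp(-2^{2n+1})$,
$$
\norm{\Pi_{G,x}|\psi^{g',r}_q\rangle}^2<q^32^{-n}+2^{-n+1}.
$$
To get the strict inequality, apply Hoeffding in the form $\ge$ and take the complementary event.

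The main obstacle is the conditional independence in the light part. $G'$ on $\delta$ reveals $E_0(w)$ whenever $D'(w)=0$, and $D'(w)$ is fresh and independent of $D(w)$, so it carries no information about $D(w)$. One must still argue carefully that conditioning on the adaptively chosen set $\delta$ does not bias $D|_\delta$. This follows because $\delta$ is measurable with respect to the history preceding the sampling of $D|_\delta$.
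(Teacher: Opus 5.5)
Your proof is correct and follows essentially the same route as the paper. You bound the heavy part deterministically by $|\bits^m\setminus\delta|\cdot 2^{-4n}\le q^3 2^{-n}$ via Lemma~\ref{lem:heavy-size}, and the light part by Hoeffding with $t=2^{-n}$ and $\sum_{w\in\delta}p_w^2\le 2^{-4n}$, using that the step-6 bits $D|_\delta$ remain independent Bernoulli$(2^{-n})$ after conditioning on $(\Delta_{q+1},G',R)$; your justification of that conditional independence (the set $\delta$ is fixed by the pre-step-6 history, and $G'|_\delta$ depends only on $(E_0,D')$) is a slightly more explicit version of the paper's one-line argument.
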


\begin{proof}
Let
$$
\mathsf H:=\bits^m\setminus\delta
=
\bigcup_{j=1}^{q}\Delta_j
$$
be the heavy set.  By Lemma~\ref{lem:heavy-size},
$$
|\mathsf H|\leq q^3 2^{3n}.
$$
Let
$$
b_w:=\left\|\left(|w\rangle\langle w| \otimes \mathbbm{I}\right)\mid\psi^{g',r}_q\rangle\right\|^2
=
\Prob[W=w\mid G'=g',R=r]
$$
for accepted $w\in\bits^m$.  The total mass on the heavy set is bounded by
$$
\sum_{w\in\mathsf H}b_w
\leq
|\mathsf H|\max_w b_w
\leq
q^3 2^{3n}\cdot 2^{-4n}
=
q^3 2^{-n}.
$$
This is an upper bound even if every heavy point happens to satisfy $G(w)=x$.

It remains to control the light set $\delta=\Delta_{q+1}$.  On every light coordinate, the bit $D(w)$ used by $G$ was sampled independently of the bit $D'(w)$ used by $G'$, independently of $E_0$, and independently across coordinates.  The set $\Delta_{q+1}$ is determined before these light bits are sampled.    Thus, even after conditioning on $(\Delta_{q+1},G',R)=(\delta,g',r)$, the family
$$
(D(w))_{w\in\delta}
$$
consists of independent Bernoulli variables of mean $2^{-n}$.

Define
$$
Z:=\sum_{w\in\delta}D(w)b_w.
$$
This is exactly the probability mass of the fixed state $|\psi^{g',r}_q\rangle$ on light coordinates for which $G(w)=x$.  Its conditional expectation is
$$
\begin{aligned}
\Exp[Z\mid\delta,g',r]
&=
\sum_{w\in\delta}b_w\Exp[D(w)\mid\delta,g',r]
\\
&=
2^{-n}\sum_{w\in\delta}b_w
\\
&\leq 2^{-n}.
\end{aligned}
$$
For each $w\in\delta$, the summand $D(w)b_w$ lies in the interval $[0,b_w]$.  Moreover,
$$
\begin{aligned}
\sum_{w\in\delta}b_w^2
&\leq
\left(\max_{w\in\delta}b_w\right)
\sum_{w\in\delta}b_w
\\
&\leq
2^{-4n}\cdot 1
\\
&=2^{-4n}.
\end{aligned}
$$
If $Z\geq 2^{-n+1}$, then
$$
Z\geq \Exp[Z\mid\delta,g',r]+2^{-n}.
$$
Hoeffding's inequality therefore gives
$$
\begin{aligned}
\Prob[Z\geq2^{-n+1}\mid\delta,g',r]
&\leq
\exp\!\left(
-\frac{2(2^{-n})^2}{\sum_{w\in\delta}b_w^2}
\right)
\\
&\leq
\exp\!\left(
-\frac{2\cdot2^{-2n}}{2^{-4n}}
\right)
\\
&=
\exp(-2^{2n+1}).
\end{aligned}
$$
This completes the proof.
\end{proof}

\subsubsection{A tail bound for one fixed output}

Define
$$
K_q:=2^6(q+1)^3.
$$
and 
define 
$$
\varepsilon_{\mathrm{flat}}
:=
\Prob_{G,R}\!\left[
\max_{w\in\bits^m}
\Prob[W=w\mid G,R]
>
2^{-4n}
\right].
$$

\begin{lemma}[One fixed output]\label{lem:fixed-output}
For every fixed $x\in\bits^n$,
$$
\Prob_{G,R}\!\left[
\Prob[Y=x\mid G,R]
\geq K_q2^{-n}
\right]
\leq
3q\exp(-2^{n+1})+\varepsilon_{\mathrm{flat}}.
$$
\end{lemma}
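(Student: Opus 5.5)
We will work inside the target-$x$ completion procedure of Definition~\ref{def:completion}. By Lemma~\ref{lem:uniform}, conditioned on every $R=r$ the function $G$ produced there is uniform. So the joint law of $(G,R)$ in the procedure coincides with the real experiment, and it suffices to bound the probability of the event $\Prob[Y=x\mid G,R]\geq K_q2^{-n}$ over this coupled sampling. The same holds for $(G',R)$. Hence $\Prob_{G',R}[\max_w\Prob[W=w\mid G',R]>2^{-4n}]=\varepsilon_{\mathrm{flat}}$ as well. This is how the flatness hypothesis of Lemma~\ref{lem:heavy-light} will be paid for.

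The core step is the projection identity $\Prob[Y=x\mid G,R]=\norm{\Pi_{G,x}|\psi^{G,R}_q\rangle}^2$. Since $\Pi_{G,x}$ is a projector, we have $\norm{\Pi_{G,x}|\psi^{G,R}_q\rangle}\leq \norm{\Pi_{G,x}|\psi^{G',R}_q\rangle}+\norm{|\psi^{G,R}_q\rangle-|\psi^{G',R}_q\rangle}$. Applying $(a+b)^2\leq 2a^2+2b^2$ then gives
\[
\Prob[Y=x\mid G,R]\leq 2\norm{\Pi_{G,x}|\psi^{G',R}_q\rangle}^2+2\norm{|\psi^{G,R}_q\rangle-|\psi^{G',R}_q\rangle}^2 .
\]
We define three bad events:
\begin{itemize}
\item $B_1$: the flatness bound $\max_w\Prob[W=w\mid G',R]\leq 2^{-4n}$ fails. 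Its probability is $\varepsilon_{\mathrm{flat}}$.
\item $B_2$: the distance bound of Lemma~\ref{lem:distance-tail} fails. Its probability is at most $q\exp(-2^{n+1})$.
\item $B_3$: flatness holds but the projection bound of Lemma~\ref{lem:heavy-light} fails.
\end{itemize}
For $B_3$, we condition on $(\Delta_{q+1},G',R)$ and apply Lemma~\ref{lem:heavy-light} on the flat values, then average. This gives probability at most $\exp(-2^{2n+1})\leq q\exp(-2^{n+1})$.

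Outside $B_1\cup B_2\cup B_3$ we get
\[
\Prob[Y=x\mid G,R]< 2(q^3+2)2^{-n}+2q^2(q+1)2^{-n+4}.
\]
We check that this is at most $2^6(q+1)^3 2^{-n}=K_q2^{-n}$. Indeed $2(q^3+2)\leq 4(q+1)^3$ and $2^5q^2(q+1)\leq 2^5(q+1)^3$, and $4+32\leq 64$. The union bound then yields a total failure probability of $q\exp(-2^{n+1})+q\exp(-2^{n+1})+\varepsilon_{\mathrm{flat}}$, which is at most the claimed $3q\exp(-2^{n+1})+\varepsilon_{\mathrm{flat}}$.

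The main obstacle is the conditioning in $B_3$. Lemma~\ref{lem:heavy-light} is stated conditionally on $(\Delta_{q+1},G',R)$, and its hypothesis is measurable with respect to $(G',R)$. We must therefore check that the randomness used, namely the light bits $D|_{\Delta_{q+1}}$, stays product Bernoulli given $G'$. This was argued inside Lemma~\ref{lem:heavy-light}. It also requires that $|\psi^{G',R}_q\rangle$ is a fixed state once $(G',R)$ is fixed, which holds because the state is determined by $g'$ and $r$ alone. A further minor point is that the $\bot$ branch lies orthogonal to $\Pi_{G,x}$, so $b_w$ sums to at most $1$.
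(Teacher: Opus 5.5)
Your proposal is correct and follows essentially the same route as the paper's proof: the target-$x$ coupling of Definition~\ref{def:completion} with Lemma~\ref{lem:uniform} to transfer the marginals, the same three bad events (distance tail, flatness failure for $G'$ costing $\varepsilon_{\mathrm{flat}}$, and the light-completion event of Lemma~\ref{lem:heavy-light}), the projector triangle inequality with $(a+b)^2\leq 2a^2+2b^2$, and the same union bound. The only difference is bookkeeping: you absorb $\exp(-2^{2n+1})$ into $q\exp(-2^{n+1})$ and verify $2(q^3+2)+2^5q^2(q+1)\leq 2^6(q+1)^3$ term by term, which matches the paper's constant comparison.
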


\begin{proof}
Run the coupling of Definition~\ref{def:completion} for the fixed target $x$.  By Lemma~\ref{lem:uniform}, both $(G,R)$ and $(G',R)$ have the correct marginal distribution.  Let $\mathsf E_{\mathrm{dist}}$ be the distance-tail event from Lemma~\ref{lem:distance-tail}.  Then
$$
\Prob[\mathsf E_{\mathrm{dist}}]
\leq q\exp(-2^{n+1}).
$$
Let $\mathsf E_{\mathrm{flat}}$ be the event that, 
\[
\max_{w\in\bits^m}
\Prob[W=w\mid G',R]
>
2^{-4n}
\]
 Since $(G',R)$ has the same marginal distribution as $(G,R)$,
$$
\Prob[\mathsf E_{\mathrm{flat}}]
=\varepsilon_{\mathrm{flat}}.
$$
On the complement of $\mathsf E_{\mathrm{flat}}$, Lemma~\ref{lem:heavy-light} applies.  Let $\mathsf E_{\mathrm{proj}}$ be its exceptional light-completion event.  Averaging its conditional bound over $(\Delta_{q+1},G',R)$ gives
$$
\Prob[\mathsf E_{\mathrm{proj}}\wedge\neg\mathsf E_{\mathrm{flat}}]
\leq \exp(-2^{2n+1}).
$$
Assume none of these three bad events occurs.  Then, 
$$
\begin{aligned}
\sqrt{\Prob[Y=x\mid G,R]}
&=
\norm{\Pi_{G,x}|\psi^{G,R}_q\rangle}
\\
&\leq
\norm{\Pi_{G,x}(|\psi^{G,R}_q\rangle-|\psi^{G',R}_q\rangle)}
+
\norm{\Pi_{G,x}|\psi^{G',R}_q\rangle}
\\
&\leq
\norm{|\psi^{G,R}_q\rangle-|\psi^{G',R}_q\rangle}
+
\norm{\Pi_{G,x}|\psi^{G',R}_q\rangle}.
\end{aligned}
$$
Set
$$
a:=\norm{|\psi^{G,R}_q\rangle-|\psi^{G',R}_q\rangle},
\qquad
b:=\norm{\Pi_{G,x}|\psi^{G',R}_q\rangle}.
$$
Using $(a+b)^2\leq2a^2+2b^2$, Lemma~\ref{lem:distance-tail}, and Lemma~\ref{lem:heavy-light},
$$
\begin{aligned}
\Prob[Y=x\mid G,R]
&\leq 2a^2+2b^2
\\
&<
2q^2(q+1)2^{-n+4}
+2\bigl(q^3 2^{-n}+2^{-n+1}\bigr)
\\
&=
\bigl(32q^2(q+1)+2q^3+4\bigr)2^{-n}
\\
&\leq
2^6(q+1)^3 2^{-n}
\\
&=K_q2^{-n}.
\end{aligned}
$$
The penultimate inequality holds for every $q\geq1$.

Therefore we can conclude that
$$
\begin{aligned}
\Prob\!\left[
\Prob[Y=x\mid G,R]\geq K_q2^{-n}
\right]
&\leq
q\exp(-2^{n+1})
+\varepsilon_{\mathrm{flat}}
+\exp(-2^{2n+1})
\\
&\leq
3q\exp(-2^{n+1})
+\varepsilon_{\mathrm{flat}}.
\end{aligned}
$$
This proves the lemma.
\end{proof}

\subsubsection{Hashing after the label is fixed}
\label{sec:lab}
The preceding subsection considers only one source $W$ and one fresh random function.  We now allow a first stage to output a label $C$ and a residual state before the fresh function is sampled.  The main issue is that joint point-flatness of $(C,W)$ does not imply point-flatness of $W$ conditioned on an arbitrarily rare label.  We handle rare labels by their total probability mass and apply the fixed-output lemma (Lemma \ref{lem:fixed-output}) only to non-rare labels.

Consider the following two-stage experiment.
\begin{enumerate}
\item Sample an auxiliary $H$ from an arbitrary distribution, and independently sample auxiliary classical side information $R$.  The variable $R$ may specify complete descriptions of other random oracles and arbitrary initial states.  It is sampled before the fresh slice below.  Both $R$ and $H$ are independent of the fresh function $\Theta$.

\item A first-stage algorithm, given $R$ and query access to $H$, outputs a classical label
$$
C\in\bits^{\ell}
$$
and a residual quantum state.

\item Independently of $(R,H)$, sample
$$
\Theta:\bits^m\longrightarrow\bits^n
$$
uniformly at random.

\item The second-stage algorithm receives quantum oracle access to $\Theta$, retains access to $H$ and the resources indexed by $R$, makes at most $q$ queries to $\Theta$, and outputs a proof $\Pi$.

\item Compute
$$
W:=\mathsf{MinEnt.Verify}^{H}(1^{\lambda^{\star}},1^h,\Pi),
$$
$$
Y:=
\begin{cases}
\Theta(W),&W\neq\bot,\\
\bot,&W=\bot.
\end{cases}
$$
\end{enumerate}

\noindent For fixed $(R,H)=(r,\widetilde h)$, define 
$$
r_c(r,\widetilde h)
:=
\Prob[C=c\mid R=r,H=\widetilde h].
$$
$C$ is produced before the independent function $\Theta$ is sampled, so for all $\theta$,
$$
\Prob[C=c\mid R=r,\Theta=\theta,H=\widetilde h]
=r_c(r,\widetilde h).
$$
For fixed $(r,\theta,\widetilde h)$, all probabilities below are over the randomness  of the two-stage algorithm.

\begin{lemma}\label{lem:fresh-slice}
For any $\beta,\varepsilon\in[0,1]$, suppose that
$$
\Prob_{R,\Theta,H}\!\left[
\max_{\substack{c\in\bits^{\ell}\\w\in\bits^m:w\neq\bot}}
\Prob[C=c,W=w\mid R,\Theta,H]
>
\beta
\right]
\leq\varepsilon,
$$
Then

\begin{equation}\label{eq1}
\begin{split}
&\Exp_{R,\Theta,H}\!\left[
\sum_{c\in\bits^{\ell}}
\max_{x\in\bits^n}
\Prob[C=c,Y=x\mid R,\Theta,H]
\right]
\\
&\quad\quad\quad\quad\leq
K_q2^{-n}
+2^{\ell+4n}\beta
+2^n\left[3q\exp(-2^{n+1})+\varepsilon\right].
\end{split}
\end{equation}

\end{lemma}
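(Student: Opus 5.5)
Fix $(r,\widetilde h)$ and split the labels into non-rare ones, $\mathcal L:=\{c: r_c(r,\widetilde h)\geq 2^{4n}\beta\}$, and rare ones. Since $r_c$ does not depend on $\theta$, this split is determined before $\Theta$ is sampled.

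\textbf{Rare labels.} Bound $\max_x\Prob[C=c,Y=x\mid r,\theta,\widetilde h]\leq r_c$. Summing over at most $2^\ell$ rare labels gives at most $2^{\ell+4n}\beta$. This is deterministic, so it survives the expectation.

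\textbf{Non-rare labels.} For $c\in\mathcal L$, write
\[
\max_x \Prob[C=c,Y=x\mid r,\theta,\widetilde h]=r_c\,\max_x\Prob[Y=x\mid C=c,r,\theta,\widetilde h].
\]
Conditioned on $C=c$ and the residual state, the second stage is exactly an algorithm of the form in the query-normal-form subsection. Its auxiliary variable is $R_c:=(r,\widetilde h,\text{residual state given }C=c)$, which is independent of the fresh function $\Theta$. So Lemma~\ref{lem:fixed-output} applies to this conditioned experiment. Its flatness hypothesis uses
\[
\Prob[W=w\mid C=c,r,\theta,\widetilde h]=\Prob[C=c,W=w\mid r,\theta,\widetilde h]/r_c .
\]
Whenever the joint maximum is at most $\beta$, this is at most $\beta/(2^{4n}\beta)=2^{-4n}$.

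Let $B$ be the event that the joint maximum exceeds $\beta$, so $\Prob[B]\leq\varepsilon$. Define $\varepsilon_{\mathrm{flat}}(c)$ as the probability over $\Theta$ (given $r,\widetilde h$) that the conditional flatness fails. Then $\varepsilon_{\mathrm{flat}}(c)\leq\Prob[B\mid r,\widetilde h]$ for every non-rare $c$.

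Take a union bound over the $2^n$ targets $x$ in Lemma~\ref{lem:fixed-output}. Let $\mathsf{Bad}_c$ be the event that some $x$ has $\Prob[Y=x\mid C=c,\ldots]\geq K_q2^{-n}$. Then
\[
\Prob_\Theta[\mathsf{Bad}_c]\leq 2^n\bigl[3q e^{-2^{n+1}}+\Prob[B\mid r,\widetilde h]\bigr].
\]
Off $\mathsf{Bad}_c$, the label $c$ contributes at most $r_cK_q2^{-n}$. On $\mathsf{Bad}_c$, it contributes at most $r_c$. Summing over $c\in\mathcal L$ with weights $r_c$, and using $\sum_c r_c\leq 1$, gives
\[
\Exp_\Theta\bigl[\text{non-rare part}\bigr]\leq K_q2^{-n}+2^n\bigl[3qe^{-2^{n+1}}+\Prob[B\mid r,\widetilde h]\bigr].
\]
The key observation here is that the union bound over $c$ is avoided by weighting each label's failure probability by $r_c$. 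Averaging over $(R,H)$ turns $\Prob[B\mid r,\widetilde h]$ into $\varepsilon$, which yields \eqref{eq1}.

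\textbf{Main obstacle.} The delicate step is justifying that, conditioned on $C=c$, the second stage fits Lemma~\ref{lem:fixed-output}. This requires that the post-selected residual state be a legitimate initial state independent of $\Theta$, which holds because $C$ is measured before $\Theta$ is sampled. It also requires that $\bot$ outputs are harmless, which they are since the projector excludes $\bot$.

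A second point to check is that Lemma~\ref{lem:fixed-output} defines $\varepsilon_{\mathrm{flat}}$ as a probability over $(G,R)$. I would apply it with $R$ fixed to the conditioned auxiliary value, so that $\varepsilon_{\mathrm{flat}}$ becomes a probability over $\Theta$ alone. The lemma's proof only uses the marginal of $G'$ given $R$, so this specialization is valid.
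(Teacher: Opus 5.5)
Your proposal is correct and follows essentially the same route as the paper. It uses the same rare/non-rare split at threshold $2^{4n}\beta$ with the same $2^{\ell+4n}\beta$ bound for rare labels. It applies Lemma~\ref{lem:fixed-output} to the second stage conditioned on $C=c$, with the conditional flatness failure contained in the joint-flatness failure, and it weights by $r_c$ to avoid a union bound over labels. The only difference is cosmetic: you union-bound over the $2^n$ targets $x$ for each label before summing over $c$, whereas the paper fixes $x$, sums over $c$, and then sums over $x$, which gives the identical bound.
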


\begin{proof}
Set
$$
\tau:=2^{4n}\beta.
$$
For fixed $(r,\widetilde h)$, call a label $c$ \emph{small} when
$$
r_c(r,\widetilde h)<\tau.
$$
There are at most $2^{\ell}$ labels, so
\begin{equation}\label{eq:small_label}
\sum_{c:r_c(r,\widetilde h)<\tau}
r_c(r,\widetilde h)
\leq
2^{\ell}\tau
=
2^{\ell+4n}\beta.
\end{equation}
For every fixed $(r,\theta,\widetilde h)$ and every label $c$,
$$
\max_x\Prob[C=c,Y=x\mid r,\theta,\widetilde h]
\leq
\Prob[C=c\mid r,\theta,\widetilde h]
=r_c(r,\widetilde h).
$$
Therefore all small labels together contribute at most $2^{\ell+4n}\beta$ to the expression in equation \ref{eq1}.

\noindent  Now consider a label $c$ such that $r_c(r,\widetilde h)\geq\tau.$ Because the first stage is completed before $\Theta$ is sampled, the conditioned state is independent of the fresh function $\Theta$.  

\noindent Fix  $(r,\theta,\widetilde h)$. Let $\mathsf J$ be the following event: 
$$ J(r, \theta, \widetilde{h}):=
\left\{\max_{c',w}
\Prob[C=c',W=w\mid r,\theta,\widetilde h]
\leq\beta.\right\}
$$
If $\mathsf J$ holds, then for every accepted $w$,
$$
\begin{aligned}
&\Prob[W=w\mid C=c,R=r,\Theta=\theta,H=\widetilde h]
\\
&\qquad=
\frac{
\Prob[C=c,W=w\mid R=r,\Theta=\theta,H=\widetilde h]
}{
\Prob[C=c\mid R=r,\Theta=\theta,H=\widetilde h]
}
\\
&\qquad=
\frac{
\Prob[C=c,W=w\mid r,\theta,\widetilde h]
}{r_c(r,\widetilde h)}
\\
&\qquad\leq
\frac{\beta}{\tau}
=2^{-4n}.
\end{aligned}
$$

\noindent Fix an output value $x\in\bits^n$.  For a non-small label $c$, define
$$
p_{c,x}(r,\theta,\widetilde h)
:=
\Prob[Y=x\mid C=c,R=r,\Theta=\theta,H=\widetilde h].
$$
Define $\mathsf F_c$ to be the following event:
$$
\mathsf F_c(r,\theta,\widetilde h)
:=
\left\{
\max_w
\Prob[W=w\mid C=c,r,\theta,\widetilde h]
>2^{-4n}
\right\}.
$$
 Then it implies that
 $\mathsf F_c\subseteq \neg J$ for every such $c$.
 Set $\varepsilon_{\text{flat}}:= \Prob_{\Theta}[\mathsf F_c].$  For fixed $(r, \tilde{h},c)$ we can apply Lemma~\ref{lem:fixed-output} to the second-stage algorithm initialized in the residual state conditioned on $(R,H,C)= (r, \tilde{h},c)$. The initial state is independent of the uniform function $\Theta$, while $\tilde{h}$ and $r$ are fixed auxiliary data. Thus Lemma~\ref{lem:fixed-output} applies regardless of the distribution from which $H$ was sampled. Therefore, 
$$
\Prob_{\Theta}\!\left[
p_{c,x}(r,\Theta,\widetilde h)
\geq K_q2^{-n}
\right]
\leq
3q\exp(-2^{n+1})
+
\Prob_{\Theta}[\mathsf F_c(r,\Theta,\widetilde h)].
$$
Then we multiply $\Prob_{\Theta}[\mathsf F_c(r,\Theta,\widetilde h)]$ by $r_c$, sum over non-small labels $c$, and average over $(R,H)$ to obtain:
$$
\begin{aligned}
\Exp_{R,H,\Theta}\!\left[
\sum_{c:r_c\geq\tau}
r_c
\mathbf{1}[\mathsf F_c]
\right].
\end{aligned}
$$
If $r_c\geq\tau$ and $\mathsf F_c$ occurs, then $\mathsf J$  must fail by definition.  Hence, 
$$
\sum_{c:r_c\geq\tau}
r_c
\mathbf{1}[\mathsf F_c]
\leq
\mathbf{1}[\neg\mathsf J]
\sum_c r_c
=
\mathbf{1}[\neg\mathsf J].
$$
Therefore, for every fixed $x$,
$$
\begin{aligned}
&\Exp_{R,\Theta,H}\!\left[
\sum_{c:r_c(R,H)\geq\tau}
r_c(R,H)
\mathbf{1}[
p_{c,x}(R,\Theta,H)\geq K_q2^{-n}
]
\right]
\\
& \qquad\leq 3q\exp(-2^{n+1})+ \Pr_{R,\Theta, H}[\neg J]\\
&\qquad\leq
3q\exp(-2^{n+1})+\varepsilon.
\end{aligned}
$$
For fixed $(r,\theta,\widetilde h)$ and a non-small label $c$, define $\mathsf B_c$ to be the following event
$$
\mathsf B_c
:=\left\{\max_{x\in\bits^n}p_{c,x}(r,\theta,\widetilde h)
\geq K_q2^{-n}\right\}
.$$
Then
$$
\max_x p_{c,x}(r,\theta,\widetilde h)
\leq
K_q2^{-n}+
\mathbf{1}[\mathsf B_c].
$$
Furthermore,
$$
\mathbf{1}[\mathsf B_c]
\leq
\sum_{x\in\bits^n}
\mathbf{1}[
p_{c,x}(r,\theta,\widetilde h)
\geq K_q2^{-n}
].
$$
Since
$$
\Prob[C=c,Y=x\mid r,\theta,\widetilde h]
=r_c(r,\widetilde h)p_{c,x}(r,\theta,\widetilde h),
$$
the contribution of all non-small labels is at most
$$
\begin{aligned}
&\sum_{c:r_c\geq\tau}
\max_x\Prob[C=c,Y=x\mid r,\theta,\widetilde h]
\\
&\qquad=
\sum_{c:r_c\geq\tau}
r_c(r,\widetilde h)
\max_x p_{c,x}(r,\theta,\widetilde h)
\\
&\qquad\leq
K_q2^{-n}
\sum_{c:r_c\geq\tau}r_c(r,\widetilde h)
+
\sum_{c:r_c\geq\tau}r_c(r,\widetilde h)
\mathbf{1}[\mathsf B_c]
\\
&\qquad\leq
K_q2^{-n}
+
\sum_{x\in\bits^n}
\sum_{c:r_c\geq\tau}
r_c(r,\widetilde h)
\mathbf{1}[
p_{c,x}(r,\theta,\widetilde h)\geq K_q2^{-n}
].
\end{aligned}
$$
Then

\begin{align*}
&\Exp_{R,\Theta,H}\!\left[
\sum_{c:r_c\geq\tau}
\max_x\Prob[C=c,Y=x\mid R,\Theta,H]
\right]
\\ &\qquad\leq K_q2^{-n}+ \sum_{x\in \{0,1\}^n}\Exp_{R,\Theta,H}\!\left[
\sum_{c:r_c(R,H)\geq\tau}
r_c(R,H)
\mathbf{1}[
p_{c,x}(R,\Theta,H)\geq K_q2^{-n}
]
\right]\\
&\qquad\leq
K_q2^{-n}
+2^n\left(3q\exp(-2^{n+1})+\varepsilon\right).
\end{align*}
Finally, adding the small label contribution from equation \ref{eq:small_label} completes the proof. 

\end{proof}
\noindent For Lemma 17 and Corollary 1, $H$ is sampled as a uniformly random oracle, independently of $R$; the function $\Theta$ remains fresh, uniform, and independent.
\begin{lemma}\label{lem:lem4_expsmallthreshold}
    Let $(\mathsf{MinEnt.Prove},\mathsf{MinEnt.Verify})
$
be the  proof of min-entropy supplied by   \cref{thm:YZ-proof-of-min-entropy}. Let 
\[
a(\widetilde{h}):= \Pr[\mathsf{MinEnt.Verify}^{\widetilde{h}}(1^{\lambda^\star}, 1^h, \pi)\neq \bot: \pi\leftarrow \mathsf{MinEnt.Prove}^{\widetilde{h}}(1^{\lambda^\star},1^h)]
\]
and define 
\[
\kappa(\lambda):= \mathbb{E}_{H}[1-a(H)], \text{Good}:= \{\widetilde{h}: a(\widetilde{h})\geq \frac{1}{2}\}
\]
Then 
\begin{equation}\label{eq:kap}
\Pr_{H}[H \notin \text{Good}]\leq 2\kappa(\lambda)\leq   \negl(\lambda)
\end{equation}
\end{lemma}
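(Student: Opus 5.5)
The statement has two inequalities, and both are elementary. I would prove the first by Markov's inequality and the second by correctness of the inner proof of min-entropy.

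For the first inequality, set $Z := 1 - a(H)$. Since $a(\widetilde h)$ is a probability, $Z$ is a random variable with values in $[0,1]$, and by definition $\Exp_H[Z] = \kappa(\lambda)$. The event $H\notin \text{Good}$ means $a(H) < \frac12$, which is the same as $Z > \frac12$. Markov's inequality applied to the nonnegative variable $Z$ then gives
\[
\Pr_H[H\notin\text{Good}] \leq \Pr_H\!\left[Z \geq \tfrac12\right] \leq \frac{\Exp_H[Z]}{1/2} = 2\kappa(\lambda).
\]

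For the second inequality, I would rewrite $\kappa(\lambda)$ as an unconditional probability. Averaging the rejection probability over $H$ gives
\[
\kappa(\lambda) = \Pr_{H}\left[\mathsf{MinEnt.Verify}^{H}(1^{\lambda^\star},1^h,\pi)=\bot : \pi\leftarrow\mathsf{MinEnt.Prove}^{H}(1^{\lambda^\star},1^h)\right].
\]
This is exactly the quantity that the correctness clause of \cref{def:proof-of-min-entropy-YZ} bounds. The scheme supplied by \cref{thm:YZ-proof-of-min-entropy} is correct, so $\kappa(\lambda) \leq \negl'(\lambda^\star)$ for some negligible function $\negl'$.

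The only point needing care is that correctness holds at the inner parameters $(\lambda^\star, h)$ rather than at $(\lambda, h)$. I would handle this in two steps.

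\begin{enumerate}
\item Recall $\lambda^\star = \lambda + 3n + \ell + 10$ and $h = 4n + \ell + \lambda^\star + 2$. Both are polynomial in $\lambda$, and $\lambda^\star \geq \lambda$.
\item Correctness is stated for every function $h$ of the security parameter. Here $h$ is a function of $\lambda$, and since $\lambda^\star$ is increasing in $\lambda$, it can be written as a function of $\lambda^\star$, so correctness applies. A negligible function of $\lambda^\star$ with $\lambda^\star \geq \lambda$ is also negligible in $\lambda$ (taking the monotone envelope of $\negl'$ if needed).
\end{enumerate}

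Together these give $2\kappa(\lambda) \leq \negl(\lambda)$.
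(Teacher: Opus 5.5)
Your proposal is correct and follows the paper's proof. The paper likewise applies Markov's inequality to $1-a(H)$ to get $\Pr_H[H\notin\text{Good}]\le 2\kappa(\lambda)$, and invokes correctness of the inner scheme to make $\kappa$ negligible. Your extra care in passing from the inner parameter $\lambda^\star$ to $\lambda$ goes beyond what the paper writes; note only that calling $\lambda^\star$ increasing in $\lambda$ (so that $h$ becomes a function of $\lambda^\star$) tacitly assumes $n$ and $\ell$ are nondecreasing, a point the paper also passes over.
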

\begin{proof}
    Correctness of $(\mathsf{MinEnt.Prove},\mathsf{MinEnt.Verify})
$ makes $\kappa$ negligible. Markov's inequality implies that
    \begin{align*}
       & \Pr_{H}[H \notin \text{Good}]\leq 2\kappa(\lambda)\\
        &\leq \negl(\lambda)
    \end{align*}
\end{proof}
\begin{corollary}[Fresh-slice bound for the labelled construction]\label{cor:fresh-construction}
Suppose the two-stage algorithm above is followed by the  verifier from Theorem \ref{thm:YZ-proof-of-min-entropy}, instantiated at the parameters of Construction~\ref{constr:maxent}.  Then
$$
\Exp_{R,\Theta,H}\!\left[\mathbbm 1_{H \in \text{Good}} \cdot \sum_c\max_x
\Prob[C=c,Y=x\mid R,\Theta,H]
\right]
\leq
K_q2^{-n}+\eta,
$$
where
$$ 
\eta := 2^{\ell+4n+1-h}+2^n \nu(\lambda^\star)^h+ 3q\,2^n\exp(-2^{n+1})
$$
 
For all sufficiently large $\lambda$, 
$$
\eta
\leq
(2^n+1)2^{-\lambda^{\star}}
+3q\,2^n\exp(-2^{n+1}).
$$
\end{corollary}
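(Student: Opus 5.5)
The plan is to apply Lemma~\ref{lem:fresh-slice} with $\beta := 2^{1-h}$ and an $\varepsilon$ obtained from strong soundness of $\mathsf{MinEnt}$. The only subtlety is that Lemma~\ref{lem:Pr-X-and-C-is-small} bounds probabilities \emph{conditioned on acceptance}. The lemma, by contrast, needs an unconditional point bound $\Pr[C=c,W=w\mid R,\Theta,H]\le\beta$. We handle this with the indicator $\mathbbm 1_{H\in\text{Good}}$ from Lemma~\ref{lem:lem4_expsmallthreshold}, or more directly by case analysis on the acceptance probability.

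First, fix $(r,\theta,\widetilde h)$ and write $\alpha:=\Pr[W\neq\bot\mid r,\theta,\widetilde h]$. For every accepted $w$ and every $c$ we have
\[
\Pr[C=c,W=w\mid r,\theta,\widetilde h]=\alpha\cdot\Pr[C=c,W=w\mid W\neq\bot,r,\theta,\widetilde h].
\]
We split into two cases.

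\begin{itemize}
\item If $\alpha<2^{1-h}$, the joint point probability is at most $\alpha<\beta$ automatically.
\item If $\alpha\ge 2^{1-h}$, note that the combined two-stage algorithm is a $\mathsf{MinEnt}$ adversary. With $(R,\Theta)$ viewed as fixed auxiliary data, it outputs $(\Pi,C)$ and makes $\poly$ queries to $H$, so Lemma~\ref{lem:Pr-X-and-C-is-small} applies.
\end{itemize}

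In the second case there is a complication. The threshold $2^{1-h}$ is not inverse polynomial, so Lemma~\ref{lem:Pr-X-and-C-is-small} cannot be used with $\delta=2^{1-h}$ directly. I would therefore invoke it with an inverse-polynomial $\delta$ and handle the remaining regime $2^{1-h}\le\alpha<\delta$ separately. The cleanest route I would try first: when $\alpha<\delta$, bound the joint point mass by $\alpha\cdot 2^{-h}$ using the conditional min-entropy bound, which requires soundness at that acceptance level. Alternatively, I would absorb this regime by noting that strong soundness as stated in Definition~\ref{def:proof-of-min-entropy-YZ}, via the Construction~6.4 bound with $\alpha=\delta\cdot 2^{-h}$, holds for $\delta$ as small as $2^{-h}$. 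I regard this step as the main obstacle, and would resolve it by rechecking that the $\nu^{h}$ bound in the proof of Theorem~\ref{thm:YZ-proof-of-min-entropy} survives for $\delta\ge 2^{-h}$: the factor $\delta^{-1}2^{h}\le 2^{2h}$ is still dominated.

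With that in hand, we get the following: except on an $H$-event of probability at most $\nu(\lambda^\star)^h$, after averaging over $(R,\Theta)$, conditioned acceptance with point mass $\ge 2^{-h}$ does not occur. On the complement, every $(c,w)$ satisfies $\Pr[C=c,W=w]\le\max(2^{1-h},\,1\cdot 2^{-h})\le\beta$. This gives $\varepsilon\le\nu(\lambda^\star)^h$. Restricting to $H\in\text{Good}$ only shrinks the left-hand expectation, since the summand is nonnegative, so the indicator costs nothing.

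Plugging into Lemma~\ref{lem:fresh-slice} gives
\[
K_q2^{-n}+2^{\ell+4n}2^{1-h}+2^n\bigl[3q\exp(-2^{n+1})+\nu(\lambda^\star)^h\bigr],
\]
which is exactly $K_q2^{-n}+\eta$.

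For the asymptotic estimate, use $h=4n+\ell+\lambda^\star+2$. This gives $2^{\ell+4n+1-h}=2^{-\lambda^\star-1}\le 2^{-\lambda^\star}$. Since $\nu$ is negligible, $\nu(\lambda^\star)\le 1/2$ for large $\lambda$, so $\nu(\lambda^\star)^h\le 2^{-h}\le 2^{-\lambda^\star}$. Summing the two terms yields $(2^n+1)2^{-\lambda^\star}$, and the $3q2^n\exp(-2^{n+1})$ term is carried over unchanged.
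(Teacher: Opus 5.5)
Your case split is sound, but everything rests on applying strong soundness at the acceptance threshold $\delta=2^{1-h}$, and nothing available supplies that. Definition~\ref{def:proof-of-min-entropy-YZ}, Theorem~\ref{thm:YZ-proof-of-min-entropy} and Lemma~\ref{lem:Pr-X-and-C-is-small} are all quantified only over inverse-polynomial $\delta$, and $\nu$ may depend on $\delta$. The inequality $[\beta-L(1-p)^{4n/5}]\alpha\le 2(1-p)^s$ taken from \cite{CKMRST26} is also asserted in the paper only for inverse-polynomial $\delta$. Checking that $\delta^{-1}2^h\le 2^{2h}$ is absorbed in the final arithmetic does not show that this inequality still holds at exponentially small $\delta$; for that you would have to re-open the external proof of \cite[Theorem~7.1]{CKMRST26}. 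Your ``cleanest route'' is circular: it bounds the joint mass by $\alpha\cdot 2^{-h}$ when $2^{1-h}\le\alpha<\delta$, which presupposes soundness at exactly that acceptance level. So the regime of small but non-negligible acceptance is left uncontrolled. If you could prove the extension, your split would work and would even make the indicator unnecessary. As written, though, it is an unproved strengthening of an external result.

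The missing idea is a padding trick, and it is exactly what the indicator $\mathbbm 1_{H\in\text{Good}}$ is for. The indicator is not a harmless restriction that ``costs nothing.'' For fixed $(r,\theta)$, let $\hat B_{r,\theta}$ do one of two things, each with probability $1/2$: run your two-stage adversary, or run the honest $\mathsf{MinEnt.Prove}$ and output label $0^\ell$. For $\widetilde h\in\text{Good}$, its acceptance probability $\hat p$ is at least $a(\widetilde h)/2\ge 1/4$, so Lemma~\ref{lem:Pr-X-and-C-is-small} applies with the constant $\delta=1/4$. Since $\hat p\le 1$, we get $\Pr[\hat C=c,\hat W=w\mid\hat W\neq\bot,\widetilde h]\ge b_{c,w}/2$, where $b_{c,w}:=\Pr[C=c,W=w\mid r,\theta,\widetilde h]$. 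Hence any joint mass $b_{c,w}>2^{1-h}$ forces a conditional point mass above $2^{-h}$ for $\hat B_{r,\theta}$. This gives $\Pr_H[H\in\text{Good}\wedge\max_{c,w}b_{c,w}>2^{1-h}]\le\nu(\lambda^\star)^h$, uniformly in $(r,\theta)$. You then apply Lemma~\ref{lem:fresh-slice} to $H$ conditioned on $\text{Good}$, which that lemma allows since $H$ may have any distribution. Take $\beta=2^{1-h}$ and $\varepsilon=\nu(\lambda^\star)^h/\Pr[H\in\text{Good}]$, and multiply the result back by $\Pr[H\in\text{Good}]$. Your final computation bounding $\eta$ by $(2^n+1)2^{-\lambda^\star}+3q\,2^n\exp(-2^{n+1})$ is correct.
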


\begin{proof}
Fix any pair $(r,\theta)$.  Hardwire $R=r$ and $\Theta=\theta$ into the two-stage algorithm. The resulting procedure is a polynomial-query adversary, $B_{r, \theta}$ against $(\mathsf{MinEnt.Prove},\mathsf{MinEnt.Verify})
$: it outputs $(C,\Pi)$, after which the deterministic  verifier computes $W$. Construct another adversary $\hat{B}_{r, \theta}$ as follows:
\begin{enumerate}
    \item With probability $1/2$, run $B_{r, \theta}$ and return $(C,\Pi)$. 
    \item With probability $1/2$
, run $\pi_{\text{hon}}\leftarrow \mathsf{MinEnt.Prove}$, and output $(0^{\ell},\pi_{\text{hon}}).$
\end{enumerate}
Let $(\hat{C}, \hat{\Pi})$ be the output of  $\hat{B}_{r, \theta}$, let $\hat{W}:= \mathsf{MinEnt.Verify}^{H}(1^{\lambda^\star}, 1^h, \hat{\Pi})$, and let $\hat{p}(\widetilde{h})$ be the acceptance probability for a fixed $\widetilde{h}$. On $\widetilde{h}\in \text{Good}, \hat{p}(\widetilde{h})\geq \frac{a(\widetilde{h})}{2}\geq \frac{1}{4}$.  
Next, for every $c$ and accepting $w$, let $b_{c,w}(r, \theta, \widetilde{h}):= \Pr[C=c, W=w|r, \theta, \widetilde{h}]$. Then, 
\begin{align*}
    &\Pr[\hat{C}=c, \hat{W}=w|\hat{W}\neq \bot,H= \widetilde{h}]\\
    &= \frac{\Pr[\hat{C}=c, \hat{W}=w|H= \widetilde{h}]}{\hat{p}(\widetilde{h})}\\
    &\geq \frac{b_{c,w}(r, \theta, \widetilde{h})}{2\hat{p}(\widetilde{h})}\\
    &\geq \frac{b_{c,w}(r, \theta, \widetilde{h})}{2}
\end{align*}
where the last inequality follows because $\hat{p}(\widetilde{h})\leq 1$. Now, set $\beta= 2^{
1-h}$. If $H \in \text{Good}$, then $\hat{p}(\widetilde{h})\geq \frac{1}{4}$, so $\hat{B}_{r, \theta}$
 has acceptance probability at least $1/4$. If, additionally, $\max_{c,w}b_{c,w}>\beta$, then there exists a pair $(c,w)$ such that 
 $$\Pr[\hat{C}=c, \hat{W}=w|\hat{W}\neq \bot,H= \widetilde{h}]\geq 2^{-h}.$$ Applying \Cref{lem:Pr-X-and-C-is-small} with $\delta=\frac{1}{4}$ implies that 
 \[
 \Pr_{H}[H \in \text{Good } \wedge \max_{c,w} b_{c,w}(r, \theta, \widetilde{h})> \beta]\leq \nu(\lambda^\star)^h
 \]
 Averaging over $R, \theta$ preserves this inequality. Then
 \[
  \Pr_{(R, \Theta, H)}\left[  \max_{c,w} b_{c,w}(R, \Theta, H)> \beta\middle|H \in \text{Good }\right]\leq \frac{\nu(\lambda^\star)^h}{\Pr[H \in \text{Good}]}= \epsilon_{\text{good}}
 \]

Now, we can apply Lemma \ref{lem:fresh-slice}.  
\begin{align*}
&\Exp_{R,\Theta,H}\!\left[
\sum_c\max_x
\Prob[C=c,Y=x\mid R,\Theta,H]\middle|H \in \text{Good}
\right]   \\
& \quad \leq K_q2^{-n}
+2^{\ell+4n}\cdot \beta+2^n\left(3q\exp(-2^{n+1})+\epsilon_{\text{good}}\right)
\end{align*}
This implies that 
\begin{align*}
&\Exp_{R,\Theta,H}\!\left[\mathbf{1}_{H \in \text{Good}}
\sum_c\max_x
\Prob[C=c,Y=x\mid R,\Theta,H]
\right]\\
&= \Pr[H \in \text{Good}]\cdot \Exp_{R,\Theta,H}\!\left[
\sum_c\max_x
\Prob[C=c,Y=x\mid R,\Theta,H]\middle|H \in \text{Good}
\right]\\ 
&\leq  \Pr[H \in \text{Good}]\Big(K_q2^{-n}
+2^{\ell+4n+1-h}+2^n\left(3q\exp(-2^{n+1})+\epsilon_{\text{good}}\right)\Big)\\
&\leq K_q2^{-n}
+2^{\ell+4n+1-h}+2^n\left(3q\exp(-2^{n+1})+\nu(\lambda^{\star})^{h}\right)\\
&\leq  K_q2^{-n}+ \eta
\end{align*}
Finally, the parameter choice gives $2^{\ell+4n+1-h}= 2^{-\lambda^{\star}-1}$ and $\nu(\lambda^{\star})^{h}\leq 2^{-h}\leq 2^{-\lambda^{\star}}$.
This proves the corollary.
\end{proof}
\subsubsection{Block Measure-and-Reprogram}
So far, we have assumed that $C$ is fixed before the fresh function $\Theta$ is sampled.  In the real construction, the adversary has quantum access to the entire labelled oracle $G$ and may choose $C$ only after those queries.  We now prove a block measure-and-reprogram lemma that converts the real execution into the required two-stage form.

Let
$$
G:\cC\times\cW\longrightarrow\cY
$$
be an oracle.  For $c\in\cC$ and a function $\Theta:\cW\rightarrow\cY$, define 
\[
(G*\Theta_c)(c',w)
:=
\begin{cases}
\Theta(w),&c'=c,\\
G(c',w),&c'\neq c.
\end{cases}
\]
Thus $G*\Theta_c$ agrees with $G$ outside the entire block $\{c\}\times\cW$ and replaces that whole block by the fresh function $\Theta$.

\subsubsection{Oracle-query model and block projectors}

\begin{lemma}[Block measure-and-reprogram]\label{lem:block-mr}
Let $A^{G,H}$ be a $Q$-query quantum algorithm that outputs a classical label $\widehat C$ and a classical output $\widehat Z$. 
There exists a two-stage simulator $S^A$ with the following properties.
\begin{enumerate}
    \item Stage 1: $S^A$ has access to independent random oracles $(G_0,H)$ and either outputs a classical label $C$ or aborts.  If it does not abort, then only after $C$ has been fixed, it receives quantum oracle access to an independent uniform function $\Theta:\cW\longrightarrow\cY$.
    \item Stage 2: It continues the execution, answering every subsequent query by querying the oracle $G^{\star}:=G_0*\Theta_C$.
\end{enumerate}

Let $V(g,h,c,z)$ be any classical event. Then
$$
\begin{aligned}
&\Prob\!\left[
V(G^{\star},H,C,Z):
(C,Z) \leftarrow\langle S^{A,G_0,H},\Theta\rangle
\right]
\\
&\qquad\geq
\frac{1}{(2Q+1)^2}
\Prob\!\left[
V(G,H,C,Z):
(C,Z) \leftarrow A^{G,H}
\right].
\end{aligned}
$$
The probabilities average over independent uniform $G,H,\Theta$, the simulator's random choice, and all algorithmic randomness and measurements.
\end{lemma}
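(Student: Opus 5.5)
The plan is to reduce the statement to the measure-and-reprogram theorem of Don, Fehr and Majenz~\cite{DFM20}, treating each whole block $\{c\}\times\cW$ as a single ``point'' of a larger oracle. Concretely, view $G:\cC\times\cW\to\cY$ as a function $F:\cC\to\cY^{\cW}$ with $F(c):=G(c,\cdot)$; $F$ is uniform exactly when $G$ is.

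\textbf{Step 1 (oracle as controlled unitary).} A quantum query to $G$ on registers $\ket{c,w,y}\mapsto\ket{c,w,y\oplus G(c,w)}$ is a $c$-controlled unitary
\[\sum_{c}\ketbra{c}{c}\otimes O_{F(c)},\qquad O_{f}\ket{w,y}=\ket{w,y\oplus f(w)}.\]
Replacing $G$ by $G_0*\Theta_c$ changes only the controlled branch for the single control value $c$. This is precisely the general-oracle setting of the DFM analysis: the oracle is indexed by a classical control $x$ and applies an $F(x)$-dependent unitary on a target register. Queries to $H$ play no role; the simulator simply relays them to its own copy of $H$, and we absorb them into the oracle-independent unitaries. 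Since $H$ is independent of $G$, all statements can be made for each fixed $H$ and averaged afterwards.

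\textbf{Step 2 (the simulator).} Following DFM, $S^A$ samples $i\in\{0,\dots,Q\}$ and $b\in\{0,1\}$ uniformly, yielding $2Q+1$ effective choices since $(Q,1)$ is identified with ``measure the final output''. It runs $A$ with oracle $G_0$ up to query $i$. It then measures only the label part $c$ of the query register (for $i=Q$, it takes the label $\widehat C$ of the final output) and outputs $C:=c$. It then receives $\Theta$ and answers all later queries with $G^\star=G_0*\Theta_C$; the bit $b$ decides whether the $i$-th query itself is answered by $G_0$ or by $G^\star$. It outputs $Z$ from $A$'s final output, and aborts if $A$'s final label differs from $C$, so the event $V(G^\star,H,C,Z)$ is evaluated with the committed label.

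\textbf{Step 3 (the core inequality).} Fix $c$ and $\Theta$, and expand the real execution with oracle $G_0*\Theta_c$ followed by the projection onto ``final label $=c$ and $V$''. Insert $I=P_c+(I-P_c)$ before every query, where $P_c$ is the projector onto control value $c$. Telescoping over the first query at which the control register has weight on $c$ writes the good amplitude as a sum of $2Q+1$ vectors. Each vector is exactly the amplitude the simulator produces for the corresponding choice $(i,b)$, because before that query the oracles $G_0$ and $G_0*\Theta_c$ agree off the block $c$. Cauchy--Schwarz, $\|\sum_{k=1}^{2Q+1}v_k\|^2\le(2Q+1)\sum_k\|v_k\|^2$, together with the uniform choice of $(i,b)$, which contributes another factor $2Q+1$, gives the bound $(2Q+1)^{-2}$. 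Finally, averaging over $c$ and using that $G_0*\Theta_c$ is uniform when $G_0,\Theta$ are (block resampling) identifies the right-hand side with $\Pr[V(G,H,C,Z)]$ for $A^{G,H}$.

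\textbf{Main obstacle.} The hard part is Step 3: checking that the DFM telescoping argument uses only the controlled-unitary structure and no property specific to single-point outputs. In particular, the $(I-P_c)$ branches must be oracle-independent of $\Theta$ across whole blocks, and the measurement must be of the label register alone, not of $(c,w)$. I would first try citing DFM's general version (their ``extended'' theorem for arbitrary oracle unitaries) with $x\mapsto F(x)$. Only if that version is not applicable would I reprove the telescoping identity directly.
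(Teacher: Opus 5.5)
Your proposal is correct and follows the paper's proof. The paper likewise applies the Don--Fehr--Majenz argument verbatim with the rank-one query projector replaced by the block projector $P_c=\ketbra{c}{c}\otimes I$ on the label register, using only that $O_{G_0}$ and $O_{G_0*\Theta_c}$ agree on the range of $I-P_c$, and working pointwise in $(G_0,\Theta,H,c)$ before summing over $c$.

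Two small precision points. The $2Q+1$ vectors come from DFM's hybrid telescoping $A^{G_0*\Theta_c}=A^{G_0}+\sum_{i}(A_{i,1}-A_{i,0})$, obtained from $O_{G_0*\Theta_c}-O_{G_0}=(O_{G_0*\Theta_c}-O_{G_0})P_c$. Grouping by the first query with weight on $c$ would leave $(I-P_c)$ projections that the simulator never applies. Also, the simulator must choose uniformly among exactly these $2Q+1$ options: doubling the weight of the final-measurement option only gives $1/\bigl((2Q+1)(2Q+2)\bigr)$.
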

\begin{proof}
    The proof follows essentially verbatim from \cite{DFM20}. Let $q_g \leq Q$ be the number of queries $A$ makes to $G$. A standard query to $G$ is modelled by 
    \[
    O_{G}\ket{c}_{C_{\text{qry}}}\ket{w}_{W_{\text{qry}}}\ket{y}_{\text{out}}\ket{t}_{\text{work}}\rightarrow \ket{c,w,y\oplus g(c,w),t}
    \]
    On the $G$ query registers, define the block projector 
    \[
    P_c^{\text{qry}}= \ketbra{c}{c}_{C_{\text{qry}}}\otimes I_{W_{\text{qry}},\text{out},\text{work}}
    \]
    On the final output-label register $\widehat C$, define
    \[
    P_c^{\text{out}}= \ketbra{c}{c}_{\widehat C} \otimes I
    \]
  Now fix $g_0 , \theta, c$, and let $g^{\star}= g_0 * \theta_c$. 
  Crucially, because $g^{\star}$ and $g_0$ agree outside the block $c\times \cW$, their oracle unitaries satisfy
  \begin{equation}\label{eq:measure_reprog}
  O_{g_0}(I-P_c^{\text{qry}})= O_{g^{\star}}(I-P_c^{\text{qry}})
  \end{equation}
  Now we can apply the proof of Lemma 1 and Theorem 2 in \cite{DFM20} almost verbatim, replacing their rank-one query projector $\ketbra{x}{x}$ with $P_c^{\text{qry}}$, their final-output projector by $P_c^{\text{out}}$, and fresh oracle value by the fresh block function $\theta_c: \cW\longrightarrow\cY$. Their proof uses only that the relevant operator is an orthogonal projector and that the original and reprogrammed oracle unitaries agree on its orthogonal complement, which is exactly Equation \ref{eq:measure_reprog}. So we can conclude that 
  $$
\begin{aligned}
&\Prob\!\left[
V(G^{\star},H,C,Z):
(C,Z) \leftarrow\langle S^{A,G_0,H},\Theta\rangle
\right]
\\
&\qquad\geq
\frac{1}{(2q_g+1)^2}
\Prob\!\left[
V(G,H,C,Z):
(C,Z) \leftarrow A^{G,H}
\right].\\
&\qquad\geq
\frac{1}{(2Q+1)^2}
\Prob\!\left[
V(G,H,C,Z):
(C,Z) \leftarrow A^{G,H}
\right].
\end{aligned}
$$
\end{proof}
\subsubsection{The adaptive-selector labelled hashing lemma}
Let 
\[
( C,  \Pi)\leftarrow A^{G,H}(1^{\lambda}, 1^n)
\]
and define 
\[
W\leftarrow \mathsf{MinEnt}.\Verify^H(1^{\lambda^\star}, 1^h, \Pi)
\]
and 
\[
X:= \begin{cases}
    G( C, W) &\text{ if }  W \neq \bot\\
    \bot &\text{ if }  W = \bot
\end{cases}
\]
For a fixed oracle pair $(g,\widetilde h)$, define $
p(g,\widetilde h)
:=
\Pr[W\neq\bot\mid G=g,H=\widetilde h].$

\noindent If $p(g,\widetilde h)>0$, let

$$
\Gamma(g,\widetilde h)
:=
\sum_{c\in\{0,1\}^{\ell}}
\max_{x\in\{0,1\}^{n}}
\Pr[C=c,X=x\mid W\neq\bot,G=g,H=\widetilde h],
$$

\noindent and otherwise set $\Gamma(g,\widetilde h):=0$. Thus, by definition, whenever $p(g,\widetilde h)>0$,

$$
\Gamma(g,\widetilde h)
=
2^{-H^{\mathrm{avg}}_{\infty}(X\mid C,W\neq\bot,G=g,H=\widetilde h)}.
$$

\noindent For given values of $(g, \widetilde{h}, c)$ for which $p(g, \widetilde{h}) > 0$, let $\chi_{g,\widetilde h}(c)$ be the lexicographically first value of $x$ that maximizes $\Pr[C=c,X=x\mid W\neq\bot,G=g,H=\widetilde h]$.

\noindent Thus, 
for every fixed $(g,\widetilde h)$ such that $p(g, \widetilde{h}) > 0$,

\begin{align*}
&\Pr[W\neq\bot,\ X=\chi_{g,\widetilde h}(C)\mid G=g,H=\widetilde h]\\
&= \sum_{c}\Pr[C=c,W\neq\bot,\ X=\chi_{g,\widetilde h}(c)\mid G=g,H=\widetilde h]\\
&= p(g, \widetilde h)\sum_{c}\Pr[C=c,\ X=\chi_{g,\widetilde h}(c)\mid W\neq\bot,G=g,H=\widetilde h]\\
&= p(g, \widetilde h)\sum_{c}\max_x\Pr[C=c,\ X=x\mid W\neq\bot,G=g,H=\widetilde h]\\
&= p(g, \widetilde h)\cdot \Gamma(g, \widetilde h)
\end{align*}

\noindent Let
\[
q:=Q+1,
\quad
K_q:=2^6(q+1)^3=2^6(Q+2)^3,
\quad
\eta:=(2^n+1)2^{-\lambda^{\star}}+3q\,2^n\exp(-2^{n+1}).
\]

\begin{lemma}\label{lem:final}
For every $\lambda$, every  $u(\lambda)\in(0,1]$, and every inverse-polynomial $\delta= \delta(\lambda)$,
\begin{equation}
\label{eq:1}
\Pr_{G,H}[p(G,H)\geq\delta\ \wedge\ \Gamma(G,H)\geq u]
\leq 2\kappa(\lambda)+
\frac{(2Q+1)^2}{\delta u}\bigl(K_q2^{-n}+\eta\bigr).
\end{equation}
\end{lemma}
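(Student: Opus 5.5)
The plan is to define a single classical success event and push it through the block measure-and-reprogram lemma (Lemma~\ref{lem:block-mr}). Let $V(g,\widetilde h,c,\pi)$ be the event that $H\in\text{Good}$, $W:=\mathsf{MinEnt.Verify}^{\widetilde h}(1^{\lambda^\star},1^h,\pi)\neq\bot$, $p(g,\widetilde h)\geq\delta$, $\Gamma(g,\widetilde h)\geq u$, and $g(c,W)=\chi_{g,\widetilde h}(c)$. First split off $H\notin\text{Good}$, which costs at most $2\kappa(\lambda)$ by Lemma~\ref{lem:lem4_expsmallthreshold}. On the remaining event, the identity derived just before the lemma gives, for each fixed $(g,\widetilde h)$ with $p\geq\delta$ and $\Gamma\geq u$,
$$\Pr[V\mid G=g,H=\widetilde h]=p(g,\widetilde h)\Gamma(g,\widetilde h)\geq\delta u.$$
Averaging over the oracles yields
$$\Pr_{\mathrm{real}}[V]\geq\delta u\cdot\Pr_{G,H}[p\geq\delta\wedge\Gamma\geq u\wedge H\in\text{Good}].$$

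Next we would modify $A$ so that it ends by querying $G$ on $(C,W)$, computing $W$ itself from $\Pi$ via the deterministic verifier. This explains $q=Q+1$: the simulator's second stage makes at most $Q+1$ queries to $\Theta$, and the final query lets the simulator output $X$. Lemma~\ref{lem:block-mr} then gives
$$\Pr_{\mathrm{sim}}[V(G^\star,H,C,\Pi)]\geq\Pr_{\mathrm{real}}[V]/(2Q+1)^2,$$
where $G^\star=G_0*\Theta_C$.

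The next step is to upper bound $\Pr_{\mathrm{sim}}[V]$. On $V$ we have $G^\star(C,W)=\Theta(W)=Y$ and $Y=\chi_{G^\star,H}(C)$, so we drop the $p$ and $\Gamma$ constraints (this only enlarges the event) and bound
$$\Pr_{\mathrm{sim}}\bigl[H\in\text{Good},\,W\neq\bot,\,Y=\chi_{G^\star,H}(C)\bigr].$$
We cast the simulator as the two-stage experiment of Section~\ref{sec:lab}, taking $R$ to be $G_0$ together with the simulator's coins; $R$ is independent of $\Theta$ and $H$. For fixed $(r,\theta,\widetilde h)$ the label function $\chi_{G^\star,\widetilde h}$ is a fixed map $c\mapsto x$, because $G^\star$ is determined by $g_0,\theta$ and $C$ only through the slice index. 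Hence
$$\Pr[W\neq\bot,\,Y=\chi(C)\mid r,\theta,\widetilde h]\leq\sum_c\max_x\Pr[C=c,Y=x\mid r,\theta,\widetilde h].$$
Corollary~\ref{cor:fresh-construction} bounds the expectation of this quantity times $\mathbbm 1_{H\in\text{Good}}$ by $K_q2^{-n}+\eta$. Chaining the inequalities and dividing by $\delta u/(2Q+1)^2$ gives the claimed bound.

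The main obstacle is the step claiming $\chi_{G^\star,H}(C)$ is a fixed function of $c$ given $(r,\theta,\widetilde h)$. This fails literally, because $G^\star=G_0*\Theta_C$ depends on $C$, and with it the real-world quantities $p$, $\Gamma$ and $\chi$ evaluated at $G^\star$. The first fix to try is to redefine the event in the simulator using the target map $c\mapsto\chi_{g_0*\theta_c,\widetilde h}(c)$. For fixed $(g_0,\theta,\widetilde h)$ this is still a deterministic map from labels to outputs, so the $\sum_c\max_x$ bound continues to apply. The remaining care is to check that Lemma~\ref{lem:block-mr}'s classical event $V$ may depend on $G^\star$ in exactly this way, and that the simulator's abort outcome counts as failure.
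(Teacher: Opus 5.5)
Your argument is essentially the paper's proof: the same event $V$ (with $H\in\text{Good}$ built in and the complement charged $2\kappa(\lambda)$), the same lower bound $\Pr_{\mathrm{real}}[V]\geq\delta u\cdot\Pr[p\geq\delta\wedge\Gamma\geq u\wedge H\in\text{Good}]$, Lemma~\ref{lem:block-mr} with $R=G_0$, and exactly the paper's resolution of your ``obstacle''. That resolution is that for fixed $(g_0,\theta,\widetilde h)$ the targets $x_c=\chi_{g_0*\theta_c,\widetilde h}(c)$ form a deterministic map $c\mapsto x_c$, so $\Pr[\Succ\mid g_0,\theta,\widetilde h]\leq\mathbbm{1}_{\widetilde h\in\text{Good}}\sum_c\max_x\Pr[C=c,\Theta(W)=x\mid g_0,\theta,\widetilde h]$ and Corollary~\ref{cor:fresh-construction} finishes. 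One correction: drop the step that modifies $A$ to query $G$ at $(C,W)$. That step makes $A$ a $(Q+1)$-query algorithm, and Lemma~\ref{lem:block-mr} would then only give the factor $(2Q+3)^2$ instead of the stated $(2Q+1)^2$. It is also unnecessary, because $V$ may depend on $G^\star$ and compute $X=G^\star(C,\mathsf{MinEnt.Verify}^{\widetilde h}(1^{\lambda^\star},1^h,\Pi))$ classically; the paper's $q=Q+1$ is only an upper bound on the second-stage $\Theta$-queries fed into $K_q$, not an extra query counted inside Lemma~\ref{lem:block-mr}.
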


\begin{proof}
Let

$$
\mathsf{Bad}:=\{(g,\widetilde h):\widetilde{h}\in \text{Good}, p(g,\widetilde h)\geq\delta,
\ \Gamma(g,\widetilde h)\geq u\},
\qquad
\rho:=\Pr[(G,H)\in\mathsf{Bad}].
$$

Then
\begin{align*}
&\Pr[(G,H)\in\mathsf{Bad},\ W\neq\bot,\ X=\chi_{G,H}(C)]\\
&= \Pr[\ W\neq\bot,\ X=\chi_{G,H}(C)|(G,H)\in\mathsf{Bad}]\cdot \Pr[(G,H)\in\mathsf{Bad}]\\
&
\geq
\delta \cdot u\cdot \rho.
\end{align*}
Now we will apply 
Lemma \ref{lem:block-mr} to the oracle $G$, taking the auxiliary output $\widehat Z$ to be $\Pi$. The predicate $V(g,\widetilde h,c, \pi)$ is as follows:
\begin{enumerate}
    \item $(g,\widetilde h)\in \mathsf{Bad}$. 
    \item $W \neq \bot$.
    \item $X = \chi_{g,\widetilde{h}}(c)$.
\end{enumerate}
Let $\mathsf{Succ}$ denote the simulator's success event. Then

$$
\Pr[\mathsf{Succ}]\geq\frac{\rho\delta u}{(2Q+1)^2}.
$$
Note that if $\mathsf{Succ}$ occurs, then $(g, \widetilde{h}) \in \mathsf{Bad}$ and $\widetilde{h} \in \mathsf{Good}$.

We now upper-bound $\Pr[\mathsf{Succ}]$.
Let $G_0$ denote the complete base outer oracle sampled before the block measurement.  The simulator's first stage uses $(G_0,H)$ and outputs $C$ before receiving an independent fresh function
$$
\Theta:\bits^m\longrightarrow\bits^n.
$$
The oracle used in the second stage is 
$$
G^{\star}=G_0*\Theta_C,
$$
In particular, 
$$
G^{\star}(C,\cdot)=\Theta.
$$
 
\noindent Let $R:=G_0$ in the experiment described in section \ref{sec:lab}. Conditional on $(G_0,H)$, the label is selected before $\Theta$ is sampled, so the required independence holds. 
Corollary~\ref{cor:fresh-construction} therefore yields
\begin{equation}\label{eq:cor1_imp}
\Exp_{G_0,\Theta,H}\!\left[\mathbf{1}_{H \in \text{Good}}
\sum_c\max_x
\Prob[C=c,\Theta(W)=x\mid G_0,\Theta,H]
\right]
\leq
K_q2^{-n}+\eta.
\end{equation}
Fix a particular triple
$$
(G_0,\Theta,H)=(g_0,\theta,\widetilde h).
$$
Observe that
$x_c
:=
\chi_{g_0*\theta_c,\widetilde h}(c)$
is fixed once $(g_0,\theta,\widetilde h,c)$ is fixed.

  Thus
\begin{align*}
\Prob[\Succ\mid g_0,\theta,\widetilde h] &= \mathbbm{1}_{\widetilde{h}\in \text{Good}} \cdot \Prob[\Succ\mid g_0,\theta,\widetilde h]\\
&= \mathbbm{1}_{\widetilde{h}\in \text{Good}} \ifSubmission{\\&\quad}\fi\cdot \sum_c
\Prob[C=c,(g_0*\theta_c ,\widetilde h) \in \mathsf{Bad}, W\neq\bot,
\Theta(W)=x_c
\mid g_0,\theta,\widetilde h]\\
&\leq
\mathbbm{1}_{\widetilde{h}\in \text{Good}} \cdot \sum_c
\Prob[C=c,W\neq\bot,
\Theta(W)=x_c
\mid g_0,\theta,\widetilde h]
\\
&\leq
\mathbbm{1}_{\widetilde{h}\in \text{Good}} \cdot \sum_c
\max_{x\in\bits^n}
\Prob[C=c,\Theta(W)=x
\mid g_0,\theta,\widetilde h].
\end{align*}

 Averaging  over $(G_0,\Theta,H)$ and applying equation \ref{eq:cor1_imp} yields:
$$
\Prob[\Succ]
\leq
K_q2^{-n}+\eta.
$$

 Combining the two bounds and solving for $\rho$ shows that $\rho \leq \frac{(2Q+1)^2}{\delta u}\bigl(K_q2^{-n}+\eta\bigr)$. Finally, every violation in equation \ref{eq:1} either belongs to $\text{Bad}$ or has $H \notin \text{Good}$. By equation \ref{eq:kap}, this is at most $\rho+ 2 \kappa(\lambda)$, proving the lemma. 
\end{proof}

\subsubsection{Putting everything together}

\begin{theorem}[Maximal conditional min-entropy]\label[theorem]{thm:adaptive-maximal-min-entropy}
 Construction \ref{constr:maxent} satisfies the soundness property in Definition \ref{def:adaptive}.
\end{theorem}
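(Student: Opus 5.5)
The plan is to derive the theorem directly from Lemma~\ref{lem:final} by choosing $u := 2^{-(n-\gamma)}$. Fix polynomially bounded $n=\omega(\log\lambda)$, $\ell$, $Q$, together with $\gamma=\omega(\log\lambda)$ and an inverse-polynomial $\delta$, and let $A$ be any $Q$-query adversary against Construction~\ref{constr:maxent}.

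First I will translate the soundness event of Definition~\ref{def:adaptive} into the notation of Lemma~\ref{lem:final}. Since $\Acc=\{X\neq\bot\}=\{W\neq\bot\}$, we have $\Prob[\Acc\mid G,H]=p(G,H)$. Whenever $p(G,H)>0$, we also have $\Gamma(G,H)=2^{-\Havg(X\mid C,\Acc,G,H)}$. Hence the event
\[
\Prob[\Acc]\geq\delta \;\wedge\; \Havg(X\mid C,\Acc)\leq n-\gamma
\]
is exactly the event $p(G,H)\geq\delta\wedge\Gamma(G,H)\geq 2^{-n+\gamma}$. The outer probability in the definition is over $H$, which here means the joint global oracle $(G,H)$. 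Applying Lemma~\ref{lem:final} with $u=2^{-n+\gamma}\in(0,1]$ (we may assume $\gamma\le n$, since otherwise the event is empty) bounds this probability by
\[
2\kappa(\lambda)+\frac{(2Q+1)^2\,2^{n-\gamma}}{\delta}\bigl(K_q2^{-n}+\eta\bigr).
\]

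Next I will check that each term is negligible.
\begin{itemize}
\item $\kappa$ is negligible by Lemma~\ref{lem:lem4_expsmallthreshold}.
\item The main term is $(2Q+1)^2K_q\,2^{-\gamma}/\delta=\poly(\lambda)\cdot 2^{-\gamma}$. This is negligible precisely because $\gamma=\omega(\log\lambda)$. This is where the superlogarithmic entropy loss is consumed, and it is the only place the threshold really matters.
\item The $\eta$ contribution is $\poly(\lambda)\,2^{n-\gamma}\bigl[(2^n+1)2^{-\lambda^\star}+3q2^n e^{-2^{n+1}}\bigr]$. Using $\lambda^\star=\lambda+3n+\ell+10$, the first piece is at most $\poly(\lambda)\,2^{2n+1-\lambda^\star}\le\poly(\lambda)2^{-\lambda-n}$, which is negligible. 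The second piece is at most $\poly(\lambda)\,2^{2n}e^{-2^{n+1}}$. Since $n=\omega(\log\lambda)$, $2^{n+1}$ dominates $2n\ln 2+O(\log\lambda)$, so this is also negligible.
\end{itemize}

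The main subtlety is the bookkeeping, not any new idea. I need Lemma~\ref{lem:final} to apply with the ``sufficiently large $\lambda$'' form of $\eta$ from Corollary~\ref{cor:fresh-construction}; for small $\lambda$ the bound is trivially absorbed into the negligible function. I also need the query count $q=Q+1$ in Lemma~\ref{lem:final} to correctly account for the verifier's extra classical query to $G$ that produces $X=G(C,W)$. Both points are already built into that lemma's statement, so the theorem follows by summing the three negligible bounds.
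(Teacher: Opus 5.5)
Your proposal is correct and follows the paper's proof essentially step for step: you set $u=2^{-n+\gamma}$ (treating $\gamma>n$ as vacuous), identify $\Prob[\Acc\mid G,H]=p(G,H)$ and $\Gamma(G,H)=2^{-\Havg(X\mid C,\Acc,G,H)}$, and apply Lemma~\ref{lem:final}. Your term-by-term negligibility check matches the paper's. That check covers $\kappa$ via correctness, $(2Q+1)^2K_q2^{-\gamma}/\delta$ via $\gamma=\omega(\log\lambda)$, and the $2^{n-\gamma}\eta$ pieces via $\lambda^\star=\lambda+3n+\ell+10$ and $n=\omega(\log\lambda)$. Your remark that the simplified form of $\eta$ holds only for sufficiently large $\lambda$ is a fair bookkeeping point, which the paper's statement of Lemma~\ref{lem:final} (``for every $\lambda$'') glosses over.
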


\begin{proof}
 Fix polynomially bounded $n= n(\lambda)=\omega(\log\lambda),Q= Q(\lambda),\ell= \ell(\lambda)$, an inverse-polynomial $\delta$, and $\gamma=\omega(\log\lambda)$. If $\gamma>n$, the entropy event is empty. Otherwise set

$$
u:=2^{-n+\gamma}\in(0,1].
$$

By the definition of $\Gamma$ and Lemma \ref{lem:final},
\begin{align*}
&\Pr_{G,H}\bigl[p(G,H)\geq\delta\ \wedge\
H^{\mathrm{avg}}_{\infty}(X\mid C,W\neq\bot,G,H)\leq n-\gamma\bigr]
\\
&\quad\quad\leq 2\kappa(\lambda)+
(2Q+1)^2\delta^{-1}
\bigl(K_q2^{-\gamma}+2^{n-\gamma}\eta\bigr)\\
&\quad\quad= 2\kappa(\lambda)+(2Q+1)^2\delta^{-1}
K_q2^{-\gamma}+(2Q+1)^2\delta^{-1}2^{n-\gamma}\eta\\
&\quad\quad=  2\kappa(\lambda)+(2Q+1)^2\delta^{-1}
K_q2^{-\gamma}\\
&\quad\quad\quad\quad+(2Q+1)^2\delta^{-1}2^{n-\gamma}\cdot \Big((2^n+1)2^{-\lambda^{\star}}+3q\,2^n\exp(-2^{n+1})\Big)
\end{align*}

The term $2\kappa(\lambda)$ is negligible because of correctness of the proof of min-entropy from Theorem \ref{thm:YZ-proof-of-min-entropy}.  The term with $K_q$ is negligible because $(2Q+1)^2\delta^{-1}K_q$ is polynomially bounded and $\gamma=\omega(\log\lambda)$. Moreover, using $\lambda^{\star}=\lambda+3n+\ell+10$ and $2^n+1\leq2^{n+1}$,

$$
2^{n-\gamma}(2^n+1)2^{-\lambda^{\star}}
\leq
2^{-\lambda-n-\ell-\gamma-9}.
$$

For all sufficiently large $n$, $2^{2n}\exp(-2^{n+1})\leq\exp(-2^n)$, and therefore

$$
3q\,2^{2n-\gamma}\exp(-2^{n+1})
\leq
3q\exp(-2^n).
$$

Both bounds are negligible; the latter uses $n=\omega(\log\lambda)$. Multiplication by $(2Q+1)^2\delta^{-1}$ preserves negligibility. Finally,

$$
p(G,H)=\Pr[W\neq\bot\mid G,H]=\Pr[X\neq\bot\mid G,H],
$$

so this is exactly adaptive-label soundness.
\end{proof}

\begin{corollary}[Maximal entropy rate]
For every $\gamma(\lambda)=\omega(\log\lambda)$, the accepted output has average conditional min-entropy greater than $n-\gamma(\lambda)$ given the adaptive label, except with negligible probability over the random oracles whenever acceptance has inverse-polynomial probability.
\end{corollary}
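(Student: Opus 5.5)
This corollary is a restatement of \Cref{thm:adaptive-maximal-min-entropy} in quantitative form, so the plan is to read it off directly from that theorem's bound and unpack the definitions. Fix $\gamma(\lambda)=\omega(\log\lambda)$, an inverse-polynomial acceptance threshold $\delta$, and any $Q$-query adversary. The claim is that outside a negligible set of oracles $(G,H)$, either $\Pr[X\neq\bot\mid G,H]<\delta$ or $\Havg(X\mid C,\Acc,G,H)>n-\gamma$. Its negation is exactly the event bounded in the soundness property of Definition~\ref{def:adaptive}, which \Cref{thm:adaptive-maximal-min-entropy} establishes for Construction~\ref{constr:maxent}.

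Concretely, I will invoke the displayed bound in the proof of \Cref{thm:adaptive-maximal-min-entropy}. That bound comes from Lemma~\ref{lem:final} with $u=2^{-n+\gamma}$, which gives $\Gamma(G,H)\geq u$ if and only if the average conditional min-entropy is at most $n-\gamma$. The bound is
\[
2\kappa(\lambda)+(2Q+1)^2\delta^{-1}K_q2^{-\gamma}+(2Q+1)^2\delta^{-1}2^{n-\gamma}\eta .
\]
Each of these three terms was already shown to be negligible. The first follows from correctness of the inner proof of min-entropy (Lemma~\ref{lem:lem4_expsmallthreshold}). The second holds because $(2Q+1)^2\delta^{-1}K_q$ is polynomial while $2^{-\gamma}$ is superpolynomially small. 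The third follows from $\lambda^\star=\lambda+3n+\ell+10$ together with $n=\omega(\log\lambda)$. Taking the complement of this event gives that, with overwhelming probability over the oracles, every accepted distribution with acceptance probability at least $\delta$ satisfies $\Havg>n-\gamma$. That strict inequality is the statement of the corollary.

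Finally, I will justify the ``maximal rate'' reading. The guarantee $n-\gamma$ holds for every $\gamma=\omega(\log\lambda)$. The rejection-sampling argument in the introduction shows that $n-O(\log\lambda)$ cannot be certified, so the two bounds match and the rate $(n-\gamma)/n$ tends to $1$ whenever $n\gg\gamma$.

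There is no real obstacle here. The only care needed is the quantifier order: $\delta$ and $\gamma$ must be fixed before the negligible function is chosen, exactly as in Definition~\ref{def:adaptive}, and the corollary is stated with that same ordering.
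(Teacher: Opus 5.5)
Your proposal is correct and matches the paper: the corollary is the complement of the soundness event bounded in \Cref{thm:adaptive-maximal-min-entropy}, obtained from Lemma~\ref{lem:final} with $u=2^{-n+\gamma}$, and the three terms you list are negligible for the reasons you give. The one detail you skip is that Lemma~\ref{lem:final} needs $u\le 1$, i.e.\ $\gamma\le n$; when $\gamma>n$ the event $\Havg\le n-\gamma$ is empty because $\Havg\ge 0$, exactly as the paper notes.
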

\fi

\section{Acknowledgements}
We thank Dakshita Khurana and James Bartusek for valuable discussions.

\bibliographystyle{alpha}
\bibliography{references}

\ifSubmission
    \appendix
    
\fi

\end{document}